\documentclass[11pt]{article}
\pdfoutput=1

\usepackage[margin=1.1in]{geometry}
\usepackage{amsmath,amssymb,amsthm}
\usepackage{graphicx}
\usepackage{booktabs}
\usepackage{array}
\usepackage{flafter}
\usepackage[section]{placeins}
\newcolumntype{L}[1]{>{\raggedright\arraybackslash}p{#1}}

\usepackage{natbib}
\usepackage[colorlinks=true,linkcolor=blue!60!black,citecolor=blue!60!black,urlcolor=blue!60!black]{hyperref}
\hypersetup{
  pdftitle={Design-Based Prediction-Powered Inference for Spatial Data},
  pdfauthor={Shinichiro Shirota},
  pdfsubject={Survey sampling; prediction-powered inference; spatial statistics},
  pdfkeywords={prediction-powered inference, design-based inference, finite
population, spatial autocorrelation, double robustness, survey sampling,
spatially balanced sampling, small area estimation, LUCAS, land cover}}
\usepackage{enumitem}
\usepackage{xcolor}

\newtheorem{theorem}{Theorem}
\newtheorem{proposition}{Proposition}
\newtheorem{corollary}{Corollary}

\newtheorem{remark}{Remark}

\newcommand{\E}{\mathbb{E}}
\newcommand{\Prob}{\mathbb{P}}
\newcommand{\Var}{\mathrm{Var}}
\newcommand{\Cov}{\mathrm{Cov}}
\newcommand{\one}{\mathbb{1}}
\newcommand{\HT}{\mathrm{HT}}

\title{Design-Based Prediction-Powered Inference for Spatial Data}
\author{Shinichiro Shirota\thanks{Hitotsubashi University, Tokyo, Japan.
Email: \texttt{shinichiro.shirota@r.hit-u.ac.jp}.}}
\date{\today}

\begin{document}
\maketitle

\begin{abstract}
Prediction-powered inference (PPI) combines a wall-to-wall prediction map with
a small gold-standard sample to deliver valid confidence intervals regardless
of map quality. Canonical PPI theory starts from i.i.d.\ labelling, whereas
spatial applications often involve survey designs or covariate-driven
labelling, and map errors may be spatially correlated. We recast PPI in a
design-based framework: the estimand is a census parameter of a fixed spatial
population, with randomness arising from the labelling mechanism. We derive
exact design variances under simple and stratified sampling, a threshold for
when blocked spatial balance improves precision, and sandwich inference when
propensities are estimated and selection depends on the map.
Our main result concerns double robustness. With a misspecified propensity, a
correct outcome model secures superpopulation identification but, conditional
on the realised population, leaves a remainder of order
$\sigma_u/\sqrt{N_{\mathrm{eff},v}}$, where $N_{\mathrm{eff},v}$ measures the
effective number of residual patches seen through the weight-ratio field. Under
ratio-stable labelling this remainder is independent of the label count, so
coverage can deteriorate as labels accumulate. For i.i.d.\ or exchangeable
residual fields $N_{\mathrm{eff},v}$ is of order $N$, making the remainder
negligible relative to sampling error when $n/N \to 0$; spatially coherent
dependence can instead make it consequential. We reproduce this mechanism on a
fully enumerated population of $48{,}175$ cells.
Estonian LUCAS applications further show that power tuning and dependence
diagnostics must respect the sampling design: i.i.d.\ PPI++ tuning worsens
precision for the best map, whereas design-matched tuning improves standard
errors by about $10\%$ on average and outperforms or ties PPI++ across seven
land-cover estimands. Pooled residual diagnostics can likewise mistake
spatially structured between-stratum variation for residual dependence.
\end{abstract}

\noindent\textbf{Keywords:} prediction-powered inference; design-based inference;
model-assisted estimation; spatially balanced sampling; inverse probability
weighting; double robustness.

\section{Introduction}\label{sec:intro}

Machine-learning products now supply wall-to-wall ``maps'' of quantities that
are expensive to measure on the ground: forest biomass, land cover, poverty,
housing prices. Treating such maps as ground truth biases downstream inference
\citep{lu2025regression}; using only the sparse labels wastes the map.
Prediction-powered inference \citep[PPI;][]{angelopoulos2023ppi} resolves the
tension: with $f_i$ the map value and $Y_i$ the gold-standard label, it
combines the population mean of $f$ with a label-based estimate of the mean
\emph{rectifier} $\Delta_i = Y_i - f_i$, giving valid intervals with no
assumptions on the map. A growing literature refines the idea
\citep{angelopoulos2023ppipp, zrnic2024crossppi,
fisch2024stratified, cortinovis2025fabppi, zrnic2024active}.

Two facts motivate this paper. First, canonical PPI theory starts from i.i.d.\
labelling, while its most natural spatial applications involve
unequal-probability, stratified, clustered or covariate-dependent labelling.
Extensions to nonuniform sampling, estimated propensities and
missing-at-random (MAR) spatial labels do exist, and we discuss them below; what
remains largely unexplored is the pair of questions this paper takes up.
Working in the remote-sensing setting --- regression on maps over a fixed pixel
population --- \citet{lu2025regression}, who do treat stratified and weighted
labelling, nonetheless leave the wider class of complex survey designs as an
open problem, and the spatial dependence of the map-error field has not been studied
from a design-based finite-population perspective, as the interaction between a
fixed population's spatial arrangement and a probability design. Second, as \citet{mozer2026difference} observes, the PPI mean
estimator is algebraically the difference estimator of classical survey
sampling \citep{cassel1976}, and PPI++ the generalised regression (GREG)
estimator
\citep{sarndal1992}. The natural home for spatial PPI is therefore
\emph{design-based} inference for a finite population of pixels, where validity
flows from the labelling design, not from a model for the map error ---
extending, rather than reinventing, two-phase and model-assisted survey
estimation \citep{sarndal1992, breidt2017}.

\paragraph{Contributions.}
\begin{enumerate}[leftmargin=1.6em, itemsep=0.15em]
\item \textbf{A two-phase, design-based formulation of spatial PPI}
  (Section~\ref{sec:setup}): the map is a phase-one census, labels a
  phase-two probability sample, and the estimand a census parameter defined
  by the gold-standard measurement protocol.
\item \textbf{Exact design variances under simple random sampling (SRS),
  stratified and spatially
  balanced designs, with variance estimators and consistency conditions for
  the block-balanced case}
  (Section~\ref{sec:design}, Propositions~\ref{thm:ht}--\ref{thm:balance}); for
  generalised random tessellation stratified (GRTS) sampling and the local
  pivotal method we verify no design-specific conditions
  and claim no consistency.
  Spatial correlation of the rectifier enters only through the interplay of
  the fixed population's spatial arrangement with the design's joint inclusion
  probabilities; no covariance model is estimated. (The inverse-probability-weighted
  (IPW) design-based
  reading of PPI is shared with the independent \citealp{datta2025ipw}; the
  spatially balanced designs, optimal allocation, estimated-propensity
  asymptotics and double-robustness asymmetry below are not.)
\item \textbf{Optimal spatial allocation of the label budget}
  (Proposition~\ref{thm:design-opt}) via anticipated variance under a spatial
  working model, whose misspecification costs efficiency but never validity.
\item \textbf{Design-matched power tuning}
  (Section~\ref{sec:lamdesign}, Proposition~\ref{thm:lamdesign}): the constant
  minimising the \emph{stratified design} variance. It differs from the
  i.i.d.\ PPI++ constant by a between-stratum term the design has already
  removed, and by construction cannot lose to the classical label-only
  estimator at its population optimum. In the Estonian panel the i.i.d.\
  constant loses on the best map available, while the design-matched one does
  not lose on any of the seven estimands.
\item \textbf{Covariate-dependent labelling with estimated propensities}
  (Section~\ref{sec:ipw}, Proposition~\ref{thm:ipw}): validity when selection
  depends on any covariates observed everywhere --- including the map itself,
  so that a map-adaptive mechanism built from \emph{observable} map-derived
  quantities (the map value, a map-derived uncertainty or confidence score,
  terrain, accessibility) remains ignorable, and the propensity is estimable on
  the whole population rather than the labelled part alone. Selection on the
  unobserved error $\Delta_i = Y_i - f_i$ itself is a different matter and is
  outside this scope (Section~\ref{sec:sims-scope}). A doubly robust extension is given (Proposition~\ref{thm:dr}).
\item \textbf{Spatial amplification of the conditional finite-population
  double-robustness remainder}
  (Theorem~\ref{prop:asym}): with a misspecified propensity, the
  outcome-model arm guarantees superpopulation identification, and consistency
  only once an increasing-domain or ergodic law of large numbers (LLN) is
  added;
  conditional on the realised population a gap of order
  $O_p(\sigma_u/\sqrt{N_{\mathrm{eff},v}})$ remains, $N_{\mathrm{eff},v}$
  being the effective number of independent patches of the spatially correlated
  residual field \emph{as the weight-ratio field sees them}, and of the same
  order as the ordinary effective size $N_{\mathrm{eff}}$ under the alignment
  condition \ref{a:hratio}. It does not shrink with the number of labels, so
  coverage can \emph{deteriorate} as $n$ grows. Such a remainder is not itself new --- it
  is implicit in the design-model doubly robust literature
  \citep{yang2020dr, kimhaziza2014dr} --- but under an i.i.d.\ \emph{residual
  field} the correlation matrix in \eqref{eq:neffv} is the identity, and under
  a merely exchangeable one it is
  $(1-\rho)I + \rho\mathbf{1}\mathbf{1}^{\top}$, whose common component is
  annihilated by $v^{\top}\mathbf{1} = 0$; either way
  $N_{\mathrm{eff},v} \asymp N$, the remainder is $O_p(N^{-1/2})$ and
  invisible whenever $n/N \to 0$. The \emph{spatial} term in
  \eqref{eq:neffv} concerns the residual field rather than dependence among the
  labelling indicators; the selection mechanism itself enters, through the
  weight-ratio field $v$.
  Spatial dependence makes it bind: in the enumerated population of
  Section~\ref{sec:sims-real}, where the spatial inflation is about $6.3$, the
  label count at which the remainder overtakes the sampling error is observed
  near $n \approx 4{,}800$ rather than the $n \approx N$ an unstructured
  residual field would give. Being a claim about a particular
  finite population, it is stress-tested on one we did not construct
  (Section~\ref{sec:sims-real}): a fully enumerated $48{,}175$-cell land-cover
  population with only the labelling simulated. There the gap is $n$-invariant
  to six decimals, sits inside the predicted
  $\sigma_u/\sqrt{N_{\mathrm{eff},v}}$ band, is switched off exactly by
  correcting the propensity model, and loses its \emph{spatial inflation} ---
  and the systematic erosion of mean coverage in $n$ with it --- once the map
  error's spatial arrangement is destroyed; scrambling still leaves each
  permutation a smaller exchangeable finite-population gap, and the spread of
  coverage across $32$ independent scramblings is what the theorem predicts. A
  corollary extends this to census $M$-estimands: the rate is common but the
  constant is not, so on the same labels a level parameter can be badly
  compromised while a regression slope from the same fit is far less affected
  in the population at hand ---
  which component is exposed is a property of the population, not of the class
  of functional, got wrong from the correlation matrix's spectrum and right
  from a permutation diagnostic needing no model for it
  (Section~\ref{sec:sims-real-beta}).
\item \textbf{Two empirical lessons, one of which corrects the other}
  (Section~\ref{sec:empirical}). A seven-estimand land-cover panel suggests
  that prediction maps leave spatially white error; a soil-carbon study with a
  weak, externally fitted map refutes that reading and replaces it with a
  statement about \emph{scale} --- what decides the payoff of a design-matched
  variance is whether residual dependence reaches the spacing between labelled
  points --- with a diagnostic in the units of that decision (the
  successive-difference variance ratio \eqref{eq:vsucc} against a permutation
  null). The study also isolates a use of the map that the variance-centred
  framing of the PPI literature hides: correcting the \emph{compositional
  imbalance} of an unrepresentative label set rather than merely narrowing its
  interval.
\end{enumerate}

\paragraph{Positioning.}
Most of the asymptotic machinery in
Propositions~\ref{thm:ht}--\ref{thm:dr} is classical, and we have not tried to
reinvent it. Their role is to translate established survey-sampling results ---
two-phase difference estimation, stratification, Neyman allocation,
M-estimation with estimated weights --- into the PPI setting, fixing notation
and identifying where i.i.d.\ practice demonstrably fails
(Section~\ref{sec:sims}); each proof in Appendix~\ref{app:proofs} is annotated
with its source (Table~\ref{tab:provenance}). The new theoretical contribution
is Theorem~\ref{prop:asym} and its two corollaries.

\paragraph{Related work.}
PPI and its variants \citep{angelopoulos2023ppi, angelopoulos2023ppipp,
zrnic2024crossppi, fisch2024stratified, cortinovis2025fabppi, kilian2025anytime,
emmenegger2026multitask}; the survey-sampling reading of PPI
\citep{mozer2026difference, song2026demystifying, gronsbell2024another};
efficiency bounds for the semi-supervised problem PPI is a special case of
\citep{xu2025semisup}; non-asymptotic analysis of when power tuning actually
pays \citep{mani2025nofreelunch}; general missingness with machine-learning imputation
\citep{chen2025missingness}; remote-sensing regression with maps, including a
stratified, weighted PPI estimator over a fixed pixel population
\citep{lu2025regression}; nonuniform labelled samples via the
predict-then-debias bootstrap \citep{kluger2025bootstrap}, and resampling
formulations of PPI generally \citep{zrnic2024bootstrap}; design-based
semi-supervised inference \citep{egami2023dsl}; spatially balanced sampling
\citep{stevens2004grts, grafstrom2012lpm, tille2017spread}; model-assisted survey estimation
\citep{sarndal1992, breidt2017}; spatial association under nonrandom sampling
via smoothness assumptions \citep{burt2025smooth}; and, closest of all,
cross-fitted doubly robust PPI for spatially dependent, missing-at-random
labels \citep{salerno2026spatial}.

\paragraph{What the remote-sensing literature already knew.}
In the design-based reading of Section~\ref{sec:design} the estimator at the
centre of PPI is the difference estimator, and forest inventory and land-cover
accuracy assessment have used it with map-valued auxiliaries for decades.
\citet{olofsson2014good} codify the stratified design-based protocol for
estimating areas from a classified map, the map used to stratify and the
reference sample to correct; \citet{stahl2016modelassisted} set model-assisted,
model-based and hybrid estimation side by side for exactly the case of a
wall-to-wall model prediction without a probability sample of ground
plots; and the model-assisted branch of that literature applies
\citet{sarndal1992} directly. From that tradition
Propositions~\ref{thm:ht}--\ref{thm:design-opt} read as translations rather
than novelties, and Table~\ref{tab:provenance} labels them as such. Nor is the
statistical machinery unfamiliar on the PPI side: nonuniform, stratified and
clustered labelling is treated by \citet{kluger2025bootstrap} and
\citet{lu2025regression}, and spatial autocorrelation has long been part of
design-based forest and land-cover inventory. What no
existing treatment does, and this paper does, is bring these into one frame:
the spatial arrangement of a \emph{fixed} rectifier interacting with the joint
inclusion probabilities of a probability design, the design-matched power
tuning that follows from it, selection driven by the map itself with an
estimated propensity, and the conditional finite-population remainder that
their combination leaves behind. Theorem~\ref{prop:asym} and its corollaries
have no counterpart there.

\paragraph{Closest prior work and what is new here.}
Three contemporaneous lines are closest. \citet{datta2025ipw} independently give
a design-based, inverse-probability-weighted reading of PPI, with
Horvitz--Thompson and H\'ajek forms and a link to the GREG estimator; they
establish design-unbiasedness under known or correctly-modelled inclusion
probabilities, and report simulations in which estimated propensities have
negligible effect on bias, coverage or efficiency when the propensity model is
correct. Section~\ref{sec:ipw} adds four things: (i) the standard stacked
estimating-equation and sandwich analysis for estimated propensities, placing
that numerical finding on an asymptotic-normality footing
(Proposition~\ref{thm:ipw}, stated under correct specification; textbook
M-estimation, as Table~\ref{tab:provenance} records), together with the
pseudo-true expansion under misspecification that
Theorem~\ref{prop:asym} then uses; (ii) a doubly robust
estimator and, more importantly, the observation (Theorem~\ref{prop:asym}) that
in the finite-population/spatial regime its two arms are \emph{not} symmetric;
(iii) systematic use of the map $f$ as a selection covariate observed
everywhere; and (iv) spatially balanced designs and
anticipated-variance-optimal allocation
(Propositions~\ref{thm:balance}--\ref{thm:design-opt}), absent from
\citet{datta2025ipw}. \citet{burt2025smooth} target spatial associations under
nonrandom sampling by a complementary route: they too avoid assuming a correct
outcome model or covariate overlap, but buy identification from a smoothness
(Lipschitz) restriction on the response surface rather than from a probability
design with a map-valued covariate, as we do; the two are companions, not
competitors. \citet{waldetoft2025design} likewise take the finite
population, not a superpopulation, as the target, studying PPI estimators for
population totals when the classifier is applied to a highly imbalanced corpus;
their concern is the interaction of class imbalance with the labelling
budget, with no spatial index at all. None of the three treats the
rectifier's spatial correlation under a probability design, which is our
Section~\ref{sec:design}.

Closest in \emph{substance} is \citet{salerno2026spatial}, who develop a
cross-fitted doubly robust PPI estimator for spatially dependent labels missing
at random, with a jackknife-corrected Conley-type heteroskedasticity- and
autocorrelation-consistent (HAC) variance
\citep{conley1999, chernozhukov2018dml}. Two of their choices corroborate
premises made here: the prediction map itself sits inside the propensity's
conditioning set, $W_i = (X_i, \hat{Y}_i)$ --- independent support for the
map-as-selection-covariate reading of Section~\ref{sec:ipw} --- and their
fold-noise correction is orthogonal to, and combinable with, everything below.
The essential difference is the frame: they work in a superpopulation with
$\theta_0 = \mathbb{E}[Y_i]$, we condition on the realised finite population.
That is what Theorem~\ref{prop:asym} is about, and
Remark~\ref{rem:salerno} sets the two side by side; see also
\citet{jin2024tailored} on the general gap between finite-population and
superpopulation targets.

\section{Setup}\label{sec:setup}

\subsection{Three layers: latent truth, gold standard, map}

For each spatial unit $i$ --- a pixel or grid cell; we write ``pixel''
throughout --- of a fixed finite population $U = \{1, \dots, N\}$ over a
domain $D \subset \mathbb{R}^2$, with $s_i \in D$ its location, distinguish
three layers: the latent truth $Y_i^{*}$; the \emph{gold-standard measurement}
$Y_i = \mathcal{M}(Y_i^{*})$ given by a measurement protocol $\mathcal{M}$ (a
potential measurement, attached to $i$ before any sampling); and the prediction
map $f_i$, available for \emph{every} $i \in U$. The rectifier is
$\Delta_i = Y_i - f_i$, and the estimand is the \emph{census parameter}
\begin{equation}\label{eq:estimand}
\theta \;=\; \bar Y_U \;=\; \frac{1}{N}\sum_{i\in U} Y_i
\;=\; \bar f_U + \bar\Delta_U ,
\end{equation}
the value a complete gold-standard census would record. Validity is relative to
the protocol $\mathcal{M}$: independent, mean-zero measurement error is
absorbed into $\Delta$, while a systematic protocol bias shifts the estimand
itself and is undetectable by design (Remark~\ref{rem:protocol}).

\subsection{Two-phase, design-based formulation}

Phase one is a census of $f$; phase two draws a label sample $S \subset U$,
$|S| = n$, by a known or estimable random mechanism, and the analyst observes
$\{Y_i\}_{i \in S}$. Unless the working model $\xi$ of
Section~\ref{sec:notation} is explicitly invoked, all probability statements
are with respect to the labelling mechanism; $\{(Y_i, f_i)\}_{i\in U}$ is a fixed list. The PPI
estimator is the two-phase difference estimator
\begin{equation}\label{eq:ppi}
\hat\theta \;=\; \bar f_U + \hat{\bar\Delta},
\qquad
\hat{\bar\Delta}_{\HT} = \frac{1}{N}\sum_{i\in S}\frac{\Delta_i}{\pi_i},
\end{equation}
with $\bar\Delta_U = N^{-1}\sum_{i\in U}\Delta_i$ the population mean of the
rectifier and $\hat{\bar\Delta}$ whichever estimator of it is in use. The split
in \eqref{eq:estimand} is the point: $f$ is censused, so $\bar f_U$ is known
exactly and $\bar\Delta_U$ is the \emph{only} unknown in $\theta$; the whole
problem is estimating a finite-population mean from a sample, and
$\hat{\bar\Delta}$ is where every design consideration enters. We
require of it only design unbiasedness or design consistency for
$\bar\Delta_U$. The specific choices used below are the Horvitz--Thompson
estimator $\hat{\bar\Delta}_{\HT}$ displayed above; its H\'ajek ratio form
$\hat{\bar\Delta}_{\mathrm{H\acute aj}} = (\sum_{i\in S}\pi_i^{-1})^{-1}
\sum_{i\in S}\pi_i^{-1}\Delta_i$, which replaces the known $N$ by the estimated
$\hat N = \sum_{i\in S}\pi_i^{-1}$ and is used in the empirical study of
Section~\ref{sec:empirical}; and, when the $\pi_i$ must themselves be
estimated, the IPW and augmented IPW (AIPW) forms of
Section~\ref{sec:ipw}. Throughout,
$\pi_i = \Prob(i \in S)$ is the first-order inclusion probability of unit $i$
and $\pi_{ij} = \Prob(i \in S \text{ and } j \in S)$ the second-order, or
joint, inclusion probability of the pair $(i,j)$, with the convention
$\pi_{ii} = \pi_i$. Both are properties of the design $p(\cdot)$ --- a
probability distribution over subsets of $U$ --- and not of any realised
sample: $\pi_i$ is defined at all $N$ units of the \emph{population} and
$\pi_{ij}$ at all $N^2$ pairs of $U$, labelled or not, and the sample $S$
decides only which of those values an estimator calls on, as the sum over
$i \in S$ in \eqref{eq:ppi} does. Since
$\Cov_p(I_i, I_j) = \pi_{ij} - \pi_i\pi_j$, with
$I_i = \mathbf{1}\{i \in S\}$, that difference measures the dependence the
design induces between inclusions. Spatial correlation of
$\{\Delta_i\}$ is a property of the fixed list; it affects the design variance
only through the $\pi_{ij}$ (Section~\ref{sec:design}).

Two questions must be settled first: what exactly is random, and what $\theta$
means when $Y$ is itself a measurement. Remark~\ref{rem:asymp} fixes the
probabilistic frame in which every ``as $n \to \infty$'' is to be read and in
which Appendix~\ref{app:framework} states its assumptions;
Remark~\ref{rem:protocol} fixes the estimand, on which
Section~\ref{sec:empirical} draws when the labels are laboratory values or
photo-interpreted classes.

\begin{remark}[Asymptotic framework]\label{rem:asymp}
Asymptotic statements are to be read in the nested-finite-population framework
of \citet{isaki1982}: a sequence, indexed by $\nu = 1, 2, \dots$, of finite
populations $U_\nu = \{1, \dots, N_\nu\}$ with $N_\nu \to \infty$, each
carrying its own fixed list $\{(Y_{\nu i}, f_{\nu i})\}_{i \in U_\nu}$ and
rectifiers $\Delta_{\nu i} = Y_{\nu i} - f_{\nu i}$, label sample
$S_\nu \subset U_\nu$ of size $n_\nu \to \infty$ drawn by a design $p_\nu$,
estimand $\theta_\nu = \bar Y_{U_\nu}$ and estimator $\hat\theta_\nu$. Limits
require not literal nesting but that the lists be \emph{moment-stable} along
the sequence, in the sense made precise by \ref{a:moments}. We suppress $\nu$
in the main text, writing $U$, $N$, $n$, $\Delta_i$, $\theta$, $\hat\theta$,
and restore it only where the sequence is itself the object of discussion
(Section~\ref{sec:ipw} and Appendix~\ref{app:proofs}). No superpopulation model
is assumed for any design-based validity result; the sole exception is
Theorem~\ref{prop:asym}, where a model $\xi$ is introduced deliberately, to
study what a misspecified design costs.
\end{remark}

\begin{remark}[What the gold standard is allowed to be wrong
about]\label{rem:protocol}
``Gold standard'' is a name, not a guarantee: in the applications that
motivate spatial PPI --- soil laboratories, field crews, photo-interpretation
of a land-cover class --- the labels are measurements with their own error
structure. Writing $Y_i = Y_i^{*} + e_i$, the framework treats three kinds of
$e$ differently. If $e_i$ is attached to the pixel before any sampling, it is
part of the fixed list and is absorbed into $\Delta_i$; the estimand is then
$\bar Y_U$ rather than $\bar Y^{*}_U$, and conditionally on the realised
errors the two differ by $\bar e_U = N^{-1}\sum_{i\in U}e_i$ \emph{exactly},
with no rate attached --- $\bar e_U$ is simply a number belonging to this
population. A rate requires a separate measurement-error law: if under it the
$e_i$ are independent with mean zero and uniformly bounded variance, then
$\bar e_U = O_p(N^{-1/2})$, invisible beside the $O_p(n^{-1/2})$ sampling error
whenever $n/N \to 0$. Under spatially correlated measurement error that
$N^{-1/2}$ rate need not hold, the relevant denominator being an effective
number of independent error patches rather than $N$. If $\E[e_i] = b \neq 0$
--- a laboratory method that under-reports organic carbon, say --- no labelling
detects it, since every label comes from the same protocol; the estimand is
what a census \emph{under $\mathcal{M}$} would return, so the protocol belongs
in the statement of the estimand and not in a footnote. Spatially correlated
error --- one crew, one calibration, one laboratory batch per region --- is
again part of the fixed list, and the design-matched variance estimators of
Section~\ref{sec:design} price it without modification, precisely because they
never assumed $\{\Delta_i\}$ independent. Two caveats. If crew or batch assignment is generated
\emph{after} sampling and is itself random, it is an extra source of
randomness outside the fixed-potential-measurement formulation used here, and
inference must account for it separately --- at the crew or batch level, for
instance; and if the batch structure
aligns with \emph{selection} --- the accessible region measured by one
laboratory, the remote region by another --- reweighting does not repair it,
because the weights correct which pixels were seen and not what was recorded
when they were.
\end{remark}

\subsection{Notation and standing conventions}\label{sec:notation}

Design-based arguments are easy to misread because four kinds of object are in
play at once and the symbols do not announce which. We fix the four here, with
the conventions that follow;
Tables~\ref{tab:notation1} and~\ref{tab:notation2} then list every symbol used
in the main text.

\begin{enumerate}[leftmargin=1.6em, itemsep=0.2em]
\item \textbf{Fixed attributes of the population.} $Y_i$, $f_i$, $\Delta_i$,
  $x_i$, $s_i$ and the estimand $\theta$ are a fixed list of numbers, indexed
  by $i \in U$ and settled before any labelling; nothing about them is random.
  The spatial correlation of $\{\Delta_i\}$ is therefore a property of that
  list, not a distributional assumption, and enters only through how the design
  pairs those numbers up.
\item \textbf{The labelling mechanism.} The design $p(\cdot)$ is a probability
  distribution over subsets of $U$ and the \emph{only} source of randomness in
  every result outside Theorem~\ref{prop:asym}. Its derived quantities are
  again indexed by the population: $\pi_i$ at every one of the $N$ units,
  $\pi_{ij}$ at every one of the $N^2$ pairs, whether or not the units involved
  are ever labelled --- properties of $p$, not of a realised sample.
\item \textbf{Statistics.} Anything carrying a hat is computable from what the
  analyst holds --- the map everywhere, the labels on $S$. This is the only
  place $S$ appears, and always as the range of a summation, never as the index
  set on which something is \emph{defined}: \eqref{eq:syg} is a double sum over
  $U \times U$ and \eqref{eq:vhat} its sample analogue over $S \times S$, and
  the gap between those two index sets is what variance estimation has to
  bridge --- largely the subject of Section~\ref{sec:design}.
\item \textbf{Working models.} $\xi$ is a superpopulation model for the
  rectifier or for its residual field, used to choose a design
  (Proposition~\ref{thm:design-opt}) and to price the cost of getting a
  propensity wrong (Theorem~\ref{prop:asym}). No design-based validity claim in
  this paper rests on it.
\end{enumerate}

\noindent
\textbf{Conventions.} A bar denotes an average over the index set named in the
subscript, $\bar Z_U = N^{-1}\sum_{i \in U} Z_i$, $\bar Z_{U_h}$,
$\bar Z_{S_h}$; where no subscript appears the average is over $U$. A hat
denotes a quantity computed from the labels. The subscript $h$ always labels a
stratum. $\Var_p$, $\E_p$ and $\Prob_p$ are taken over the labelling with the
population list held fixed; $\Var_\xi$ and $\E_\xi$ over the working model.
Order symbols $O_p$, $o_p$, $\asymp$ are read in the nested-population
framework of Remark~\ref{rem:asymp}. Two divisor conventions coexist
deliberately, each standard in its own context: $\sigma^2_h$ in
Proposition~\ref{thm:balance} is a within-block variance with divisor $B$, the
block size, while $S^2_{Z,h}$ in Proposition~\ref{thm:lamdesign} is the
finite-population variance with divisor $N_h - 1$, the convention under which
the stratified variance carries no further correction factor. The one symbol
appearing with and without an argument, $\sigma^2_h$ against
$\sigma^2_h(\xi)$, denotes two objects and not two conventions: the first a
property of the fixed population, the second the model variance
$\E_\xi[S^2_h]$ that Proposition~\ref{thm:design-opt} allocates against, with
the argument $\xi$ marking the difference.

\begin{table}[htbp]
\centering\small
\begin{tabular}{L{0.20\textwidth} L{0.44\textwidth} L{0.28\textwidth}}
\toprule
Symbol & Meaning & Index set; status \\
\midrule
\multicolumn{3}{l}{\emph{Attributes of the population --- fixed, not random}}\\
\midrule
$U = \{1,\dots,N\}$, $N$ & pixel population and its size & fixed \\
$i$, $s_i \in D \subset \mathbb{R}^2$ & a pixel and its location & $i \in U$ \\
$Y^{*}_i$, $\mathcal{M}$, $Y_i = \mathcal{M}(Y^{*}_i)$ & latent truth,
measurement protocol, gold-standard label & defined on all of $U$, observed
only on $S$ \\
$f_i$ & prediction map & all of $U$, observed everywhere \\
$\Delta_i = Y_i - f_i$ & rectifier & all of $U$, observed only on $S$ \\
$x_i = (1, z_i^{\top})^{\top}$ & covariates driving selection, the map $f_i$
among them & all of $U$, observed everywhere \\
$\tilde x_i$ & regressors of the census regression, Corollary~\ref{cor:ls} &
all of $U$, observed everywhere \\
$\theta = \bar Y_U$ & estimand: the census parameter & a fixed number \\
$\bar f_U$, $\bar\Delta_U$ & the known and the unknown half of $\theta$ &
fixed; $\bar f_U$ known exactly \\
$\beta_U$, $J_U$ & census regression coefficient and its Gram matrix & fixed;
$J_U$ known \\
$U_h$, $N_h = |U_h|$, $W_h = N_h/N$ & stratum, its size and its weight &
a partition of $U$ \\
$B = N/n$ & common block size, Proposition~\ref{thm:balance} & a partition of
$U$ into $n$ blocks \\
$\mu_h$, $\sigma^2_h$, $\bar\sigma^2_W$, $\sigma^2_B$ & within-block mean and
variance, their average, and the between-block variance & over $U_h$, divisor
$B$ \\
$\sigma^2_h(\xi) = \E_\xi[S^2_h]$ & model within-stratum variance,
Proposition~\ref{thm:design-opt} & over $U_h$; a model quantity, not the row
above \\
$S^2_{Z,h}$, $S_{Yf,h}$ & finite-population variance and covariance within
stratum $h$ & over $U_h$, divisor $N_h - 1$ \\
\midrule
\multicolumn{3}{l}{\emph{The labelling mechanism --- the only randomness}}\\
\midrule
$p(\cdot)$ & the design: a distribution over subsets of $U$ & --- \\
$S \subset U$, $n = |S|$ & label sample and label budget & random set \\
$I_i = \one\{i \in S\}$ & inclusion indicator; written $R_i$ in
Section~\ref{sec:ipw}, following the missing-data convention & one per $i \in
U$; random \\
$\pi_i = \Prob(i \in S)$ & first-order inclusion probability & defined at all
$N$ units of $U$ \\
$\pi_{ij} = \Prob(i, j \in S)$, $\pi_{ii} = \pi_i$ & joint inclusion
probability & defined at all $N^2$ pairs of $U$ \\
$\varpi = n/N$ & sampling fraction; in Prop.~\ref{thm:ipw} the
\emph{reference} fraction, $n$ being the reference budget and $n_N$ the
expected count & --- \\
$n_h$, $\pi_h = n_h/N_h$ & allocation to stratum $h$ and the rate it implies &
one per stratum \\
$\pi(x;\alpha)$, $\alpha = (\alpha_0, \alpha_z)$, $\alpha^{*}$ & logistic
propensity model, its coefficient vector and the true value,
Section~\ref{sec:ipw} & --- \\
$\eta = (a_0, \alpha_z)$ & the same coefficients after the drifting intercept
$\alpha_0 = \log\varpi_N + a_0$ is factored out; $\eta$ is fixed where
$\alpha^{*}_\nu$ is a sequence & --- \\
$\hat\pi_i = \pi(x_i;\hat\alpha)$ & \emph{fitted} propensity, the one the
estimator uses & all of $U$; random \\
$\tilde\pi_i = \pi(x_i;\alpha^{\dagger})$ & its \emph{pseudo-true} value, the
probability limit of $\hat\pi_i$ under the working model; equals $\pi_i$ iff
that model is correct & all of $U$; fixed \\
$\lambda_i$, $\tilde\lambda_i$, $t$ & true and working labelling intensity and
the scale that meets the budget, $\pi_i = t\lambda_i$ & all of $U$ \\
$h_i = \pi_i/\tilde\pi_i$, $v_i = h_i - \bar h$ & weight ratio and its
centring, Theorem~\ref{prop:asym} & all of $U$; $\sum_i v_i = 0$ \\
\bottomrule
\end{tabular}
\caption{Notation, part one: what is fixed and what is random.}
\label{tab:notation1}
\end{table}

\begin{table}[htbp]
\centering\small
\begin{tabular}{L{0.20\textwidth} L{0.44\textwidth} L{0.28\textwidth}}
\toprule
Symbol & Meaning & Index set; status \\
\midrule
\multicolumn{3}{l}{\emph{Statistics --- computed from the labels}}\\
\midrule
$\hat\theta = \bar f_U + \hat{\bar\Delta}$ & the PPI estimator & function of
$S$ \\
$\hat{\bar\Delta}_{\HT}$, $\hat{\bar\Delta}_{\mathrm{H\acute aj}}$ &
Horvitz--Thompson and H\'ajek estimators of $\bar\Delta_U$ & sums over $S$ \\
$\tau$; $\hat\tau_{\mathrm{pp}}$, $\hat\tau_{\mathrm{dsn}}$ & power-tuning
coefficient; the PPI++ and design-matched choices,
Proposition~\ref{thm:lamdesign} & scalar \\
$\tau^{*}_{\mathrm{pp}}$, $\tau^{*}_{\mathrm{dsn}}$ & their population
counterparts & fixed numbers \\
$\hat V$, $\hat V_{\mathrm{cp}}$ & the variance estimator \eqref{eq:vhat} and
the collapsed-pairs estimator of Proposition~\ref{thm:balance} & sums over $S$
\\
$\hat\alpha$, $\hat m$ & fitted propensity coefficient and fitted outcome model
& $\hat\alpha$ from all of $U$; $\hat m$ from $S$ \\
$\hat\theta_{\mathrm{DR}}$, $\hat\beta_{\mathrm{DR}}$ & AIPW estimators of
$\theta$ and of $\beta_U$ & functions of $S$ \\
$V$, $B_{cc}$, $A_{c\alpha}$, $A_{\alpha\alpha}$ & asymptotic sandwich variance
of Proposition~\ref{thm:ipw} and its blocks & fixed limits \\
\midrule
\multicolumn{3}{l}{\emph{The working model $\xi$ --- never used for validity}}\\
\midrule
$\xi$ & superpopulation working model & --- \\
$\gamma$ & semivariogram of the rectifier under $\xi$; the only meaning this
symbol carries & --- \\
$m(x)$, $u_i = \Delta_i - m(x_i)$, $\sigma^2_u$ & outcome model, its residual
field, and that field's variance & $u_i$ defined on all of $U$ \\
$\rho_u$, $P = [\rho_u(s_i,s_j)]$ & correlation function of $u$ and the
resulting $N \times N$ matrix & over $U \times U$ \\
$\bar r_U$, $N_{\mathrm{eff}} = \bar r_U^{-1}$, $N_{\mathrm{eff},v}$,
$N_{\mathrm{eff},k}$ & mean correlation and the effective sample sizes it
defines & fixed given the population \\
$G_U$ & the conditional gap of Theorem~\ref{prop:asym} & a fixed number given
the realised population; free of $n$ \\
$\nu$, $U_\nu$, $N_\nu$, $n_\nu$ & index of the population sequence and its
members, Remark~\ref{rem:asymp} & suppressed in the main text \\
\bottomrule
\end{tabular}
\caption{Notation, part two: statistics and the working model.}
\label{tab:notation2}
\end{table}

\noindent
\textbf{Letters that carry more than one meaning.} Four letters carry two,
because both uses are standard and a rename would cost more than it saves; the
disambiguating rule follows. The letter $h$ labels a stratum when it
appears \emph{as} a subscript ($U_h$, $W_h$, $n_h$, $\sigma^2_h$) and is the
weight ratio of Theorem~\ref{prop:asym} when it \emph{carries} a unit subscript
($h_i$, $\bar h$); the two never share a display, strata belonging to
Section~\ref{sec:design} and the ratio to Section~\ref{sec:ipw}. The letter $S$
is the label sample when it stands alone or carries a stratum subscript ($S$,
$S_h = S \cap U_h$) and, by the survey-sampling convention, a
finite-population variance or covariance when it carries a variable subscript
($S^2_{\Delta}$, $S^2_{Z,h}$, $S_{Yf,h}$). The letter $n$ is the realised label
count $|S|$ throughout, with one exception: under the
independent Bernoulli selection of Proposition~\ref{thm:ipw} the count is
itself random, and $n$ there is the \emph{reference budget} that fixes
$\varpi_N = n/N$, the expected count being
$n_N = \sum_{i \in U}\pi_i \asymp n$ and the realised count differing from
$n_N$ by $O_p(n_N^{1/2})$. Two adjectives are kept apart throughout. A propensity is \emph{true} only
when it is the actual $\pi_i$, which outside a simulation nobody has; it is
\emph{correctly specified} when the working model has the right form, so that
its pseudo-true value satisfies $\tilde\pi = \pi$ even though the fitted
$\hat\pi$ does not. Theorem~\ref{prop:asym} turns on the second, not the first:
$v \equiv 0$ requires $\tilde\pi = \pi$, and holds however $\hat\alpha$ lands.
Two abbreviations recur: SRS is simple random sampling without replacement, a
design drawing every subset of size $n$ from $U$ with equal probability, so
that $\pi_i = n/N$ and $\pi_{ij} = n(n-1)/\{N(N-1)\}$; SRSWOR spells the same
thing out where the ``without replacement'' needs emphasis, as in the
within-stratum draws of Proposition~\ref{thm:lamdesign}. Finally $\alpha$ is
the propensity coefficient vector of Section~\ref{sec:ipw}, while $1 - \alpha$
is the nominal coverage level wherever a normal quantile $z_{1-\alpha/2}$
appears; the quantile subscript marks the difference.

\noindent
\textbf{Symbols that are deliberately local.} Tables~\ref{tab:notation1}
and~\ref{tab:notation2} carry the symbols that recur across sections. A second
group is introduced where used, confined to that place, and not repeated
here: the estimating function $\psi$, its stacked sum $\hat\Psi$ and
the gap $G^{\psi}_U$ of Corollary~\ref{cor:mest}; the moment surplus $\delta$
of \ref{a:moments}; the standard normal distribution function $\Phi$ and the
standardised bias $r_n$ in the coverage formula of Section~\ref{sec:ipw}; the
range $\phi$, the noise $\varepsilon_i$ and the preferential strength $\omega$
of the simulation designs in Section~\ref{sec:sims}; the variance inflation
factor $\kappa$ of Section~\ref{sec:sims-real}; and the class-to-letter map
$\chi$ and the calibration $\hat g$ of the land-cover predictor in
Section~\ref{sec:empirical}. Each is defined at its first appearance and none
is reused elsewhere.

\section{PPI under known probability designs}\label{sec:design}

Results in this section restate classical survey-sampling theory in PPI
notation, the additions being elementary and itemised in
Table~\ref{tab:provenance}. Their role is to give practitioners the correct
variance estimator for each labelling design --- the simulations of
Section~\ref{sec:sims} show that using the i.i.d.\ formula outside its lane
is not a technicality but a $95\% \to 58\%$ coverage failure.

\subsection{General designs}

Throughout this section the inclusion probabilities are \emph{known} to the
analyst, as they are when the analyst executes the design --- patterns 1 and 2
of Table~\ref{tab:taxonomy}; they are true design quantities and not estimates,
an estimated propensity $\hat\pi_i$ being the subject of
Section~\ref{sec:ipw}. In the double sums below the diagonal is read with
$\pi_{ii} = \pi_i$, so that the $i = j$ contribution is
$(1-\pi_i)\Delta_i^2/\pi_i$.

\begin{proposition}[Two-phase difference estimator; classical]\label{thm:ht}
Let $p(\cdot)$ be a probability design with known $\pi_i > 0$ and
$\pi_{ij} > 0$. Then the Horvitz--Thompson form
$\hat\theta = \bar f_U + \hat{\bar\Delta}_{\HT}$ of \eqref{eq:ppi} is
design-unbiased for $\theta$, with
\begin{equation}\label{eq:syg}
\Var_p\bigl(\hat\theta\bigr)
= \frac{1}{N^2}\sum_{i\in U}\sum_{j\in U}
(\pi_{ij} - \pi_i\pi_j)\,\frac{\Delta_i}{\pi_i}\frac{\Delta_j}{\pi_j},
\end{equation}
and
\begin{equation}\label{eq:vhat}
\hat V
= \frac{1}{N^2}\sum_{i\in S}\sum_{j\in S}
\frac{\pi_{ij} - \pi_i\pi_j}{\pi_{ij}}\,
\frac{\Delta_i}{\pi_i}\frac{\Delta_j}{\pi_j}
\end{equation}
is design-unbiased for \eqref{eq:syg}. If in addition
$\sqrt{n}(\hat\theta - \theta)$ is asymptotically normal and the variance
estimator is \emph{ratio-consistent}, $\hat V/\Var_p(\hat\theta) \to_p 1$, the
normal interval attains nominal coverage.
\end{proposition}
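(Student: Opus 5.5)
The argument works entirely with the inclusion indicators $I_i = \one\{i\in S\}$, since the list $\{(Y_i,f_i)\}_{i\in U}$ is fixed and the design is the only source of randomness. Because $\bar f_U$ is a known constant, write
\[
\hat\theta - \theta = \hat{\bar\Delta}_{\HT} - \bar\Delta_U = \frac{1}{N}\sum_{i\in U}\Bigl(\frac{I_i}{\pi_i}-1\Bigr)\Delta_i .
\]
Design-unbiasedness then follows from linearity and $\E_p[I_i]=\pi_i$; this uses $\pi_i>0$ for every $i\in U$, so that each unit has a well-defined weight. For the variance, $\hat\theta$ and $\hat{\bar\Delta}_{\HT}$ differ by a constant, so
\[
\Var_p(\hat\theta)=N^{-2}\sum_{i,j\in U}\Cov_p(I_i,I_j)\,\frac{\Delta_i}{\pi_i}\frac{\Delta_j}{\pi_j}.
\]
Substituting $\Cov_p(I_i,I_j)=\pi_{ij}-\pi_i\pi_j$, with $\pi_{ii}=\pi_i$ so that the diagonal term is $\pi_i(1-\pi_i)$, gives \eqref{eq:syg} exactly.

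For \eqref{eq:vhat}, rewrite the sum over $S\times S$ as a sum over $U\times U$ weighted by $I_iI_j$:
\[
\hat V = N^{-2}\sum_{i,j\in U} I_iI_j\,\frac{\pi_{ij}-\pi_i\pi_j}{\pi_{ij}}\frac{\Delta_i}{\pi_i}\frac{\Delta_j}{\pi_j}.
\]
Since $I_iI_j=\one\{i,j\in S\}$, we have $\E_p[I_iI_j]=\pi_{ij}$, and on the diagonal $I_i^2=I_i$ has expectation $\pi_i=\pi_{ii}$. Taking expectations term by term cancels each denominator $\pi_{ij}$ and returns \eqref{eq:syg}. This is the step that needs the hypothesis $\pi_{ij}>0$ for all pairs: without it, some pairs in \eqref{eq:syg} would never be observed jointly and would be missing from the expectation.

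The coverage claim is a Slutsky argument. Write
\[
\frac{\hat\theta-\theta}{\hat V^{1/2}}=\frac{\hat\theta-\theta}{\Var_p(\hat\theta)^{1/2}}\cdot\Bigl(\frac{\Var_p(\hat\theta)}{\hat V}\Bigr)^{1/2}.
\]
The first factor converges in distribution to $N(0,1)$ under the assumed asymptotic normality, read with its own design variance in the nested-population framework of Remark~\ref{rem:asymp}. The second factor tends to $1$ in probability by ratio-consistency. Hence $\Prob_p\bigl(|\hat\theta-\theta|\le z_{1-\alpha/2}\hat V^{1/2}\bigr)\to 1-\alpha$.

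The only delicate point is interpreting ``$\sqrt n(\hat\theta-\theta)$ asymptotically normal.'' For the argument to go through, that limit must be taken to mean normality after standardisation by $\Var_p(\hat\theta)^{1/2}$, which requires $n\Var_p(\hat\theta)$ to stay bounded away from zero. I would make this reading explicit before applying Slutsky. I would also note that $\hat V$ can be negative in finite samples for some designs; ratio-consistency means this happens with probability tending to zero, so it does not affect the asymptotic statement.
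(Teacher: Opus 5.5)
Your proposal is correct and follows the paper's own proof (Appendix~\ref{app:thm1}) essentially step for step. You use the indicator representation for unbiasedness, $\Cov_p(I_i,I_j)=\pi_{ij}-\pi_i\pi_j$ with $\pi_{ii}=\pi_i$ for \eqref{eq:syg}, term-by-term expectation using $\E_p[I_iI_j]=\pi_{ij}>0$ for \eqref{eq:vhat}, and Slutsky with the central limit theorem read in standardised form, which is exactly how the paper invokes it (the paper additionally derives the Sen--Yates--Grundy form for fixed-size designs, which the statement does not need).
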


The estimator \eqref{eq:vhat} replaces the population index set of
\eqref{eq:syg} --- a double sum over $U \times U$ no analyst can evaluate ---
by the sample one, reweighting so that the substitution is unbiased; it calls
for $\pi_{ij}$ only at the pairs actually sampled. For a fixed-size design it
may be replaced by the Sen--Yates--Grundy form in squared differences
(Appendix~\ref{app:proofs}, Section~\ref{app:thm1}), the version used
throughout Sections~\ref{sec:sims} and \ref{sec:empirical}. Both are
design-unbiased for the same variance though not the same statistic --- they
agree in expectation, not sample by sample --- and the squared-difference form
is preferred because it cannot go negative when the design is well spread.

\begin{remark}[Which designs supply the two limit conditions]
\label{rem:limits}
The two conditions the interval claim adds are separate and design specific.
Asymptotic normality holds under the moment-stability conditions of
\citet{isaki1982} together with the central limit theorem (CLT) conditions of
\citet{hajek1964} (SRS, stratified, rejective) or \citet{berger1998} (general
unequal-probability designs). Ratio consistency is the harder one:
unbiasedness of the Sen--Yates--Grundy estimator is elementary, its ratio
consistency not, being available for SRS and stratified designs by direct
computation and for high-entropy unequal-probability designs from
\citet{berger1998var} and \citet{brewer2002} --- a different result from the
central limit theorem of \citet{berger1998} above. For balanced and pivotal
designs \citet{boistard2017} and \citet{chauvet2012} supply the asymptotic
distributional theory and variance approximations from which ratio consistency
follows under additional design-specific variance-consistency conditions; we
verify those for no particular implementation, and claim for such designs no
more than the exactness of \eqref{eq:syg} and \eqref{eq:vhat}, which outside
these classes likewise remains exact while the interval claim does not follow.
\end{remark}

\begin{remark}[The H\'ajek variant]\label{rem:hajek}
$\hat\theta_{\mathrm{H\acute aj}} = \bar f_U + \hat{\bar\Delta}_{\mathrm{H\acute
aj}}$ of Section~\ref{sec:setup} is a ratio estimator and is
not exactly design-unbiased; its bias is $O(n^{-1})$ and it is asymptotically
equivalent to the Horvitz--Thompson form under \ref{a:design} and
\ref{a:clt}. Its first-order design variance is \eqref{eq:syg} evaluated at
the linearisation residuals $\Delta_i - \bar\Delta_U$, and the corresponding
variance estimator is \eqref{eq:vhat} with the unknown $\bar\Delta_U$
replaced by $\hat{\bar\Delta}_{\mathrm{H\acute aj}}$; the two-step order
matters, since \eqref{eq:syg} is a population quantity into which a
sample-dependent centring cannot be substituted directly. It is the form used
in the empirical study of Section~\ref{sec:empirical}, where
$\sum_{i \in S}\pi_i^{-1}$ differs from $N$.
\end{remark}

\begin{proof}[Proof]
$\bar f_U$ is nonrandom; the claim reduces to the theory of the difference
estimator under two-phase sampling with phase one a census
\citep[Ch.~9]{sarndal1992}. Full conditions and proof:
Appendix~\ref{app:proofs}, Section~\ref{app:thm1}.
\end{proof}

\begin{remark}
Under SRS, \eqref{eq:syg} reduces to $(1 - n/N)S^2_\Delta/n$: the spatial
arrangement of $\{\Delta_i\}$ is irrelevant, and i.i.d.-style PPI intervals
are valid for any strength of spatial correlation
(Figure~\ref{fig:fig1}, left column). Clustered labelling treated as
i.i.d.\ inflates the true variance by a design effect up to the cluster size
and destroys coverage (centre column).
\end{remark}

\subsection{Spatially balanced designs}

A \emph{block} here is a stratum: a cell $U_h$ of a partition of the
\emph{population} $U$ into $n$ spatially contiguous pieces of equal size
$B = N/n$, fixed before any labelling. Blocks partition $U$, not $S$; the
design draws the label sample $S$ of Section~\ref{sec:setup} \emph{from} them,
exactly one pixel per block, so $|S| = n$ and the number of blocks equals the
label budget. Block rather than stratum marks the special case --- equal sizes,
one draw each --- of the general stratified designs $\{U_h\}$ with sizes $N_h$
and allocations $n_h$ treated in Propositions~\ref{thm:design-opt} and
\ref{thm:lamdesign}; the word carries its unrelated ordinary cartographic
sense in Sections~\ref{sec:sims-real} and \ref{sec:empirical}.

Proposition~\ref{thm:ht} assumes $\pi_{ij} > 0$ at every pair, and the design
most natural for spatial work --- one draw per block --- violates it: two units
of the same block are never labelled together, and once $\pi_{ij} = 0$ on the
within-block pairs the two algebraic forms of the variance estimator cease to
agree and both fail. Expression \eqref{eq:vhat} collapses to its diagonal,
since independent draws across blocks give $\pi_{ij} = \pi_i\pi_j$ at every
pair that can appear in $S$, leaving something no longer design-unbiased and
grossly conservative; the Sen--Yates--Grundy version in squared differences ---
the one used in Sections~\ref{sec:sims} and~\ref{sec:empirical} --- is
identically zero at every realised sample, each factor
$\pi_{ij} - \pi_i\pi_j$ vanishing. The pairs carrying the gain from
stratification are exactly those the design refuses to observe, and the design
variance is not unbiasedly estimable from the sample alone
\citep{sarndal1992}. The next proposition therefore supplies, for this design
class, what Proposition~\ref{thm:ht} cannot: the exact design variance, a
checkable condition under which spatial balance repays its cost rather than an
appeal to folklore, and a conservative substitute variance estimator --- the
collapsed-pairs estimator behind the balanced column of
Figure~\ref{fig:fig1}.

\begin{proposition}[Variance reduction under spatial balance]\label{thm:balance}
Partition $U$ into $n$ blocks (strata) $\{U_h\}_{h=1}^n$ of equal size
$B = N/n$ and let $p_{\mathrm{str}}$ draw one pixel uniformly from each block,
independently across blocks, so that $S$ consists of one labelled pixel per
block and $\pi_i = 1/B = n/N$ for every $i$. Write
$\mu_h$ and $\sigma_h^2$ for the within-block mean and variance
(divisor $B$) of $\{\Delta_i\}$, and
\[
\bar\sigma^2_W = \frac{1}{n}\sum_{h=1}^n \sigma_h^2,
\qquad
\sigma^2_B = \frac{1}{n}\sum_{h=1}^n (\mu_h - \bar\Delta_U)^2 .
\]
Then $\Var_{p_{\mathrm{str}}}(\hat\theta) = \bar\sigma^2_W/n$ exactly,
\[
\Var_{\mathrm{SRS}}(\hat\theta) - \Var_{p_{\mathrm{str}}}(\hat\theta)
= \frac{1}{n}\Bigl[\tfrac{N-n}{N-1}\bigl(\bar\sigma^2_W + \sigma^2_B\bigr)
- \bar\sigma^2_W\Bigr],
\]
which is strictly positive if and only if
$\sigma^2_B > c_N\,\bar\sigma^2_W$ with
$c_N = (n - 1)/(N - n)\;(\approx n/N$ for large $N)$; and
$\sqrt{n}(\hat\theta - \theta)$ is asymptotically normal under a
Lindeberg condition on the block draws. Let
$\hat V_{\mathrm{cp}} = n^{-2}\sum_{k} d_k^2$ be the collapsed-pairs variance
estimator, formed by partitioning
the $n$ blocks into $n/2$ \emph{disjoint} adjacent pairs $k$ (so $n$ is even,
and each block enters exactly one pair), with $d_k$ the within-pair
difference. Then
\[
\E[\hat V_{\mathrm{cp}}] = \Var(\hat\theta)
 + n^{-2}\sum_k (\mu_{k1}-\mu_{k2})^2 ,
\]
$\mu_{k1}$ and $\mu_{k2}$ being the population means of $\Delta$ over the two
blocks of pair $k$, so that $\hat V_{\mathrm{cp}}$ is conservative.
\end{proposition}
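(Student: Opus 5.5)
The plan is to treat the design as a stratified design with one draw per block and reduce every claim to within-block calculations. Under $p_{\mathrm{str}}$, with $\pi_i = n/N$ and $N/n = B$, the estimator is $\hat\theta = \bar f_U + n^{-1}\sum_{h=1}^n \Delta_{i_h}$, where $i_h$ is the pixel drawn from block $h$. The draws $\Delta_{i_h}$ are independent across $h$, with mean $\mu_h$ and variance $\sigma_h^2$ under the divisor-$B$ convention, since each is uniform on the block. Design-unbiasedness follows because $n^{-1}\sum_h \mu_h = \bar\Delta_U$ when the blocks have equal size. Independence then gives $\Var_{p_{\mathrm{str}}}(\hat\theta) = n^{-2}\sum_h\sigma_h^2 = \bar\sigma^2_W/n$ directly.

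For the comparison with SRS I will use the remark after Proposition~\ref{thm:ht}: $\Var_{\mathrm{SRS}}(\hat\theta) = (1-n/N)S^2_\Delta/n$. I then apply the ANOVA decomposition over the equal-size blocks. The population variance with divisor $N$ equals $\bar\sigma^2_W + \sigma^2_B$, so
\[
S^2_\Delta = \frac{N}{N-1}\bigl(\bar\sigma^2_W + \sigma^2_B\bigr),
\]
and $(1-n/N)\,N/(N-1) = (N-n)/(N-1)$ gives the displayed difference. For positivity, the bracket is positive exactly when $(N-n)\sigma^2_B > \bigl[(N-1)-(N-n)\bigr]\bar\sigma^2_W = (n-1)\bar\sigma^2_W$, which is the threshold $c_N = (n-1)/(N-n)$. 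Asymptotic normality is the Lindeberg--Feller CLT for the triangular array of independent, non-identical summands $n^{-1}(\Delta_{i_h}-\mu_h)$. It needs a Lindeberg condition stated relative to $\sum_h\sigma_h^2$, which is implied by a uniform $2+\delta$ moment bound in the spirit of \ref{a:moments} together with $\bar\sigma^2_W$ bounded away from zero.

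For the collapsed-pairs estimator, set $d_k = \Delta_{i_{k1}} - \Delta_{i_{k2}}$. By independence, $\E[d_k^2] = \Var(d_k) + (\E d_k)^2 = \sigma^2_{k1} + \sigma^2_{k2} + (\mu_{k1}-\mu_{k2})^2$. Because the pairs are disjoint and cover every block exactly once, summing over $k$ gives $\sum_k(\sigma^2_{k1}+\sigma^2_{k2}) = \sum_h\sigma_h^2 = n^2\Var(\hat\theta)$. Multiplying by $n^{-2}$ yields the stated expectation, and the nonnegative bias term gives conservativeness.

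None of these steps is hard. The one point needing care is the divisor bookkeeping. Under the divisor-$B$ convention the within-block variance of a uniform single draw is exactly $\sigma_h^2$, with no finite-population correction, and the $N/(N-1)$ factor must be tracked so that the threshold comes out as $(n-1)/(N-n)$ rather than $n/N$. I expect the Lindeberg step to be the only place needing an assumption beyond algebra, and I will state it as a condition rather than derive it.
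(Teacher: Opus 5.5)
Your proposal is correct and follows the paper's appendix proof essentially step for step. Independent uniform draws within equal-size blocks give $\bar\sigma^2_W/n$; the within/between decomposition $S^2_\Delta = \tfrac{N}{N-1}(\bar\sigma^2_W+\sigma^2_B)$ yields the difference and the threshold $c_N=(n-1)/(N-n)$, where dividing by $N-n$ implicitly uses $N>n$, i.e.\ $B\ge 2$; the CLT is Lindeberg--Feller, implied as the paper notes by the $(2+\delta)$-moment bound together with $\liminf\bar\sigma^2_W>0$; and the collapsed-pairs identity follows from $\E[d_k^2]=\sigma^2_{k1}+\sigma^2_{k2}+(\mu_{k1}-\mu_{k2})^2$ summed over disjoint pairs.
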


\paragraph{Reading the threshold.} It is a genuine one, not an automatic gain:
rectifier variation concentrated at
scales much finer than the block scale, so that each block mean averages it
away, can leave $\sigma^2_B$ below it. It is mild in the regime of interest,
though: for the
small sampling fractions typical of label budgets
$c_N \approx n/N \to 0$, so any $\sigma^2_B$ bounded away from zero
suffices. The collapsed-pairs bias term likewise vanishes in relative terms
whenever the local drift of
$\Delta$ between adjacent blocks is dominated by the within-block variation.

\begin{remark}[What ratio consistency of $\hat V_{\mathrm{cp}}$ needs]
\label{rem:cpratio}
Ratio consistency of $\hat V_{\mathrm{cp}}$ for $\bar\sigma^2_W/n$ requires a
law of large numbers for the squared pair differences and therefore a fourth
moment --- $\delta \ge 2$, where $\delta$ is the moment surplus in the bound
$N^{-1}\sum_{i\in U}|\Delta_i|^{2+\delta} \le M$ of \ref{a:moments} --- or
else, for $\delta < 2$, a triangular-array
$L_{1+\delta/2}$ argument through the von Bahr--Esseen inequality, spelled out
in Appendix~\ref{app:thm2}. Three things are then needed together for
$\hat V_{\mathrm{cp}}/\Var_p(\hat\theta) \to_p 1$: that law of large numbers;
$\liminf \bar\sigma^2_W > 0$, so the target does not degenerate; and
$\sum_k(\mu_{k1}-\mu_{k2})^2 \big/ \sum_h \sigma^2_h \to 0$, so the
local-drift term is negligible against it. Absent these, only the expectation
identity of Proposition~\ref{thm:balance} is claimed. For GRTS
\citep{stevens2004grts} and
the local pivotal method \citep{grafstrom2012lpm} we claim no such
conclusions: what carries over is that the local-mean estimator of
\citet{stevens2003variance} is subject to analogous local-drift bias terms,
with the high-entropy conditions of \citet{boistard2017} and
\citet{chauvet2012} needed for the corresponding limit theory; verifying those
for a given implementation is outside our scope.
\end{remark}

\begin{proof}[Proof]
Appendix~\ref{app:proofs}, Section~\ref{app:thm2}.
\end{proof}

\subsection{Optimal allocation of the label budget}

\begin{proposition}[Anticipated-variance-optimal design]\label{thm:design-opt}
Let $\xi$ be an intrinsically stationary working model for the rectifier with
semivariogram $\gamma$, write $N_h = |U_h|$ for the stratum size, and set
$\sigma^2_h(\xi) = \{N_h(N_h-1)\}^{-1}\sum_{i \neq j \in U_h}\gamma(s_i - s_j)$
for the $\xi$-anticipated within-stratum variance.
(i) Among stratified designs with strata $\{U_h\}$, allocations $\{n_h\}$ and
within-stratum SRS, the anticipated variance
$\mathrm{AV}(p, \xi) = \E_\xi[\Var_p(\hat\theta)]$ is minimised subject to
$\sum_h n_h = n$ by the spatial Neyman allocation
$n_h \propto N_h\, \sigma_h(\xi)$.
(ii) Suppose in addition, \emph{for this part only}, that the working model
supplies the site-specific second moments $a_i = \E_\xi[\Delta_i^2]$. Then,
among Poisson designs with $\sum_{i} \pi_i = n$, $\mathrm{AV}$ is
minimised by $\pi_i = \min\{C a_i^{1/2},\, 1\}$, $C$ set so that
$\sum_i \pi_i = n$ --- the interior solution $\pi_i \propto a_i^{1/2}$ where
that rule leaves every $\pi_i$ below one, and its saturation at the upper
bound otherwise, since $\pi_i$ is a probability. Among those additionally
confined to the box $\pi_i \in [c_0 n/N,\, \bar\pi]$ of \ref{a:design} the
optimum is $\pi_i = \min\{\max(C a_i^{1/2},\, c_0 n/N),\, \bar\pi\}$, again with
$C$ set to meet the budget.
(iii) If $\xi$ is misspecified, the resulting design is suboptimal but the
coverage guarantee of Proposition~\ref{thm:ht} is unaffected, \emph{provided}
the probabilities are so confined.
\end{proposition}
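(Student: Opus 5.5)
The plan is to compute the anticipated variance in closed form for each design class, reducing each part to a convex separable optimisation, and then read (iii) off Proposition~\ref{thm:ht}.

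\textbf{Part (i).} Under stratified SRSWOR the design variance of $\hat\theta$ is the classical
\[
\Var_p(\hat\theta)=\sum_h W_h^2\Bigl(1-\tfrac{n_h}{N_h}\Bigr)\frac{S^2_{\Delta,h}}{n_h},
\qquad
S^2_{\Delta,h}=\frac{1}{2N_h(N_h-1)}\sum_{i\neq j\in U_h}(\Delta_i-\Delta_j)^2 .
\]
This is Proposition~\ref{thm:ht} specialised to independent within-stratum SRS. I write $S^2$ in pairwise-difference form precisely so that its $\xi$-expectation involves only increments. Intrinsic stationarity gives $\E_\xi[(\Delta_i-\Delta_j)^2]=2\gamma(s_i-s_j)$, so $\E_\xi[S^2_{\Delta,h}]=\sigma^2_h(\xi)$ exactly as defined in the statement. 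This needs no finite mean or variance, only the increments.

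Hence
\[
\mathrm{AV}=\sum_h W_h^2\,\sigma^2_h(\xi)/n_h\;-\;\text{(a term free of }n_h),
\]
where the second term is $\sum_h W_h^2\sigma^2_h(\xi)/N_h$. Minimising $\sum_h A_h^2/n_h$ subject to $\sum_h n_h=n$ is the Neyman problem. I will solve it by Cauchy--Schwarz, using
\[
\Bigl(\sum_h A_h\Bigr)^2\le\Bigl(\sum_h A_h^2/n_h\Bigr)\Bigl(\sum_h n_h\Bigr),
\]
with equality iff $n_h\propto A_h=W_h\sigma_h(\xi)\propto N_h\sigma_h(\xi)$. I will treat $n_h$ as continuous, noting that integer rounding and the cap $n_h\le N_h$ are the usual caveats.

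\textbf{Part (ii).} For a Poisson design the $I_i$ are independent, so $\pi_{ij}=\pi_i\pi_j$ off the diagonal, and \eqref{eq:syg} collapses to
\[
N^{-2}\sum_i(1-\pi_i)\Delta_i^2/\pi_i .
\]
Taking $\E_\xi$ (Fubini, since the design does not depend on $\xi$) gives
\[
\mathrm{AV}=N^{-2}\sum_i a_i/\pi_i-N^{-2}\sum_i a_i .
\]
The problem is therefore to minimise the strictly convex separable objective $\sum_i a_i/\pi_i$ over the convex set $\{\sum_i\pi_i=n,\ \ell\le\pi_i\le u\}$, with $(\ell,u)=(0,1)$ or $(c_0n/N,\bar\pi)$.

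I will write the KKT conditions. Stationarity gives $-a_i/\pi_i^2+\mu-\text{(bound multipliers)}=0$, so interior coordinates satisfy $\pi_i=\mu^{-1/2}a_i^{1/2}$, and the remaining coordinates sit at whichever bound is active. This is exactly the clipped rule $\pi_i=\min\{\max(Ca_i^{1/2},\ell),u\}$, with $C=\mu^{-1/2}$.

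Existence and uniqueness of $C$ follow because $C\mapsto\sum_i\min\{\max(Ca_i^{1/2},\ell),u\}$ is continuous and nondecreasing. It runs from $N\ell$ to $Nu$, and feasibility requires $N\ell\le n\le Nu$. Convexity makes KKT sufficient, and strict convexity gives uniqueness. One edge case needs care: units with $a_i=0$ under the $(0,1]$ box have infimum approached as $\pi_i\to0$. I will note that they contribute nothing, or simply exclude them.

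\textbf{Part (iii).} Whatever $\xi$ is, the design it produces is a legitimate probability design with known $\pi_i>0$, and on the box $\pi_{ij}>0$ for a Poisson design. Proposition~\ref{thm:ht} then gives design-unbiasedness and the unbiased estimator \eqref{eq:vhat}, neither of which refers to $\xi$.

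The interval claim needs normality and ratio consistency. The box bounds $\pi_i\ge c_0n/N$ and $\pi_i\le\bar\pi<1$ are exactly what \ref{a:design} asks for, so that, with \ref{a:moments}, the Lindeberg/Hájek--Berger conditions of Remark~\ref{rem:limits} hold uniformly over designs in the box. Suboptimality is immediate, since the $\xi$-optimal design is generally not the true-model optimum.

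\textbf{Main obstacle.} The hardest step is this last one: confirming that the lower bound on $\pi_i$ indeed delivers the CLT and ratio consistency regardless of how wrong $a_i$ is. For Poisson sampling I would try first to verify Lyapunov directly on the independent summands $(I_i-\pi_i)\Delta_i/\pi_i$. Their $(2+\delta)$ moments are controlled by $\pi_i^{-(1+\delta)}\lesssim(N/n)^{1+\delta}$ together with \ref{a:moments}. I would then check that the variance does not degenerate, which again uses $\pi_i\le\bar\pi<1$.
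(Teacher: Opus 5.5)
Your proposal is correct and follows the paper's own route: the pairwise-difference (variogram) identity $\E_\xi[S^2_{\Delta,h}]=\sigma^2_h(\xi)$ with Cauchy--Schwarz for (i); the Poisson anticipated variance $N^{-2}\sum_i(1/\pi_i-1)a_i$, minimised as a convex separable problem whose optimum is the clipped, budget-readjusted rule, for (ii), where your KKT derivation makes explicit what the paper asserts from convexity of the objective on the box; and Lyapunov on the independent Poisson summands under the relative floor for (iii). One sharpening of your Lyapunov sketch: the variance lower bound of order $1/n$ does not come from $\pi_i\le\bar\pi$ alone, nor from the bound $\pi_i\le c_1 n/N$ of \ref{a:design}, which the clipped rule does not enforce; it comes from $1-\pi_i\ge 1-\bar\pi$, the budget inequality $\bigl(\sum_i|\Delta_i|\bigr)^2\le n\sum_i\Delta_i^2/\pi_i$, and a lower bound on $N^{-1}\sum_i|\Delta_i|$ that follows from \ref{a:moments}.
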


Three remarks on the statement. The symbol $\gamma$ carries no meaning in this
paper other than the semivariogram. The second moments of (ii) are what the
allocation needs and intrinsic stationarity alone does not fix them, since it
constrains only the increments $\E_\xi[(\Delta_i-\Delta_j)^2]$ --- which is why
part (i), stated purely in terms of $\gamma$, needs no such assumption; if
$\xi$ is second-order stationary with mean zero and its covariance decays to
zero, $a_i$ is the common sill $\gamma(\infty)$ and the rule is flat. And the
proviso in (iii) has teeth precisely in (ii): the unconstrained
$\pi_i \propto a_i^{1/2}$ lets a working model that makes some $a_i$ very small
drive the corresponding $\pi_i$ below the relative floor $c_0 n/N$, at which
point the inverse-probability weights are no longer uniformly controlled and
the CLT of Proposition~\ref{thm:ht} is unavailable.

\begin{proof}[Proof]
Appendix~\ref{app:proofs}, Section~\ref{app:thm3}.
\end{proof}

\subsection{Design-matched power tuning}\label{sec:lamdesign}

PPI++ \citep{angelopoulos2023ppipp} replaces $\Delta = Y - f$ by $Y - \tau f$,
setting $\hat\tau_{\mathrm{pp}}$ equal to
$\widehat{\Cov}(Y,f)/\widehat{\Var}(f)$, the value minimising the i.i.d.\
variance. We write this power-tuning coefficient
$\tau$ throughout, where the i.i.d.\ PPI literature writes $\lambda$, a symbol
reserved here for the intensity of the labelling process
(Table~\ref{tab:notation1}). Under a design other than simple random sampling
that is the wrong quadratic to minimise, and the discrepancy is not second
order.

\begin{proposition}[Design-matched $\tau$]\label{thm:lamdesign}
Under stratified sampling with strata $\{U_h\}_{h \le H}$, $W_h = N_h/N$,
within-stratum SRSWOR of $n_h$ from $N_h$ drawn \emph{independently across
strata}, $\pi_h = n_h/N_h$ and
$S_h = S \cap U_h$, the power-tuned estimator
$\hat\theta(\tau) = \sum_h W_h \bar Y_{S_h}
 + \tau\,(\bar f_U - \sum_h W_h \bar f_{S_h})$
is design unbiased for every $\tau$ and has design variance
\begin{equation}\label{eq:vlam}
\Var_p\{\hat\theta(\tau)\} \;=\; \sum_{h} a_h\, S^2_{Y-\tau f,\,h},
\qquad a_h \;=\; \frac{W_h^2 (1-\pi_h)}{n_h},
\end{equation}
a strictly convex quadratic in $\tau$ whenever $S^2_{f,h} > 0$ for some $h$ with
$n_h < N_h$, minimised at
\begin{equation}\label{eq:lamdesign}
\tau^{*}_{\mathrm{dsn}}
 \;=\; \frac{\sum_h a_h\, S_{Yf,h}}{\sum_h a_h\, S^2_{f,h}} .
\end{equation}
Write $\tau^{*}_{\mathrm{pp}} = S_{Yf}/S^2_f$ for the PPI++ value formed from
the \emph{pooled} population moments. For $H = 1$,
$\tau^{*}_{\mathrm{dsn}} = \tau^{*}_{\mathrm{pp}}$; for $H > 1$ they differ in
general, the pooled value carrying a between-stratum component of $\Cov(Y,f)$
that the design has already removed.
Suppose further that $H$ is fixed, $\min_h n_h \to \infty$ with
$n_h \asymp n$, that
\begin{equation}\label{eq:dnorm}
n D_n \;=\; n \sum_h a_h S^2_{f,h} \;\longrightarrow\; D \;>\; 0 ,
\end{equation}
and that the within-stratum sample covariance and variance are design
consistent at rate $n^{-1/2}$ for their finite-population counterparts. Then
the plug-in $\hat\tau_{\mathrm{dsn}}$ satisfies
$\hat\tau_{\mathrm{dsn}} - \tau^{*}_{\mathrm{dsn}} = O_p(n^{-1/2})$ and
perturbs the estimator itself only by
\[
  \hat\theta(\hat\tau_{\mathrm{dsn}}) - \hat\theta(\tau^{*}_{\mathrm{dsn}})
  \;=\; (\hat\tau_{\mathrm{dsn}} - \tau^{*}_{\mathrm{dsn}})
  \Bigl(\bar f_U - \sum_h W_h \bar f_{S_h}\Bigr)
  \;=\; O_p(n^{-1}) \;=\; o_p(n^{-1/2}),
\]
so the two estimators are asymptotically equivalent at the $\sqrt{n}$ scale and
the first-order design validity of the interval is unchanged.
\end{proposition}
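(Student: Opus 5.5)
The argument reduces everything to the classical stratified SRSWOR variance applied to a single derived variable. Fix $\tau$ and set $Z_i(\tau) = Y_i - \tau f_i$. Since $\bar f_U$ is a known constant,
\[
\hat\theta(\tau) = \tau\,\bar f_U + \sum_h W_h \bar Z_{S_h}(\tau).
\]
Under within-stratum SRSWOR, $\E_p[\bar Z_{S_h}] = \bar Z_{U_h}$. Hence $\E_p\hat\theta(\tau) = \tau\bar f_U + \bar Y_U - \tau\bar f_U = \theta$ for every $\tau$.

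Independence across strata kills all cross-stratum covariances. Combined with the SRSWOR formula $\Var_p(\bar Z_{S_h}) = (1-\pi_h)S^2_{Z,h}/n_h$ (divisor $N_h-1$, as fixed in the conventions), this gives \eqref{eq:vlam}; it is equally the specialisation of \eqref{eq:syg} in Proposition~\ref{thm:ht} to this design.

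Expanding $S^2_{Y-\tau f,h} = S^2_{Y,h} - 2\tau S_{Yf,h} + \tau^2 S^2_{f,h}$ makes the variance the quadratic
\[
\sum_h a_h S^2_{Y,h} \;-\; 2\tau\sum_h a_h S_{Yf,h} \;+\; \tau^2\sum_h a_h S^2_{f,h}.
\]
The leading coefficient is strictly positive exactly when some stratum has $a_h>0$ (i.e.\ $n_h<N_h$) together with $S^2_{f,h}>0$. Setting the derivative to zero gives \eqref{eq:lamdesign}.

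For $H=1$ the common factor $a_1$ cancels and $\tau^*_{\mathrm{dsn}} = S_{Yf}/S^2_f$. For $H>1$, I would use the ANOVA decomposition
\[
(N-1)S_{Yf} = \sum_h (N_h-1)S_{Yf,h} + \sum_h N_h(\bar Y_{U_h}-\bar Y_U)(\bar f_{U_h}-\bar f_U),
\]
and the analogous decomposition of $S^2_f$. These show that the pooled ratio mixes in a between-stratum covariance term and weights the within-stratum terms by $N_h-1$ rather than $a_h$. A two-stratum example, with no within-stratum covariance but a nonzero between-stratum covariance, shows the two constants differ in general.

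For the asymptotic part, write $\hat\tau_{\mathrm{dsn}} = \hat A_n/\hat D_n$, where $\hat A_n = \sum_h a_h s_{Yf,h}$ and $\hat D_n = \sum_h a_h s^2_{f,h}$ are the plug-ins with sample moments. Because $H$ is fixed and $n a_h = O(1)$ (from $n_h\asymp n$ and $W_h\le1$), the assumed $n^{-1/2}$-consistency of the within-stratum moments gives $n\hat A_n - nA_n = O_p(n^{-1/2})$ and $n\hat D_n - nD_n = O_p(n^{-1/2})$. By \eqref{eq:dnorm}, $nD_n\to D>0$, so $n\hat D_n$ is bounded away from zero with probability tending to one. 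The delta method (a ratio expansion) then yields $\hat\tau_{\mathrm{dsn}}-\tau^*_{\mathrm{dsn}} = O_p(n^{-1/2})$.

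The perturbation identity is exact algebra, because $\hat\theta(\tau)$ is affine in $\tau$. Its second factor, $\bar f_U - \sum_h W_h\bar f_{S_h}$, has mean zero and variance $\sum_h a_h S^2_{f,h} = D_n = O(n^{-1})$, so by Chebyshev it is $O_p(n^{-1/2})$. The product of the two factors is therefore $O_p(n^{-1})$.

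The main obstacle is bookkeeping, not depth. One must keep $nA_n$ bounded, which follows from Cauchy--Schwarz: $|S_{Yf,h}|\le S_{Y,h}S_{f,h}$, with the moments bounded by \ref{a:moments}. One must also keep the denominator away from zero, which is exactly what \eqref{eq:dnorm} is for, and which justifies working with the rescaled quantities $n\hat A_n$ and $n\hat D_n$ rather than the vanishing $a_h$-weighted sums.

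Finally, first-order validity follows by Slutsky. The error is $o_p(n^{-1/2})$, while $\Var_p\hat\theta(\tau^*_{\mathrm{dsn}})\asymp n^{-1}$, so it is negligible on the standardised scale. This last step needs the variance to be nondegenerate, which I would state as implicit in the interval claim.
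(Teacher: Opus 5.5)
Your proposal is correct and follows essentially the paper's own route: you write $\hat\theta(\tau)$ as $\tau\bar f_U$ plus a stratified mean of $Y-\tau f$, apply the within-stratum SRSWOR variance with independence across strata removing the cross terms, and expand and differentiate the quadratic; the asymptotic equivalence then comes from the exact affine-in-$\tau$ identity, whose second factor is a mean-zero stratified map error with variance $D_n = O(n^{-1})$. Your ANOVA decomposition for the $H>1$ comparison and your explicit ratio expansion for $\hat\tau_{\mathrm{dsn}}$ only spell out details that the paper leaves to the text surrounding the proposition.
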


\paragraph{Conventions and what the rates do and do not say.}
$S_{Yf,h}$ and $S^2_{f,h}$ are the \emph{within-stratum} finite-population
covariance and variance, taken throughout with divisor $N_h - 1$:
$S^2_{Z,h} = (N_h-1)^{-1}\sum_{i \in U_h}(Z_i - \bar Z_{U_h})^2$ and
$S_{Yf,h}$ its bilinear analogue, the convention under which
$a_h = W_h^2(1-\pi_h)/n_h$ carries no further finite-population correction.
The normalisation in \eqref{eq:dnorm} matters because $a_h$ is itself of
order $n^{-1}$, so that $D_n \to 0$ and it is $nD_n$, not $D_n$, that must be
bounded away from zero; the design consistency assumed holds under
\ref{a:moments} with $\delta \ge 2$ applied to $(Y, f)$ within each stratum.
The between-stratum component separating $\tau^{*}_{\mathrm{pp}}$ from
$\tau^{*}_{\mathrm{dsn}}$ may be of either sign, so the pooled value is not
systematically the larger; Section~\ref{sec:empirical} contains a
case in which the two even differ in sign. They can also agree by accident
when $H > 1$ --- for example when $Y$ is an affine function of $f$, both then
equalling the slope, though that is sufficient and not necessary.

Finally, a distinction worth keeping. Because the criterion \eqref{eq:vlam} is
quadratic with a stationary point at $\tau^{*}_{\mathrm{dsn}}$, evaluating it
at $\hat\tau_{\mathrm{dsn}}$ exceeds its minimum by
$D_n(\hat\tau_{\mathrm{dsn}} - \tau^{*}_{\mathrm{dsn}})^2 = O_p(n^{-2})$,
which is $O_p(n^{-1})$ \emph{relative} to that minimum. That is a statement
about a criterion, not about $\Var_p\{\hat\theta(\hat\tau_{\mathrm{dsn}})\}$:
$\hat\tau_{\mathrm{dsn}}$ is computed from the same labels as $\hat\theta$, so
the two are dependent and the realised variance is not obtained by substituting
a random argument into \eqref{eq:vlam}. What the proposition asserts instead is
the estimator-level equivalence displayed above, which holds regardless of that
dependence because the second factor is the design error of a stratified mean
of the \emph{known} map and is itself $O_p(n^{-1/2})$.

\begin{proof}[Proof]
Immediate: $\hat\theta(\tau) = \sum_h W_h \overline{(Y - \tau f)}_{S_h}
+ \tau \bar f_U$ is, up to the constant $\tau \bar f_U$, a fixed linear
combination of within-stratum sample means, so Proposition~\ref{thm:ht}
applied within each stratum to $Y - \tau f$ gives each term's
variance, and independence of the $H$ within-stratum draws kills the
cross-covariances, leaving \eqref{eq:vlam}. (Independence across
strata is used only here, and only to drop those cross terms; without it the
same expansion holds with the added term
\[
2\sum_{h<h'}W_hW_{h'}\Cov_p\bigl\{\overline{(Y-\tau f)}_{S_h},
\overline{(Y-\tau f)}_{S_{h'}}\bigr\},
\]
and $\tau^{*}_{\mathrm{dsn}}$ changes accordingly.) Expanding
$S^2_{Y-\tau f,h} = S^2_{Y,h} - 2\tau S_{Yf,h} + \tau^2 S^2_{f,h}$ and
differentiating gives \eqref{eq:lamdesign}.
\end{proof}

The point is conceptual. $\hat\tau_{\mathrm{pp}}$ answers ``how much of the map's variation tracks the outcome?''; the
design-based question is ``how much of it \emph{within a stratum} does,
weighted by each stratum's contribution to the variance?'' When the design
stratifies on something the map predicts, the first counts signal the
design has already exploited, over-tunes, and can push the estimator past the
classical one (Section~\ref{sec:emp-woodland}).

\section{Covariate-dependent labelling}\label{sec:ipw}

Suppose labels arise not from a controlled design but from a
covariate-driven mechanism: each pixel is labelled independently with
probability $\pi_i = \pi(x_i; \alpha^{*}) = \mathrm{expit}(x_i^{\top}\alpha^{*})$,
where $x_i$ collects covariates observed for \emph{all} pixels --- road
distance, terrain, and, crucially, the map value $f_i$ itself. Selection
indicators $R_i$ are observed for all $i \in U$.

\paragraph{Two counts, kept apart.} Because selection is independent
Bernoulli, $n$ is throughout the \emph{reference budget}: the free parameter of
the design sequence, and the rate through which \ref{a:positivity} parametrises
$\pi_i = \varpi_N\, q_i(\eta)$, $\varpi_N := n/N$, with
$\eta = (a_0, \alpha_z)$ the intercept-and-slope pair there. The
\emph{expected} label count is $n_N = \sum_{i \in U}\pi_i
= n \cdot N^{-1}\sum_i q_i(\eta^{*})$, and the realised count differs from it by
$O_p(n_N^{1/2}) = o_p(n_N)$. The uniform bounds $q_i(\eta) \in [q_L, q_U]$ give
$n_N \asymp n$, which is all any $n$-rate statement below uses; $n_N = n$
exactly under the optional normalisation $N^{-1}\sum_{i\in U}q_i(\eta^{*}) = 1$,
which one may impose at the truth by shifting $a_0$, at the cost of making
$\eta^{*}_\nu$ a sequence for bookkeeping reasons. We do not impose it, and
nothing turns on the choice: every identity in
Section~\ref{app:thm-ipw} uses only $\varpi_N = n/N$ and $\pi_i = \varpi_N q_i$,
never $n_N = n$. Every order symbol there is uniform over the compact
$\mathcal{K}$ of \ref{a:positivity}, so a drifting
$\eta^{*}_\nu \in \mathcal{K}$ needs the
same triangular-array argument as a fixed $\eta^{*}$, with no convergence to
any $\eta_0$. Since $\pi_i \asymp n/N$, the logistic intercept drifts,
$\alpha_0 = \log(n/N) + a_0$ with $a_0$ fixed, making $\alpha^{*}_\nu$ a
sequence and the proposition below triangular-array.

\begin{proposition}[Estimated propensities]\label{thm:ipw}
Assume the relative positivity of \ref{a:positivity} --- $\pi_i \ge c_0 n/N$
and $\pi_i \le \bar\pi < 1$, \emph{not} an absolute floor $\pi_{\min}$ ---
together with \ref{a:moments}, \ref{a:info}, and correct specification of
$\pi(\cdot;\alpha)$. Then, along any design
sequence with $n/N \to \varpi \in [0,1)$, the H\'ajek-weighted PPI estimator
with $\hat\alpha$ the
population-level logistic maximum-likelihood estimator satisfies
$\hat\theta - \theta = O_p(n^{-1/2})$ and
$\sqrt{n}\,(\hat\theta - \theta) \Rightarrow N(0, V)$,
$V = B_{cc} - A_{c\alpha}A_{\alpha\alpha}^{-1}A_{c\alpha}^{\top}$ as in
\eqref{eq:ipw-sandwich}, the rate being in $n$ and not in $N$. Normal intervals
attain at least nominal coverage asymptotically, and nominal coverage under
the corrected sandwich of \eqref{eq:bhat}. Moreover:
(i) the variance treating $\hat\pi$ as known is conservative --- it is
$B_{cc}$, from which estimating $\alpha$ subtracts the nonnegative form
$A_{c\alpha}A_{\alpha\alpha}^{-1}A_{c\alpha}^{\top}$
\citep{henmi2004, robins1994};
(ii) because selection is independent across pixels,
$\pi_{ij} = \pi_i\pi_j$ and \emph{no spatial adjustment of the variance is
needed regardless of the spatial correlation of the rectifier}.
\end{proposition}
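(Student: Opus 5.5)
I will treat the Hájek-weighted PPI estimator with fitted propensities as a stacked M-estimator in two blocks: the logistic score for $\alpha$, and the Hájek estimating equation for the rectifier mean. Then I will linearise around the true $(\alpha^{*},\bar\Delta_U)$. Concretely, with the drifting intercept factored out as $\alpha_0=\log\varpi_N+a_0$, set
\[
\Psi_\alpha(\eta)=\sum_{i\in U}\{R_i-\pi(x_i;\alpha)\}x_i ,\qquad
\Psi_c(c,\eta)=\sum_{i\in U}\frac{R_i}{\pi(x_i;\alpha)}(\Delta_i-c),
\]
so that $\hat c=\hat{\bar\Delta}_{\mathrm{H\acute aj}}$ and $\hat\theta=\bar f_U+\hat c$. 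Both functions are sums of independent Bernoulli terms over the fixed list. The only randomness is $\{R_i\}$.

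\textbf{Normalisation.} Under \ref{a:positivity}, $\Var_p(\Psi_\alpha)$ is of order $\sum_i\pi_i\asymp n$, so I normalise $\Psi_\alpha$ by $n^{-1/2}$. For $\Psi_c$, the weights are $\pi_i^{-1}\asymp N/n$, so I normalise by $(N/n)^{-1}N^{-1}n^{1/2}$. The first step is a Lindeberg--Feller CLT for the triangular array of independent summands, jointly for the two scores via Cramér--Wold. The Lindeberg condition follows from the $2+\delta$ moment bound of \ref{a:moments}, together with $\pi_i\ge c_0n/N$ (which bounds the weights uniformly in relative terms) and $\pi_i\le\bar\pi<1$ (which keeps the Bernoulli variances nondegenerate). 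This yields the block covariance with entries $B_{cc}$, $B_{c\alpha}$ and $B_{\alpha\alpha}$.

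\textbf{Consistency of $\hat\eta$.} Next I show $\hat\eta-\eta^{*}=O_p(n^{-1/2})$ uniformly over the compact $\mathcal K$. The normalised log-likelihood is concave, and its Hessian is $n^{-1}\sum_i\pi_i(1-\pi_i)x_ix_i^{\top}$. This Hessian is bounded below by \ref{a:info}, and it converges to its mean because the deviation has variance $O(n^{-1})$. Concavity plus a convexity lemma (Hjort--Pollard) then gives the rate without any uniform law of large numbers beyond pointwise convergence.

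\textbf{Expansion and sandwich.} A Taylor expansion of $\Psi_c$ in $\eta$ gives the Jacobian $A_{c\alpha}=-\lim n^{-1/2}\cdots\sum_i R_i\pi_i^{-1}(1-\pi_i)(\Delta_i-\bar\Delta_U)x_i^{\top}$, with the appropriate scaling, and this is replaced by its expectation by the same variance argument. Under correct specification the logistic score satisfies the information identity $A_{\alpha\alpha}=B_{\alpha\alpha}$, and the cross-covariance equals the Jacobian, $B_{c\alpha}=A_{c\alpha}$. This holds because $\Cov(R_i/\pi_i,R_i)=1-\pi_i$, which is the derivative of $1/\pi_i$ with respect to $\alpha$ times $-\pi_i$; that is the Henmi--Eguchi/Robins mechanism. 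Solving the stacked linearisation gives
\[
\sqrt n(\hat\theta-\theta)=Z_c-A_{c\alpha}A_{\alpha\alpha}^{-1}Z_\alpha+o_p(1),
\]
with variance $B_{cc}-2A_{c\alpha}A^{-1}_{\alpha\alpha}A^{\top}_{c\alpha}+A_{c\alpha}A^{-1}_{\alpha\alpha}A^{\top}_{c\alpha}=V$. This proves the asymptotic normality and part (i) at once: the known-$\pi$ variance $B_{cc}$ exceeds $V$ by a positive semidefinite form. Hence intervals built from $B_{cc}$ are conservative, and those built from the plug-in corrected sandwich \eqref{eq:bhat} are exact. The plug-in is ratio-consistent by the same moment-based LLNs, and \ref{a:info} ensures $V$ is bounded away from zero.

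\textbf{Part (ii) and the main obstacle.} Part (ii) is immediate: independence of the $R_i$ makes every cross term in the variance of $\Psi_c$ vanish, so the spatial arrangement of $\Delta$ never enters. The main obstacle I expect is bookkeeping for the triangular array with a drifting intercept. Every remainder in the Taylor step must be $o_p(1)$ after the mixed scaling $N/n$, uniformly in $\eta\in\mathcal K$. I would handle this by writing $\pi_i=\varpi_Nq_i(\eta)$, so that $\partial\pi_i^{-1}/\partial\eta=-\pi_i^{-1}(1-\pi_i)x_i$ keeps the factor $\varpi_N^{-1}$. Second derivatives are then bounded by $\varpi_N^{-1}\|x_i\|^2$ times constants from $[q_L,q_U]$, which reduces everything to moment bounds on $N^{-1}\sum_i|\Delta_i|\|x_i\|^2$ under \ref{a:moments}.
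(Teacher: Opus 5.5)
Your proposal follows the paper's proof essentially step for step:
\begin{itemize}
\item the same two-block stacked system, with the drifting intercept written as $\pi_i=\varpi_N q_i(\eta)$;
\item concavity plus the Hjort--Pollard lemma for $\hat\eta$;
\item a Lyapunov/Lindeberg triangular-array CLT;
\item the block lower-triangular Jacobian, whose identities $B_{\alpha\alpha}=A_{\alpha\alpha}$ and $B_{c\alpha}=A_{c\alpha}$ give $V=B_{cc}-A_{c\alpha}A_{\alpha\alpha}^{-1}A_{c\alpha}^{\top}$ and part (i) together;
\item independence of the $R_i$ for part (ii).
\end{itemize}

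Two slips need fixing.

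First, the $c$-score normalisation is wrong. It should be scaled by $n^{1/2}/N$; the paper normalises that block by $N^{-1}$ and then multiplies by $\sqrt n$. Your factor $(N/n)^{-1}N^{-1}n^{1/2}=n^{3/2}/N^{2}$ is too small by $n/N$, so its limiting variance is $\varpi^{2}B_{cc}$, which is degenerate when $\varpi=0$.

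Second, under the canonical logistic link the Hessian $n^{-1}\sum_i\pi_i(1-\pi_i)x_ix_i^{\top}$ is nonrandom, so there is nothing to show about its deviation. What needs the variance argument is the criterion itself. Its $\eta$-free term $n^{-1}(\log\varpi_N)\sum_i R_i$ fluctuates at order $|\log\varpi_N|\,n^{-1/2}$, so you must difference it out before claiming pointwise convergence. The localised Hjort--Pollard process does this automatically, and the paper does it explicitly.
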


The sandwich variance from the stacked estimating equations for
$\varphi = (\alpha, c)$ separates into three claims proved in
Section~\ref{app:thm-ipw}: the meat admits an
\emph{exactly} design-unbiased estimator at the true parameter --- the
unweighted sample sum in which each selected term carries the factor
$(1 - \pi_i)$; the naive plug-in omitting that factor is
conservative by a relative amount of order $n/N$, negligible when the
sampling fraction vanishes; and substituting $\hat\varphi$ for $\varphi^{*}$ is
consistent.

\begin{proof}[Proof]
Appendix~\ref{app:proofs}, Section~\ref{app:thm-ipw}.
\end{proof}

\begin{proposition}[Doubly robust PPI]\label{thm:dr}
Let $\hat m(x)$ be an outcome model for $\E_\xi[\Delta \mid x]$ and let
$\hat\pi_i = \pi(x_i;\hat\alpha)$ be the \emph{fitted} propensity, obtained
from a working model that may be misspecified. Write
$\tilde\pi_i = \pi(x_i;\alpha^{\dagger})$ for its pseudo-true value, where
$\alpha^{\dagger}$ is the probability limit of $\hat\alpha$ under that working
model; $\tilde\pi = \pi$ exactly when the model is correctly specified, and
neither is the true $\pi_i$ of Section~\ref{sec:ipw} otherwise. Combine
$\hat m$ and $\hat\pi$ in the self-normalised AIPW form
\begin{equation}\label{eq:aipw}
\hat\theta_{\mathrm{DR}}
= \bar f_U + \frac{1}{N}\sum_{i\in U} \hat m(x_i)
+ \hat c ,
\qquad
\hat c = \frac{\sum_{i\in S} w_i\{\Delta_i - \hat m(x_i)\}}
              {\sum_{i\in S} w_i} ,
\qquad w_i = 1/\hat\pi_i .
\end{equation}
If either the
propensity or the outcome model is correctly specified, the AIPW estimating
equation is unbiased for the superpopulation target; consistency of
$\hat\theta_{\mathrm{DR}}$ then follows under an increasing-domain or ergodic
law of large numbers for the score, with sandwich-variance inference from the
jointly stacked equations. When both models are correct, the
leading variance term involves the outcome-model residuals rather than the
raw rectifier and can only decrease in $\xi$-expectation.
\end{proposition}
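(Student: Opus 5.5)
I would prove the double-robustness statement through the stacked estimating equation in the unknown centre $c$, the quantity the ratio $\hat c$ in \eqref{eq:aipw} solves. The first step is to write the AIPW score as $\hat\Psi(c) = N^{-1}\sum_{i\in U} R_i w_i\{\Delta_i - \hat m(x_i) - c\}$. The equation $\hat\Psi(c)=0$ gives exactly $\hat c$. The target is then $\theta = \bar f_U + N^{-1}\sum_i m(x_i) + \bar u_U$, with $u_i = \Delta_i - m(x_i)$.

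Next I replace $\hat\alpha$ by $\alpha^{\dagger}$ and $\hat m$ by its limit $m^{\dagger}$. I take the $\xi\times p$ expectation of the score at these pseudo-true values, splitting into the two arms.

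\emph{Propensity arm.} If the propensity model is correct, then $\tilde\pi = \pi$, so $\E_p[R_i/\tilde\pi_i] = 1$ for every unit. The score then averages $\Delta_i - m^{\dagger}(x_i) - c$ over $U$ with no weighting, and its root reproduces $\bar\Delta_U - \overline{m^{\dagger}}_U$ whatever $m^{\dagger}$ is. This arm is even design-unbiased conditionally on the population.

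\emph{Outcome arm.} If the outcome model is correct, then $\E_\xi[u_i\mid x_i]=0$. Selection depends only on $x$, so $R_i$ is independent of $u_i$ given $x_i$. Hence $\E_\xi\E_p[R_i w_i u_i] = 0$ for any weights $w_i$, and the root of the expected score is $c = 0$, matching $\E_\xi[\bar u_U]=0$. In both cases the estimating equation is unbiased for the superpopulation target, which proves the identification claim.

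For consistency, I would write $\hat\theta_{\mathrm{DR}} - \theta$ as the solved root minus target. A mean-value expansion in $(\hat\alpha, \hat m)$ around $(\alpha^{\dagger}, m^{\dagger})$ handles the plug-ins, with derivative blocks as in the sandwich of Proposition~\ref{thm:ipw}, giving $O_p(n^{-1/2})$ corrections. What remains is the difference between the $\tilde\pi$-weighted average of $u$ and its limit, and I would take $\E_p$ of it first. The numerator becomes $N^{-1}\sum_i h_i u_i$ and the denominator $N^{-1}\sum_i h_i$, so the conditional centre is $\sum_i h_i u_i/\sum_i h_i$, leaving a remainder $\sum_i v_i u_i/\sum_i h_i$. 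Here I invoke the increasing-domain or ergodic LLN: $N^{-1}\sum_i v_i u_i \to 0$ and $\bar u_U \to 0$ in $\xi$-probability, which gives consistency. Design fluctuation around this centre is $O_p(n^{-1/2})$ by the Bernoulli CLT conditions already verified for Proposition~\ref{thm:ipw}, using \ref{a:positivity} and \ref{a:moments}. Sandwich inference then follows by stacking the $\alpha$-score, the outcome-model estimating equation, and the $c$-score, and applying standard M-estimation.

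For the last claim, when both models are correct, the influence function of $\hat c$ is $R_i u_i/\pi_i$ with $u_i = \Delta_i - m(x_i)$, in place of the rectifier $\Delta_i - \bar\Delta$ in the Hájek case. Under $\xi$ with $m(x)=\E_\xi[\Delta\mid x]$, we have $\E_\xi[(\Delta_i - c)^2\mid x_i] = \E_\xi[u_i^2\mid x_i] + (m(x_i)-c)^2 \ge \E_\xi[u_i^2\mid x_i]$. Weighting by $(1-\pi_i)/\pi_i$ and summing over $U$ shows that the $\xi$-expected leading variance cannot increase.

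The main obstacle I expect is the consistency step, because conditioning on the realised population leaves $\bar u_U$ and $\sum_i v_i u_i/N$ as fixed numbers. Their vanishing genuinely needs the added LLN for the spatially dependent field $u$; this is the reason the proposition states only superpopulation unbiasedness unconditionally. I would state the LLN as an explicit assumption (mixing or increasing domain) and not try to derive it.
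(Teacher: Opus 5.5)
Your proposal is correct and follows essentially the paper's proof in Appendix~\ref{app:dr}. The shared steps are: the stacked AIPW equations; exact design centring of the $c$-equation when the propensity is correct; and, when only the outcome model is correct, the design-expected root $\sum_i h_i u_i/\sum_i h_i$, leaving the gap $\sum_i v_i u_i/\sum_i h_i$, which has $\xi$-mean zero and vanishes only under an added LLN, plus the residual-for-rectifier swap in the meat for the efficiency claim. One small overstatement: with $\hat m$ fitted on the labels and the propensity misspecified, your plug-in correction is $O_p(n^{-1/2})$ only under the extra nuisance-rate condition \eqref{eq:nuisrate}, because conditionally $\hat\beta$ targets a selection-weighted projection; consistency needs only $o_p(1)$, and the LLN you already assume supplies it.
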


Three comments. The self-normalisation is not cosmetic: the
Horvitz--Thompson variant, which replaces $\hat c$ by
$N^{-1}\sum_{i \in S} w_i\{\Delta_i - \hat m(x_i)\}$, lacks the
property Theorem~\ref{prop:asym} turns on --- a propensity wrong by a
constant factor, $\tilde\pi = c\pi$, leaves $\hat c$ unchanged, while
the Horvitz--Thompson version is biased by $(c^{-1}-1)$ times the residual
mean, its weights being $1/(c\pi_i)$. Everything below --- the estimating
equation of Appendix~\ref{app:dr}, the gap $G_U$, both corollaries --- is
stated for the self-normalised \eqref{eq:aipw}. Second, the step from
unbiasedness to consistency is not a formality: unbiasedness
concerns $\E_\xi$ at one site, while consistency
requires the spatial average of the scores to converge, which under infill
asymptotics on a fixed domain it need not. Third, the efficiency claim is a
superpopulation statement and not a guarantee about the design variance at any
one realised population.

\paragraph{What the next result is for, and what it is not.} Proposition~\ref{thm:dr}
has just said what the doubly robust literature says: either arm suffices. The
question this section turns on is whether that remains true when the estimand
is a census parameter of the population in hand rather than a superpopulation
mean. It does not, and the two arms are not interchangeable.

Three things should be said before the statement, so that it is neither
over- nor under-read. First, a conditional remainder of this kind is
\emph{not} new: it is implicit in the algebra of the design-model doubly robust
literature \citep{kimhaziza2014dr, yang2020dr}, which works on finite
populations as we do. What that literature does not do is ask how large the
remainder is, and the answer turns out to be governed by a quantity that has no
counterpart outside the spatial setting. Second, the result says nothing
against double robustness under a superpopulation target, where the remainder
has mean zero and the classical claim stands unaltered; the disagreement is
about what one is conditioning on, not about who is right. Third, what makes
the remainder worth a theorem rather than a footnote is that under an
independent or exchangeable residual field it is invisible --- which is why the
i.i.d.\ PPI literature has had no reason to meet it --- and that spatial
dependence is exactly what makes it bind at ordinary label counts.

\begin{theorem}[Asymmetry of double robustness for census parameters]
\label{prop:asym}
Suppose the propensity is misspecified, so that its pseudo-true value differs
from the truth ($\tilde\pi \neq \pi$), but the outcome
model is correct and \emph{supplied} --- if instead it is fitted on the labels,
add the nuisance-rate condition \eqref{eq:nuisrate} of Appendix~\ref{app:dr}
--- and the residual field $u_i = \Delta_i - m(x_i)$ has mean
zero and variance $\sigma_u^2$ under the superpopulation law $\xi$, with
correlation function $\rho_u$ and mean spatial correlation
$\bar r_U = N^{-2}\sum_{i,j \in U}\rho_u(s_i,s_j) =: N_{\mathrm{eff}}^{-1}$.
Then, conditionally on the realised population,
\[
\hat\theta_{\mathrm{DR}} - \theta
\;=\; O_p(n^{-1/2})
\;+\; G_U,
\qquad
G_U = \frac{\sum_{i\in U} (\pi_i/\tilde\pi_i)\,u_i}{\sum_{i\in U} \pi_i/\tilde\pi_i} - \bar u_U .
\]
Writing $h_i = \pi_i/\tilde\pi_i$ and $v_i = h_i - \bar h$, the gap collapses to
$G_U = (N\bar h)^{-1}\sum_{i\in U} v_i u_i$, so that $\E_\xi[G_U] = 0$ and
\begin{equation}\label{eq:neffv}
\Var_\xi(G_U) \;=\; \frac{\sigma_u^2}{N_{\mathrm{eff},v}},
\qquad
N_{\mathrm{eff},v} \;:=\; \frac{(N\bar h)^2}{v^{\top} P v},
\qquad P = [\rho_u(s_i,s_j)]_{i,j\in U},
\end{equation}
\emph{exactly}, for every finite $N$ and with no order symbol. Under the
ratio-stability condition of \ref{a:hratio} neither $v$ nor $P$ depends on $n$,
so neither does $G_U$. If the propensity model is correctly specified, so that
$\tilde\pi = \pi$, then $v \equiv 0$, hence
$G_U \equiv 0$ and design consistency holds conditionally.

Two coverage statements follow, on different probability spaces.
\emph{Conditionally} on a realised population,
\[
\Prob_p(\theta \in \mathrm{CI} \mid U)
= \Phi(z_{1-\alpha/2} - r_n) - \Phi(-z_{1-\alpha/2} - r_n) + o(1),
\qquad r_n = G_U/\mathrm{se}(n),
\]
with $z_{1-\alpha/2}$ the standard normal quantile at nominal level $1-\alpha$
and $\mathrm{se}(n)$ the design standard error of $\hat\theta$; this tends to
$0$ along a given population sequence if and only if
$\sqrt{n}\,|G_U| \to \infty$. \emph{Averaged} over $\xi$, under infill-type
asymptotics with $N_{\mathrm{eff},v}$ bounded, the anti-concentration of
\ref{a:gauss} gives
$\E_\xi[\Prob_p(\theta \in \mathrm{CI} \mid U)] \to 0$; under the
\emph{regular} increasing-domain regime --- $n/N \to \varpi \in (0,1)$,
$N_{\mathrm{eff},v} \asymp N$, and
$\Var_\xi(G_U)/\Var_p(\hat c) \to \mathcal{R}^2 \in (0,\infty)$ --- it is
instead deflated by the constant factor computed in Appendix~\ref{app:dr}.
\end{theorem}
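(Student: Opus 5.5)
The argument splits into an algebraic part, which decomposes the estimator, and a probabilistic part, which computes the variance of the gap and derives the coverage statements. I will first treat the outcome model as supplied, $\hat m = m$. Since $\theta = \bar f_U + \bar\Delta_U = \bar f_U + N^{-1}\sum_i m(x_i) + \bar u_U$, the estimator \eqref{eq:aipw} satisfies
\[
\hat\theta_{\mathrm{DR}} - \theta = \hat c - \bar u_U,
\qquad
\hat c = \frac{\sum_{i\in U} R_i u_i/\hat\pi_i}{\sum_{i\in U} R_i/\hat\pi_i}.
\]
The next step replaces $\hat\pi$ by its pseudo-true value $\tilde\pi$. The misspecified-model expansion of $\hat\alpha$ around $\alpha^{\dagger}$ from Section~\ref{app:thm-ipw} is the pseudo-true analogue of the stacked-equation argument behind Proposition~\ref{thm:ipw}. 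Using it, I will show that the ratio built with $\hat\pi$ differs from the one built with $\tilde\pi$ by $O_p(n^{-1/2})$. This is a delta-method step. The relative positivity of \ref{a:positivity} keeps the weights uniformly controlled, so the numerator and denominator each concentrate at rate $n^{-1/2}$ relative to their design expectations.

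Under Bernoulli selection, $\E_p[R_i]=\pi_i$. The numerator and denominator therefore have design means $\sum_i h_i u_i$ and $\sum_i h_i$, with $h_i=\pi_i/\tilde\pi_i$. The ratio's linearisation gives $\hat c = \sum_i h_iu_i/\sum_i h_i + O_p(n^{-1/2})$, and this yields the displayed decomposition with $G_U$. When $\hat m$ is fitted on the labels, the extra term is $N^{-1}\sum_i(\hat m - m)(x_i)$ minus its weighted sample counterpart. This is a product of the nuisance error and a weight discrepancy, and \eqref{eq:nuisrate} is exactly what makes it $o_p(n^{-1/2})$. I will cite that condition rather than reprove it.

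The variance identity is then pure algebra. Writing $\sum_i h_iu_i = \bar h\sum_i u_i + \sum_i v_iu_i$ gives $G_U = (N\bar h)^{-1}\sum_i v_iu_i$. Here $h$ and $v$ are fixed and deterministic, being functions of the covariates alone, so under $\xi$ the gap is a fixed linear form in $u$. Hence $\E_\xi G_U=0$ and $\Var_\xi G_U = \sigma_u^2 v^\top P v/(N\bar h)^2$, exactly and for every finite $N$. Under \ref{a:hratio} the ratio $h$ is $n$-free, because the drifting intercept $\log\varpi_N$ cancels in $\pi/\tilde\pi$; so $G_U$ is $n$-free. Correct specification gives $h\equiv 1$, so $v\equiv0$ and $G_U \equiv 0$.

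For coverage, conditionally on $U$ the design CLT for $\hat c$ from Proposition~\ref{thm:ipw} applies with $\tilde\pi$ in place of $\pi$. It gives $(\hat\theta_{\mathrm{DR}}-\theta - G_U)/\mathrm{se}(n)\Rightarrow N(0,1)$, and the coverage formula follows by shifting the centre by $r_n$. That coverage tends to $0$ iff $|r_n|\to\infty$, and since $\mathrm{se}(n)\asymp n^{-1/2}$ this is equivalent to $\sqrt n|G_U|\to\infty$.

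For the averaged statements, the infill case takes $N_{\mathrm{eff},v}$ bounded. Then $G_U$ has nondegenerate $\xi$-spread, and \ref{a:gauss} gives $\Prob_\xi(\sqrt n|G_U|\le M)\to0$ for every fixed $M$. Because coverage is bounded, dominated convergence then sends its expectation to $0$. In the regular regime, $r_n$ converges in $\xi$-law to $\mathcal R Z$ with $Z$ standard normal, which gives the deflated limit $\E\{\Phi(z-\mathcal R Z)-\Phi(-z-\mathcal R Z)\}$.

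The main obstacle is making the replacement $\hat\pi\to\tilde\pi$ uniform enough to hold conditionally. With misspecification, $\hat\alpha$ converges to a sequence $\alpha^\dagger_\nu$, and the intercept is drifting. I must therefore check that the score of the working logistic model has a nonsingular limit information, as \ref{a:info} requires at $\alpha^\dagger$, and that the weight derivatives are uniformly bounded over $\mathcal K$. I would handle this first by reparametrising in $\eta$ as in Section~\ref{app:thm-ipw}, so that all bounds are uniform over the compact set $\mathcal{K}$.
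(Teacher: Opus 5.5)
Your route is the paper's. You reduce to $\hat c-\bar u_U$, pass from $\hat\pi$ to the pseudo-true $\tilde\pi$ at cost $O_p(n^{-1/2})$ via the misspecified reading of Appendix~\ref{app:thm-ipw}, linearise the ratio about $\sum_i h_iu_i/\sum_i h_i$, use $\sum_i v_i=0$ for the exact quadratic form, and read coverage off a shifted normal. Your regular-regime limit $\E\{\Phi(z-\mathcal R Z)-\Phi(-z-\mathcal R Z)\}$ equals the paper's $2\Phi(z/\sqrt{1+\mathcal R^2})-1$.

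The fitted-outcome-model step, however, is wrong. After passing to $\tilde\pi$, the extra term's leading part is $-(N\bar h)^{-1}\sum_i v_i\{m(x_i;\hat\beta)-m(x_i;\beta^{*})\}\approx-(\hat\beta-\beta^{*})^{\top}(N\bar h)^{-1}\sum_i v_i\,\partial_\beta m_i$. The second factor is a fixed population quantity of order one, not a vanishing one. The ``weight discrepancy'' in your product is $v$, which misspecification makes permanently nonzero. Nothing orthogonalises it to $\partial_\beta m$: the pseudo-true score equation gives only $\sum_i\tilde\pi_i(h_i-1)x_i=0$, a $\tilde\pi$-weighted orthogonality to the propensity's own covariates. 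In scenario S1, for instance, $v$ is driven by the omitted $f$, which is a regressor of $m$. The quadratic-form bound does not shrink it either, because $\partial_\beta m$ is not a mean-zero field. The product-rate heuristic needs both nuisance errors to vanish, and here only one does, so \eqref{eq:nuisrate} delivers $O_p(n^{-1/2})$, not $o_p(n^{-1/2})$. The decomposition survives, since the paper simply absorbs the term into the design error. But the term is first order, so in the fitted case the conditional CLT must come from the jointly stacked $(\alpha,\beta,c)$ system, with $\mathrm{se}(n)$ carrying the $\hat\beta$ fluctuation, not from the CLT for $\hat c$ alone.

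The $\xi$-averaged statements also skip a condition. Anti-concentration gives $\sqrt n|G_U|\to\infty$ in $\xi$-probability, hence $\Phi(z-r_n)-\Phi(-z-r_n)\to 0$ in $\xi$-probability. But the conditional coverage is that quantity plus an error $\varepsilon_n(U)$, and the design CLT you invoke sends $\varepsilon_n(U)\to 0$ only along each fixed population sequence. Bounded convergence needs $\varepsilon_n(U)\to0$ in $\xi$-probability, i.e.\ a design CLT holding uniformly over realised populations. That is the second, separate requirement of \ref{a:gauss}, and your regular-regime limit needs it too.

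Finally, under a logistic link the drifting intercept does not cancel exactly in $\pi/\tilde\pi$. It cancels only to relative error $O(n/N)$, and $\eta^{\dagger}_\nu$ may itself move with $n$. So take the $n$-freeness of $h$ from the ratio-stability assumption of \ref{a:hratio} rather than deriving it.
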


The gap is built from $\tilde\pi$, not $\hat\pi$: $h$ and $v$ are fixed
population quantities, and the sampling fluctuation of $\hat\alpha$ about
$\alpha^{\dagger}$ is part of the $O_p(n^{-1/2})$ term, as
Appendix~\ref{app:dr} sets out. Fitting a \emph{correctly specified} propensity
therefore costs nothing at this order --- $\tilde\pi = \pi$, so
$v \equiv 0$ and $G_U \equiv 0$ however $\hat\alpha$ happens to land. What
creates the gap is misspecification of the model, not estimation of it.

\paragraph{Which effective size, and when the gap binds.}
$N_{\mathrm{eff},v}$, not $N_{\mathrm{eff}}$, is the quantity that governs the
gap; the alignment condition of \ref{a:hratio} is exactly what makes the two of
the same order, $N_{\mathrm{eff},v} \asymp N_{\mathrm{eff}}$, so that
$N_{\mathrm{eff}}$ may be read as the patch count the weights see --- it is $N$
under independence and $O(1)$ under strong long-range dependence. The i.i.d.\
and exchangeable fields below are the leading \emph{exception} to that
condition rather than instances of it. Ratio stability is what keeps
$h = \pi/\tilde\pi$ fixed as the label budget is grown; $P$ is a property of
the fixed population and free of $n$ regardless.

Bounded $N_{\mathrm{eff},v}$ does not by itself send the conditional coverage
to zero, that being an individual-sequence condition. Averaging over $\xi$
removes the need for one, and increasing domain alone is not enough for the
constant-deflation statement either: if $N_{\mathrm{eff},v} = o(N)$ while
$n \asymp N$, the gap dominates the sampling error and coverage falls to zero
rather than to a constant.

If the \emph{residual field} is i.i.d.\ under $\xi$ then $P = I$; if it is
merely exchangeable, $P = (1-\rho)I + \rho\mathbf{1}\mathbf{1}^{\top}$ and,
because $v^{\top}\mathbf{1} = 0$ by construction,
$v^{\top}Pv = (1-\rho)\|v\|^2$. Either
way $N_{\mathrm{eff},v} \asymp N\bar h^2/\overline{v^2} \asymp N$ --- for fixed
$\rho > 0$ the ordinary $N_{\mathrm{eff}} = \bar r_U^{-1}$ is instead $O(1)$,
which is precisely why the exchangeable case sits outside \ref{a:hratio} ---
so the gap is
$O_p(N^{-1/2})$ and negligible relative to the sampling error whenever
$n/N \to 0$. The \emph{spatial} condition is on $u$ alone: $P$ is the
correlation matrix of the residual field, not of the labelling indicators, so
no dependence among the labels themselves enters \eqref{eq:neffv}. The
selection mechanism does enter, through the weight-ratio field $v$.

Spatial correlation in $u$ can inflate $v^{\top}Pv$ far above $\|v\|^2$ and so
shrink $N_{\mathrm{eff},v}$ far below $N$ --- the content of the theorem. It
need not:
$v^{\top}Pv = \|v\|^2 + \sum_{i\ne j}v_iv_j\rho_u(s_i,s_j)$, and since
$v^{\top}\mathbf{1} = 0$ forces $v$ to take both signs, $\rho_u \ge 0$ alone
does not sign the cross terms --- a $v$ alternating in sign at the
correlation scale of $u$ makes them negative. Inflation requires
$v$ and $u$ to be \emph{coherent} at the lags where $\rho_u$ is appreciable ---
the alignment condition of \ref{a:hratio}, made quantitative in
Appendix~\ref{app:cor}: $v^{\top}Pv/\|v\|^2$ is the lag-domain inner product of
$\rho_u$ with the autocorrelation function of $v$.

\paragraph{What the theorem adds.} Set against the three lines of work closest
to it, the result occupies a cell none of them does. The design-based,
finite-population reading of PPI is not ours alone --- \citet{datta2025ipw} and
\citet{waldetoft2025design} take the same target contemporaneously and
independently, and the survey-sampling apparatus beneath it is far older --- but
neither carries a spatial index. The spatial dependence of the labelled
residual \emph{has} been treated, by \citet{salerno2026spatial}, but in a
superpopulation frame in which it cannot interact with the estimand, because
there is no realised population to condition on. So what is new here is not the
existence of $G_U$, nor the design-based frame, nor the observation that
spatial labels are dependent. It is \eqref{eq:neffv}: an exact finite-$N$
identity locating the size of the remainder in a single quantity,
$N_{\mathrm{eff},v}$, together with the demonstration that this quantity is
where the spatial arrangement enters and where it can be made to bind. The
corollaries below say that no functional escapes it, and
Section~\ref{sec:sims-real} says that a real population supplies it.

Theorem~\ref{prop:asym} is stated for a census mean, but the canonical PPI
targets are regression and generalised-linear coefficients
\citep{angelopoulos2023ppi, angelopoulos2023ppipp}. The extension is immediate,
and shows that the mean is not a special case.

\begin{corollary}[Census $M$-estimands]\label{cor:mest}
Let $\beta_U$ solve $\sum_{i \in U} \psi(Y_i, \tilde x_i; \beta) = 0$ for an
estimating function $\psi$ with population Jacobian
$J_U = -N^{-1}\sum_{i\in U} \partial_\beta \psi_i(\beta_U)$ nonsingular, where
$\tilde x_i$ is observed on all of $U$. Let $u_{ik}$ be the residual of the
$k$-th component of the rectified estimating function, evaluated at $\beta_U$,
about its outcome model, write
$\sigma^2_{u,k} = N^{-1}\sum_{i\in U}\Var_\xi(u_{ik})$ and define the
componentwise effective size as the variance ratio
\[
N_{\mathrm{eff},k}
\;=\; \sigma^2_{u,k} \big/ \Var_\xi\bigl(G^{\psi}_{U,k}\bigr) ,
\qquad
G_{U,k}^{\psi}
= \frac{\sum_{i\in U} (\pi_i/\tilde\pi_i)\, u_{ik}}{\sum_{i\in U} \pi_i/\tilde\pi_i}
  - \bar u_{\cdot k},
\]
$\bar u_{\cdot k} = N^{-1}\sum_{i \in U} u_{ik}$. Assume the conditions of
Theorem~\ref{prop:asym} and \ref{a:mest} --- the latter including the uniform
stochastic derivative condition \eqref{eq:unifderiv} --- and in addition the
increasing-domain requirement
\begin{equation}\label{eq:neffgrow}
\min_{k \le p} N_{\mathrm{eff},k} \longrightarrow \infty ,
\qquad p = \dim\beta_U .
\end{equation}
Then
\[
\hat\beta_{\mathrm{DR}} - \beta_U
\;=\; O_p(n^{-1/2}) \;+\; J_U^{-1} G_U^{\psi}
\;+\; O_p\bigl(n^{-1} + \|G_U^{\psi}\|^2\bigr) .
\]
The gap $G_U^{\psi}$ is again free of $n$, vanishes identically when the
propensity model is correct, and has $\xi$-variance of order
$N_{\mathrm{eff},k}^{-1}$ in the $k$-th component. Writing
$M_N = \min_k N_{\mathrm{eff},k}$, the quadratic remainder
is $O_p(M_N^{-1})$, hence smaller than the
leading gap by a factor $M_N^{-1/2}$, and is
identically zero when $\psi$ is linear in $\beta$.
\end{corollary}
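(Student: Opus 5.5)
The plan is to reduce Corollary~\ref{cor:mest} to Theorem~\ref{prop:asym} by one Taylor step. Define the self-normalised AIPW estimating function
\[
\hat\Psi(\beta) = \frac{1}{N}\sum_{i\in U}\hat m_\psi(\tilde x_i;\beta)
+ \frac{\sum_{i\in S} w_i\{\psi_i(\beta) - \hat m_\psi(\tilde x_i;\beta)\}}{\sum_{i\in S} w_i},
\]
and let $\hat\beta_{\mathrm{DR}}$ be its root, where $\hat m_\psi$ is the outcome model for the rectified estimating function. Then:
\begin{enumerate}[leftmargin=1.6em, itemsep=0.15em]
\item Evaluate $\hat\Psi$ at the fixed census root $\beta_U$.
\item Expand $\hat\Psi(\hat\beta_{\mathrm{DR}}) = 0$ about $\beta_U$.
\item Invert the Jacobian.
\end{enumerate}

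At $\beta = \beta_U$, note that $N^{-1}\sum_{i\in U}\psi_i(\beta_U) = 0$. So each component $k$ of $\hat\Psi(\beta_U)$ has the same form as $\hat\theta_{\mathrm{DR}} - \theta$ in Theorem~\ref{prop:asym}, with $\Delta_i$ replaced by the $k$-th rectified score and $u_i$ by $u_{ik}$.

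Applying the theorem componentwise gives
\[
\hat\Psi(\beta_U) = O_p(n^{-1/2}) + G^\psi_U .
\]
For each $k$, the component $G^\psi_{U,k}$ is the weight-ratio gap built from the same $h_i = \pi_i/\tilde\pi_i$. It is therefore free of $n$ under \ref{a:hratio}. It vanishes when $v\equiv 0$, that is, when $\tilde\pi = \pi$. Its $\xi$-variance is $\sigma^2_{u,k}/N_{\mathrm{eff},k}$ by the definition of $N_{\mathrm{eff},k}$ as a variance ratio, so the mean-case identity \eqref{eq:neffv} is not needed. Chebyshev then gives $G^\psi_{U,k} = O_p(N_{\mathrm{eff},k}^{-1/2})$, and under \eqref{eq:neffgrow} this yields $\|G^\psi_U\| = O_p(M_N^{-1/2}) = o_p(1)$.

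Next comes consistency and linearisation. I would first show $\hat\beta_{\mathrm{DR}} - \beta_U = o_p(1)$ by a standard argument: $\hat\Psi(\beta) - \Psi_U(\beta) \to 0$ uniformly on a neighbourhood, which is supplied by \ref{a:mest}, together with identifiability from the nonsingularity of $J_U$.

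Then I would take a mean-value (or second-order) expansion,
\[
0 = \hat\Psi(\beta_U) + \partial_\beta\hat\Psi(\bar\beta)(\hat\beta_{\mathrm{DR}}-\beta_U).
\]
The uniform stochastic derivative condition \eqref{eq:unifderiv} gives
\[
\partial_\beta\hat\Psi(\bar\beta) = -J_U + O_p(n^{-1/2} + \|\hat\beta_{\mathrm{DR}}-\beta_U\|).
\]
Here the extra Jacobian error coming from the misspecified weights must be controlled by the same argument that gave $O_p(M_N^{-1/2})$ for $G^\psi_U$.

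Solving the expansion gives $\hat\beta_{\mathrm{DR}} - \beta_U = J_U^{-1}\{O_p(n^{-1/2}) + G^\psi_U\}$ plus terms quadratic in $(n^{-1/2} + \|G^\psi_U\|)$. This is $O_p(n^{-1} + \|G^\psi_U\|^2)$, the cross term being absorbed by $2ab\le a^2+b^2$.

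Since $\|G^\psi_U\|^2 = O_p(M_N^{-1})$, the quadratic remainder is $O_p(M_N^{-1})$, which is smaller than $\|G^\psi_U\|$ by the factor $M_N^{-1/2}$. When $\psi$ is linear in $\beta$, $\partial_\beta\hat\Psi$ does not depend on $\beta$, so the expansion is exact and the quadratic remainder is identically zero.

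I expect the main obstacle to be the Jacobian step. The derivative of the weighted score is itself self-normalised with the wrong weights, so it converges to a $\tilde\pi$-reweighted Jacobian, which equals $J_U$ only up to a gap of order $N_{\mathrm{eff}}^{-1/2}$. My plan is to build that gap into \eqref{eq:unifderiv}, or to show it is of the same $M_N^{-1/2}$ order so that its product with $\hat\beta_{\mathrm{DR}} - \beta_U$ falls inside the quadratic remainder.
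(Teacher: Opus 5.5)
Your outline follows the paper's proof. You evaluate $\hat\Psi$ at $\beta_U$ and apply Theorem~\ref{prop:asym} componentwise to get $O_p(n^{-1/2}) + G^\psi_U$. You read the $\xi$-variance off the variance-ratio definition of $N_{\mathrm{eff},k}$. You then use \eqref{eq:neffgrow} to make that perturbation $o_p(1)$, and expand and invert $J_U$. One small correction along the way: uniform convergence of $\hat\Psi$ on $\mathcal{B}$ follows from \eqref{eq:unifderiv} only once $\hat\Psi(\beta_U) = o_p(1)$. That is where \eqref{eq:neffgrow} enters, not from \ref{a:mest} alone.

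\emph{The gap is the Jacobian bound.} Condition \eqref{eq:unifderiv} is an $o_p(1)$ statement with no rate. Worse, under a misspecified propensity $\partial_\beta\hat\Psi$ is itself a doubly robust estimator of $\partial_\beta\Psi_U$ and carries its own $n$-free gap. So your claim $\partial_\beta\hat\Psi(\bar\beta) = -J_U + O_p(n^{-1/2} + \|\hat\beta_{\mathrm{DR}} - \beta_U\|)$ is false, as your next sentence concedes.

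Your fallback, showing that this gap is of order $M_N^{-1/2}$, cannot be carried out from the hypotheses. Condition \eqref{eq:neffgrow} constrains only the residual fields of $\psi$ at $\beta_U$. The derivative residual fields are different fields with their own correlation structure. That is exactly why \ref{a:mest} assumes \eqref{eq:unifderiv} and names \eqref{eq:derivneff} as a separate condition rather than deriving it.

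Even granting \eqref{eq:unifderiv}, $o_p(1)$ Jacobian convergence certifies only $\hat\beta_{\mathrm{DR}} - \beta_U = J_U^{-1}\hat\Psi(\beta_U) + o_p(n^{-1/2} + \|G^\psi_U\|)$. The paper's own displayed second-order expansion drops the $o_p(\|\hat\beta_{\mathrm{DR}} - \beta_U\|)$ term that its uniform bound contains. So the paper shares this weakness rather than resolving it.

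To obtain a genuinely quadratic remainder you must add two rate hypotheses:
\begin{itemize}
\item $\|\partial_\beta\hat\Psi(\beta_U) + J_U\| = O_p(n^{-1/2} + M_N^{-1/2})$. The pointwise bound displayed in \ref{a:mest} gives this if the derivative effective sizes $\tilde N_{\mathrm{eff},k}(\beta_U)$ are at least of order $M_N$.
\item $\sup_{\beta\in\mathcal{B}}\|\partial^2_\beta\hat\Psi(\beta)\| = O_p(1)$.
\end{itemize}
The Jacobian error is then multiplied by $\|\hat\beta_{\mathrm{DR}}-\beta_U\| = O_p(n^{-1/2} + \|G^\psi_U\|)$. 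Using $2ab \le a^2+b^2$ and $\|G^\psi_U\|^2 = O_p(M_N^{-1})$, the product is $O_p(n^{-1} + M_N^{-1})$, which is the bound the statement's final sentence uses.

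Two further repairs are needed.
\begin{itemize}
\item Your display for $\hat\Psi$ must keep the known term $N^{-1}\sum_{i\in U}\psi(f_i,\tilde x_i;\beta)$ outside the weighted average, and weight only $\Delta^\psi_i(\beta) - \hat m_\psi$. As written, if $\hat m_\psi$ models only the rectified part, the known term is averaged with the misspecified weights. That leaves the fixed, $x$-measurable discrepancy $\sum_{i\in U} h_i\psi(f_i,\tilde x_i;\beta)/\sum_{i\in U} h_i - N^{-1}\sum_{i\in U}\psi(f_i,\tilde x_i;\beta)$. It is not mean-zero under $\xi$, and Theorem~\ref{prop:asym} does not cover it.
\item In the linear case, constancy of $\partial_\beta\hat\Psi$ in $\beta$ only makes the expansion exact: $\hat\beta_{\mathrm{DR}} - \beta_U = -(\partial_\beta\hat\Psi)^{-1}\hat\Psi(\beta_U)$. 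This equals $J_U^{-1}\hat\Psi(\beta_U)$ only if $\partial_\beta\hat\Psi = -J_U$ exactly. That requires $\partial_\beta\psi$ not to involve $y$, so that the rectified derivative vanishes. This is the least-squares setting of Corollary~\ref{cor:ls}, and it is the only linear case the paper's proof actually treats.
\end{itemize}
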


\paragraph{The componentwise effective size.} The variance ratio is the
definition and exists whenever the $k$-th residual component has finite,
non-degenerate $\xi$-variance. When $u_{\cdot k}$ is itself $\xi$-stationary
--- common variance
$\sigma^2_{u,k}$ and its own correlation function $\rho_{u,k}$ --- it has the
closed form $(N\bar h)^2/(v^{\top}P_k v)$ with
$P_k = [\rho_{u,k}(s_i,s_j)]$, the exact analogue of \eqref{eq:neffv} for that
component. Nothing more can be said in general: for an arbitrary $\psi$ the
$k$-th residual is its own field, with its own correlation structure. Only when
$u_{ik}$ factorises as $\tilde x_{ik}u_i$ --- the least-squares case of
Corollary~\ref{cor:ls} --- does $P_k$ reduce to the single $P$ of
\eqref{eq:neffv} acting on a \emph{modulated} vector, the field
$\tilde x_{\cdot k}u$ rather than $u$ itself, giving the sharper form
\eqref{eq:quadform} below.
The gap therefore has the same structural form in every component,
while its effective size, and hence both its rate and its constant, may be
component-specific: \eqref{eq:neffgrow} constrains only
$\min_k N_{\mathrm{eff},k}$ and does not make the $N_{\mathrm{eff},k}$
comparable. In the least-squares population of
Section~\ref{sec:sims-real-beta} they are comparable and only the constants
differ, but that is a property of that population.

Condition \eqref{eq:neffgrow} is not cosmetic. Transferring a perturbation of
the population estimating equation to $\beta$ through the inverse Jacobian is
local: it needs $\hat\Psi(\beta_U) = O_p(n^{-1/2}) + G_U^{\psi}$ to be
$o_p(1)$, so that $\hat\beta_{\mathrm{DR}}$ is eventually trapped in the
neighbourhood $\mathcal{B}$ of \ref{a:mest}, and it needs the
\emph{empirical} Jacobian to converge --- condition \eqref{eq:unifderiv},
assumed rather than derived, since Theorem~\ref{prop:asym} constrains
$\hat\Psi$ only at $\beta_U$. The known-plus-rectifier structure of
$\partial_\beta\psi$ makes that Jacobian a doubly robust estimator of the
population one, carrying a
Theorem~\ref{prop:asym} gap of its own, dying only if the effective sample
sizes of the derivative residual fields diverge uniformly over $\mathcal{B}$
--- condition \eqref{eq:derivneff} of \ref{a:mest}. Under infill-type
asymptotics, where $N_{\mathrm{eff},k}$ stays bounded, $G_U^{\psi}$ is $O_p(1)$ with a
nondegenerate limit, the root may be displaced by an $O_p(1)$ amount, and
neither the consistency step nor the Taylor expansion is available: for
nonlinear $\psi$, $\hat\beta_{\mathrm{DR}}$ merely solves a perturbed
equation whose perturbation does not die.

\begin{corollary}[Finite-population least squares, exactly]\label{cor:ls}
Let $\beta_U = \arg\min_b \sum_{i\in U}(Y_i - \tilde x_i^{\top}b)^2$ with
$J_U = N^{-1}\sum_U \tilde x_i \tilde x_i^{\top}$ nonsingular, and let
$\hat\beta_{\mathrm{DR}} = J_U^{-1}\{N^{-1}\sum_U \tilde x_i f_i +
\hat D_{\mathrm{DR}}\}$, where $\hat D_{\mathrm{DR}}$ is the DR estimator of the
census mean of $\tilde x_i \Delta_i$. Then, with no asymptotic requirement
whatever and in particular with no condition on $N_{\mathrm{eff}}$,
\[
\hat\beta_{\mathrm{DR}} - \beta_U
\;=\; J_U^{-1}\Bigl\{\hat D_{\mathrm{DR}}
- N^{-1}\sum_{i\in U}\tilde x_i \Delta_i\Bigr\}
\]
\emph{identically}. Consequently every statement of Theorem~\ref{prop:asym}
transfers to $\hat\beta_{\mathrm{DR}}$ verbatim after multiplication by
$J_U^{-1}$: the design error is $O_p(n^{-1/2})$, the conditional gap is exactly
$J_U^{-1}G_U^{\psi}$ with $u_{ik} = \tilde x_{ik}u_i$, and there is no
remainder term to bound.
\end{corollary}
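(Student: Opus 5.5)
The identity is pure algebra. Nothing asymptotic enters, because the census normal equations are linear in $\beta$. I would start from the definition of $\beta_U$. Since $J_U$ is nonsingular, the census least-squares solution is
\[
\beta_U = J_U^{-1}\Bigl\{N^{-1}\sum_{i\in U}\tilde x_i Y_i\Bigr\}.
\]
Substituting $Y_i = f_i + \Delta_i$ splits the right-hand side into a known half, $N^{-1}\sum_U \tilde x_i f_i$ (the map and the regressors are censused), and an unknown half, $N^{-1}\sum_U \tilde x_i\Delta_i$. This mirrors exactly the split $\theta = \bar f_U + \bar\Delta_U$ of \eqref{eq:estimand}. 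Subtracting this from the definition of $\hat\beta_{\mathrm{DR}}$ makes the known half cancel. Because $J_U^{-1}$ is a fixed matrix that factors out of both terms, the displayed identity follows for every sample and every $N$, with no remainder.

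For the transfer step I would apply Theorem~\ref{prop:asym} componentwise. The $k$-th coordinate of $\hat D_{\mathrm{DR}}$ is the self-normalised AIPW estimator \eqref{eq:aipw} applied to the scalar "rectifier" $\tilde x_{ik}\Delta_i$, whose census mean is the $k$-th coordinate of $N^{-1}\sum_U\tilde x_i\Delta_i$. Its outcome model is $\tilde x_{ik}m(x_i)$, and its residual is $u_{ik} = \tilde x_{ik}u_i$. This residual has $\xi$-mean zero whenever $\tilde x$ is a function of covariates that the outcome model conditions on, or is otherwise treated as fixed under $\xi$; I would state that as the standing reading. Theorem~\ref{prop:asym} then gives
\[
\hat D_{\mathrm{DR},k} - N^{-1}\sum_U\tilde x_{ik}\Delta_i = O_p(n^{-1/2}) + G^{\psi}_{U,k},
\]
where $G^{\psi}_{U,k}$ is precisely the weighted-minus-unweighted residual mean of Corollary~\ref{cor:mest} with $u_{ik} = \tilde x_{ik}u_i$. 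Since $p=\dim\beta_U$ is fixed, the componentwise $O_p(n^{-1/2})$ terms stack into a vector $O_p(n^{-1/2})$. Premultiplying by the deterministic $J_U^{-1}$ preserves both the rate and the exact form of the gap. The remaining transferred claims follow the same way:
\begin{itemize}
\item $n$-invariance of the gap under \ref{a:hratio}, and its vanishing when $\tilde\pi=\pi$, are properties of $v$ alone.
\item The variance formula becomes $\Var_\xi(G^\psi_{U,k}) = (N\bar h)^{-2}\,(v\circ\tilde x_{\cdot k})^{\top}\Sigma(v\circ\tilde x_{\cdot k})$, with $\Sigma=\sigma_u^2 P$. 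This is the "modulated vector" form.
\end{itemize}
No Taylor expansion is needed, so \eqref{eq:neffgrow} and \eqref{eq:unifderiv} are never invoked.

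The one point needing care is the claim that there is "no remainder whatever". The identity itself is exact, but the $O_p(n^{-1/2})$ term inherited from Theorem~\ref{prop:asym} comes from the self-normalised ratio and from the fluctuation of $\hat\alpha$ about $\alpha^\dagger$. So "no remainder" should be read as no additional remainder beyond the mean-case decomposition, in contrast with the $O_p(n^{-1}+\|G^\psi_U\|^2)$ term of Corollary~\ref{cor:mest}, which is identically zero here because $\psi$ is linear in $\beta$. I would make explicit that $\hat D_{\mathrm{DR}}$ uses the same normaliser $\sum_S w_i$ for each component. That choice is what makes each coordinate an instance of \eqref{eq:aipw} and keeps the constant-factor invariance that the gap formula relies on.
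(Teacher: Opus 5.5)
Your proposal is correct and follows essentially the same route as the paper: write $\beta_U = J_U^{-1}N^{-1}\sum_U\tilde x_iY_i$, split $Y_i = f_i+\Delta_i$ so that the censused half cancels against the first term of $\hat\beta_{\mathrm{DR}}$, and transfer Theorem~\ref{prop:asym} componentwise through the modulated residual $u_{ik}=\tilde x_{ik}u_i$. The $\xi$-variance of that residual's gap is the quadratic form in $g_k = v\odot\tilde x_{\cdot k}$ that the paper records in \eqref{eq:quadform}, and your reading of ``no remainder'' as the absence of the quadratic Taylor remainder of Corollary~\ref{cor:mest}, not of the inherited $O_p(n^{-1/2})$ design error, matches the paper's intent.
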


Everything on the right of $\beta_U = J_U^{-1}N^{-1}\sum_U
\tilde x_i Y_i$ is known or a census mean: $\tilde x$ and, through
$Y_i = f_i + \Delta_i$, the map $f$ are observed everywhere, so $J_U$ and
$N^{-1}\sum_U \tilde x_i f_i$ are \emph{known} and only the mean of
$\tilde x_i \Delta_i$ is unknown. Regression is thus the multivariate version
of the mean problem: no estimating equation is solved, one
outcome model $m$ for $\Delta$ serves every component because
$u_{ik} = \tilde x_{ik} u_i$, and Section~\ref{sec:empirical}
reports regression coefficients without appeal to \eqref{eq:neffgrow}.

\begin{remark}[The constant is estimand-specific, but no estimand is
protected]\label{rem:estimand}
What decides whether the gap matters is $r_n = |J_U^{-1}G_U^\psi|_k /
\mathrm{se}_k(n)$, with $|\cdot|_k$ the $k$-th coordinate in absolute value and
$\mathrm{se}_k(n)$ the design standard error of the $k$-th coefficient at label
budget $n$. It grows like $n^{1/2}$ for every component, but with an
estimand-specific constant. One might think the level the worst case, a level
functional keeping whatever projection $v = h - \bar h$ has on the constant
direction where a centred covariate destroys it. Not so:
$\sum_i v_i = 0$ identically, so centring has nothing to
remove, and the quadratic form measures the overlap of the correlation
functions of $u$ and of $g_k = v \tilde x_k$ at the lags where the former
is appreciable (Appendix~\ref{app:cor}). Every component is inflated by roughly
the same factor; the numbers are in Section~\ref{sec:sims-real-beta}. The gap
must be diagnosed on the population in hand rather than argued away by a choice
of functional. How far it \emph{can} be diagnosed depends on what is known:
$h_i = \pi_i/\tilde\pi_i$ requires the true $\pi_i$, so in a validation or
designed setting, where the analyst sets the selection mechanism, $h$ and hence
$G_U$ are computable, as in Sections~\ref{sec:sims-real}
and~\ref{sec:sims-real-beta}. Under an unknown, misspecified propensity they
are not identified from the observed labels alone, and the result is then a
sensitivity principle rather than a computable correction.
\end{remark}

\begin{remark}
Theorem~\ref{prop:asym} is the PPI analogue --- and a spatial sharpening --- of
the survey-sampling maxim that model-assisted estimators retain design
consistency under misspecification while pure model-based estimators do
not \citep{sarndal1992, breidt2017}. In design-based spatial PPI the propensity/calibration arm
is primary, the outcome model an efficiency device. As noted in Section~\ref{sec:intro}, the remainder is implicit in
the doubly robust survey literature \citep{kimhaziza2014dr, yang2020dr}; what is
new is its size.
\end{remark}

\begin{remark}[Two probability spaces, and why no joint CLT is needed]
\label{rem:twoclt}
Theorem~\ref{prop:asym} is conditional. Given the realised population --- the
actual $\{u_i\}$, hence the actual $G_U$ --- the gap is a fixed number and only
the labelling is random, so the limit needs just a design CLT
for the $O_p(n^{-1/2})$ term, with asymptotic coverage
$\Phi(z_{1-\alpha/2} - r_n) - \Phi(-z_{1-\alpha/2} - r_n)$ and
$r_n = |G_U|/\mathrm{se}(n)$ deterministic; no joint limit in the two sources of
randomness is invoked. The $\xi$-averaged coverage of
Section~\ref{sec:sims-real} is a second quantity: the expectation of
that expression over the $\xi$-law of $G_U$, which exists by \ref{a:moments},
\ref{a:clt} and \ref{a:gauss} with dominated convergence.
\end{remark}

\begin{remark}[Relation to superpopulation doubly robust PPI]\label{rem:salerno}
\citet{salerno2026spatial} give a cross-fitted doubly robust PPI estimator for
spatially dependent MAR labels, with asymptotic normality under a
jackknife-corrected HAC variance. Their identification result is the
superpopulation counterpart of Theorem~\ref{prop:asym}: the doubly robust score
has mean zero when either nuisance is correct, which is
$\E_\xi[G_U] = 0$. Their theory then imposes a cross-fit nuisance-remainder condition strong
enough to make that remainder $o_p(n^{-1/2})$; under i.i.d.\ sampling this is
implied by the usual product-rate condition of double machine learning ---
correct in their frame, and
the step Theorem~\ref{prop:asym} examines. Conditionally on the realised
population that remainder is instead $G_U + O_p(n^{-1/2})$,
with $G_U$ a fixed number; it is $\E_\xi$, not the sample size, that makes it
disappear. The theorem is complementary, not contradictory:
it prices the term their averaging removes. They report
finite-target coverage only as an appendix comparison, noting it is typically
more attenuated. Theirs is a guarantee about the population from which the map
was drawn; ours about the map in hand.
\end{remark}

\begin{remark}[Relation to the big-data paradox]\label{rem:meng}
The structure of $G_U$ will be familiar from \citet{meng2018paradox}, whose
identity factors the error of a nonrandomly selected sample mean into a
data-defect correlation, a population-size term and a problem-difficulty term:
a bias not shrinking in $n$ eventually dominates any
variance that does. Theorem~\ref{prop:asym} is that moral
in the doubly robust, spatially indexed setting: $G_U$ is the correlation
between the propensity ratio $v$ and the residual field $u$, fixed once
the population is realised, against a sampling error of $O_p(n^{-1/2})$. Two
things do not follow from the
identity in \citet{meng2018paradox}. First, the defect is not in the
labelled sample --- itself a probability sample --- but in the \emph{working
propensity} used to rectify it, so correcting the propensity, not enlarging $n$,
removes the term. Second, its size is governed by $v^{\top}Pv$: the
\emph{spatial} correlation of the residual field decides whether
$N_{\mathrm{eff},v}$ is of order $N$ or of order one. The finite-population
versus superpopulation distinction that makes $G_U$ visible is treated
in general by \citet{jin2024tailored}.
\end{remark}

\paragraph{Scope.}
Selection depending on \emph{unobserved} fields (preferential sampling in the
sense of \citealp{diggle2010}) is outside our scope; the
simulations quantify the failure (Section~\ref{sec:sims-scope}) and a
companion paper develops joint-model corrections and sensitivity analysis. The
same exclusion covers spatial confounding, where an unmeasured field drives
both covariate and response and the bias is a property of the model, not the
design \citep{gilbert2025spatial}.

\subsection{Which mechanism is yours?}\label{sec:taxonomy}

Sections~\ref{sec:design} and \ref{sec:ipw} cover two families of labelling
mechanism; choosing between them is not a modelling preference but a
factual question about how the labels came to exist.
Table~\ref{tab:taxonomy}, in Appendix~\ref{app:taxonomy},
sorts them into six patterns and states, for
each, what the analyst must know, which estimator and variance apply, and
whether the gap of Theorem~\ref{prop:asym} is in play, in order of decreasing
guarantees. The analyst should name the row before choosing an estimator; the
common failures are row confusions rather than estimation errors --- a campaign data
set analysed as pattern~1, or a map-adaptive sample analysed with a propensity
model omitting $f$, exactly the $35\%$-coverage arm of
Figure~\ref{fig:fig2}. Patterns 3 and 4 are where
Theorem~\ref{prop:asym} does its work: there the doubly robust estimator should
be demoted to an efficiency device behind a propensity model one is willing to
defend.

\section{Simulations}\label{sec:sims}

\subsection{Known designs (Figure~\ref{fig:fig1})}\label{sec:sims-known}

The experiment is built to isolate one question: does spatial correlation in
the rectifier, by itself, tell an analyst anything about which variance
estimator to use? We therefore hold the population, the map and the rectifier
field fixed, sweep a single range parameter $\phi$ across two orders of
magnitude, and vary only how the labels are placed. The three columns of
Figure~\ref{fig:fig1} share that $\phi$ axis, so they can be read across as
well as down.

Population: a $64 \times 64$ grid ($N = 4096$); fixed smooth map $f$;
rectifier $\Delta_i = 0.5 + \eta_i + \varepsilon_i$, with $\eta$ an
exponential-covariance Gaussian field (sd $1$, range
$\phi \in \{0.1, 2, 4, 8, 16\}$ pixels) and $\varepsilon_i$ an
independent nugget of sd $0.5$; $n = 64$; $1000$
replications. Designs: SRS; clustered labelling ($8$ random centres $\times$
$8$ nearest pixels); spatial stratification ($64$ blocks, one draw each). The
cluster-robust variance treats the $8$ drawn centres as the clusters, taking
the sample variance of the $8$ cluster means of $\Delta$ with a $t_{7}$
critical value.

\begin{figure}[tbp]
\centering
\includegraphics[width=\textwidth]{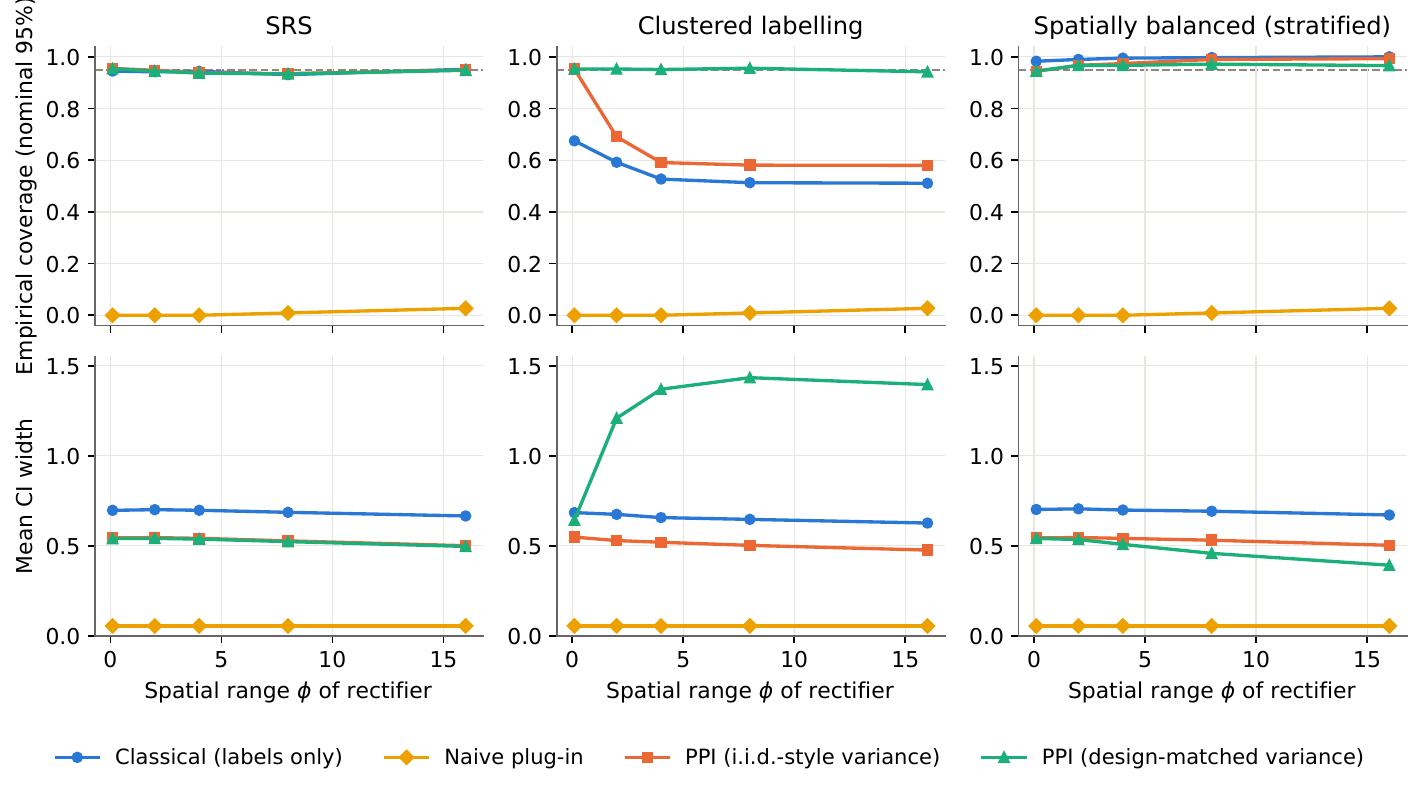}
\caption{Empirical coverage (top) and mean CI width (bottom) against the
range $\phi$ of the rectifier field, by
design (columns) and variance estimator (series), over $1000$ replications at
$n = 64$. Nominal level $95\%$. ``Classical'' uses the labels only;
``PPI, i.i.d.\ variance'' and ``PPI, design-matched variance'' use the map and
differ only in how the variance is computed; ``naive'' plugs in the map with no
labels. The design-matched variance is the Horvitz--Thompson form under SRS,
the cluster-robust form under clustered labelling, and the collapsed-pairs form
under spatial balance.}
\label{fig:fig1}
\end{figure}

Three readings, one per column.

\emph{Simple random sampling: the spatial range does not matter.} Every
label-using interval holds its nominal level at every range, from
$\phi = 0.1$ to $\phi = 16$ pixels: the classical estimator covers between
$93.1\%$ and $95.2\%$, PPI with the i.i.d.\ variance between $93.5\%$ and
$95.5\%$, and the widths are flat in $\phi$ to within a few per cent. This is
the content of the remark after Proposition~\ref{thm:ht}: under SRS,
$\pi_{ij} = n(n-1)/\{N(N-1)\}$ at every pair, so \eqref{eq:syg} collapses to
$(1-n/N)S^2_\Delta/n$ and the \emph{arrangement} of $\{\Delta_i\}$ never
enters. What the map buys here is width and not validity --- $0.544$ against
$0.697$ at $\phi = 0.1$, a $22\%$ reduction, rising to $25\%$ at
$\phi = 16$. The design-matched variance is indistinguishable from the i.i.d.\
one ($0.540$ against $0.544$), as Proposition~\ref{thm:lamdesign} requires: SRS
is the $H = 1$ case in which the two coincide.

\emph{Clustered labelling: the i.i.d.\ formula fails, and fails worse the
smoother the field.} At $\phi = 0.1$ the eight points of a cluster are eight
independent draws and PPI with the i.i.d.\ variance is fine ($95.3\%$). As
$\phi$ grows they become eight copies of one observation, and coverage falls
monotonically to $58.0\%$, saturating by $\phi \approx 4$ --- about the cluster
diameter, beyond which there is nothing left to lose. The size of the failure
is not arbitrary. Inverting the coverage of the classical interval gives an
implied variance ratio of $3.97$ at $\phi = 0.1$ rising to $8.02$ at
$\phi = 16$: the design effect converges on the cluster size, exactly as it
must when within-cluster correlation approaches one. The diagnostic signature
is worth naming, because it recurs in Section~\ref{sec:empirical}: at
$\phi = 16$ the i.i.d.\ interval is \emph{shorter} than the classical one
($0.476$ against $0.626$) and covers $7$ points worse. A shorter interval with
worse coverage is not a bias problem; it is a variance estimator reading
dependent observations as independent. The cluster-robust variance repairs it
at every range ($94.2\%$ to $95.6\%$) and the repair is not free: at
$\phi = 16$ its interval is $2.9$ times the i.i.d.\ one. That factor is the
honest price of the design, not a cost the method imposes.

\emph{Spatial balance: the design-matched variance is the one that pays.} Here
the failure runs the other way. The SRS-style variance ignores the reduction
Proposition~\ref{thm:balance} delivers and so is conservative, increasingly so
with $\phi$: $98.3\%$ at $\phi = 0.1$ and $100.0\%$ at $\phi = 16$, an interval
that is never wrong and never informative either. The collapsed-pairs variance
tracks the nominal level far better ($94.5\%$ to $97.2\%$) and gives the
shortest valid intervals in the whole experiment. The gain follows
Proposition~\ref{thm:balance}'s threshold rather than appearing automatically.
The blocks are $8 \times 8$ pixels; at $\phi = 0.1$ the field varies far below
that scale, each block mean averages it away, $\sigma^2_B$ is negligible and
the design-matched width ties the i.i.d.\ one ($0.541$ against $0.544$). Once
$\phi$ reaches and passes the block scale the between-block variance is
substantial and the gain opens up: at $\phi = 16$ the design-matched interval
is $22\%$ shorter than the i.i.d.\ PPI interval and $42\%$ shorter than the
classical one ($0.392$, $0.503$, $0.671$).

The naive row is the same in all three columns, since it uses no labels: its
coverage runs from $0\%$ to $2.7\%$ and its interval is $12.6$ times too
narrow. It is included as a reminder of what PPI is for --- a map with no
labelled sample attached to it supports no interval at all, however good the
map looks.

\paragraph{What the three columns say together.} Read across rather than down,
the figure makes the point the rest of the paper depends on: spatial
correlation in the rectifier, on its own, carries no information about which
variance estimator is correct. The field is the same in all three columns and
$\phi$ is swept over the same two orders of magnitude, yet at
$\phi = 16$ the i.i.d.\ interval is exactly right under SRS ($94.9\%$),
badly wrong under clustered labelling ($58.0\%$) and needlessly wide under
spatial balance ($99.3\%$). Nothing about $\{\Delta_i\}$ distinguishes those
three situations; only the labelling mechanism does. The \emph{sign} of the
error is design-determined as well --- anti-conservative when labels cluster,
conservative when they are spread --- which is why ``the data are spatial, so
use a spatial variance estimator'' is not a usable rule: under SRS it buys
nothing, under clustered labelling it is mandatory, and under spatial balance
the i.i.d.\ version errs safe while discarding most of the design's gain. The
prescription has to be read off the design, which is what
Section~\ref{sec:design} provides and what Table~\ref{tab:taxonomy} turns into
a decision rule. Correlation enters the design variance only through the
$\pi_{ij}$, never on its own.

\subsection{Covariate-dependent labelling
(Figure~\ref{fig:fig2}, left)}\label{sec:sims-selection}

Section~\ref{sec:sims-known} varied the design with the propensity known. Here
it is not, but every covariate that drove selection is observed, so
Section~\ref{sec:ipw} applies and the question is whether the estimators built
there behave as advertised. This is scenario S1; the out-of-scope case, in
which selection responds to a field nobody measures, is deferred to
Section~\ref{sec:sims-scope}.

Rectifier mean depends on the map ($\Delta_i = 0.5 + 0.4\tilde f_i + \eta_i +
\varepsilon_i$, range $8$); Bernoulli labelling with
$\pi_i = \mathrm{expit}(\alpha_0 - \tilde z_i + 0.8 \tilde f_i)$;
tildes denote population
standardisation, putting coefficients on a common scale. The
accessibility covariate $z_i = (\text{column index of } i)/(G-1) \in [0,1]$, an
east--west gradient for road distance, enters the outcome nowhere: S1's
misspecification omits $f$ rather than an unobserved cause. Bisection tunes
$\alpha_0$ to each target label count, $\E[n] = 218$ unless stated; $1000$
replications. The DR arms use an outcome model that is \emph{fitted}, not
oracle: $\Delta$ is regressed on $(1, z, f)$ by ordinary least squares on the
labelled sample at each replication, so this is the correctly specified
outcome-model arm with its estimation error left in.
Table~\ref{tab:s1} adds to Figure~\ref{fig:fig2} the true
propensity as oracle and a DR estimator on a \emph{correctly} specified
propensity, isolating Theorem~\ref{prop:asym}. Three propensities appear in
that table and it is worth keeping them apart. The oracle row uses the true
$\pi_i$, available only to the simulator. The two ``estimated on $(z,f)$'' rows
fit the logistic on the covariate set that actually drove selection, so the
pseudo-true limit is the truth, $\tilde\pi = \pi$, and by
Theorem~\ref{prop:asym} the gap is identically zero: what separates them from
the oracle row is estimation noise alone. The ``omitting $f$'' rows fit on $z$
only; the limit is then some $\tilde\pi \neq \pi$, $h = \pi/\tilde\pi$ is not
constant, and the gap is present. The contrast in the table is therefore
between \emph{misspecifying} the propensity and \emph{estimating} it, and only
the first is costly.

\begin{figure}[tbp]
\centering
\includegraphics[width=\textwidth]{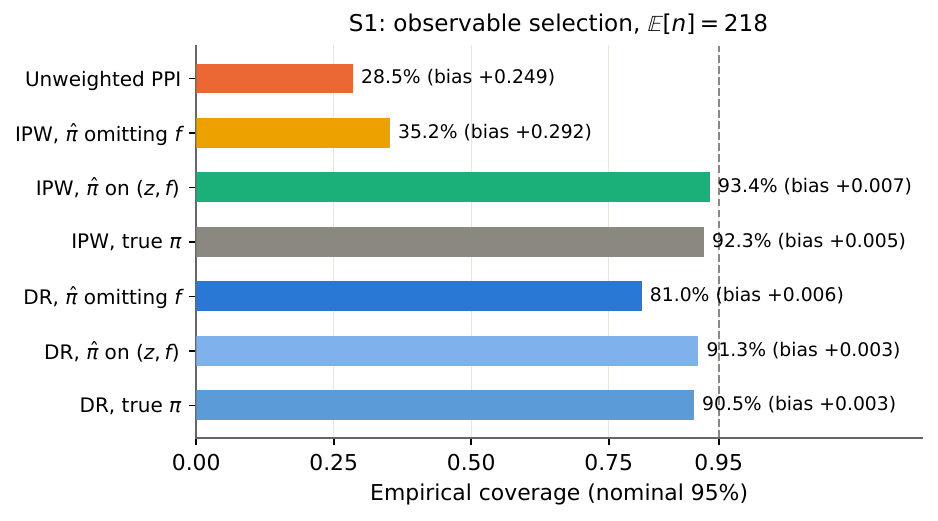}
\caption{Scenario S1: empirical coverage of the nominal $95\%$ interval, with
the Monte-Carlo bias beside each bar, at $\E[n] = 218$ over $1000$
replications. Unweighted PPI and an IPW omitting $f$ fail
($28.5\%$, $35.2\%$); a propensity fitted on $(z, f)$
restores $93.4\%$, no worse than the true $\pi$. All three DR bars share a
correctly specified, fitted outcome model, so the coverage differences among
them are the propensity's alone. Consistent with the mechanism of
Theorem~\ref{prop:asym}, the misspecified DR arm \emph{worsens}
as labels accumulate ($82.1\% \to 65.3\%$ as $\E[n]: 110 \to 900$), the
correctly specified one improving ($88.3\% \to 94.5\%$); see
Table~\ref{tab:s1}b. Same numbers as Table~\ref{tab:s1}a.}
\label{fig:fig2}
\end{figure}

\begin{table}[tbp]
\centering\footnotesize
\setlength{\tabcolsep}{5pt}
\begin{tabular}{lrrrrr}
\toprule
\multicolumn{6}{@{}l}{\emph{(a) Scenario S1 at $\E[n] = 218$: all estimators}}\\
\addlinespace[1pt]
Estimator & bias & MC sd & mean SE & coverage & width \\
\midrule
Unweighted PPI (no propensity)      & $+0.249$ & $0.231$ & $0.076$ & $0.285$ & $0.296$ \\
\addlinespace[2pt]
IPW, $\hat\pi$ omitting $f$         & $+0.292$ & $0.163$ & $0.121$ & $0.352$ & $0.476$ \\
IPW, $\hat\pi$ on $(z, f)$          & $+0.007$ & $0.148$ & $0.145$ & $0.934$ & $0.567$ \\
IPW, true $\pi$ (oracle)            & $+0.005$ & $0.153$ & $0.143$ & $0.923$ & $0.562$ \\
\addlinespace[2pt]
DR, $\hat\pi$ omitting $f$          & $+0.006$ & $0.168$ & $0.113$ & $0.810$ & $0.444$ \\
DR, $\hat\pi$ on $(z, f)$           & $+0.003$ & $0.147$ & $0.133$ & $0.913$ & $0.521$ \\
DR, true $\pi$ (oracle)             & $+0.003$ & $0.148$ & $0.132$ & $0.905$ & $0.516$ \\
\midrule
\multicolumn{6}{@{}l}{\emph{(b) Scenario S1: the two DR arms as labels
accumulate}}\\
\addlinespace[1pt]
$\E[n]$ (target) & $110$ & $218$ & $400$ & $600$ & $900$ \\
\midrule
DR, propensity omitting $f$ \quad coverage
  & $0.821$ & $0.810$ & $0.794$ & $0.711$ & $0.653$ \\
\quad\quad bias
  & $+0.002$ & $+0.006$ & $+0.002$ & $+0.005$ & $-0.006$ \\
\quad\quad mean SE
  & $0.157$ & $0.113$ & $0.081$ & $0.064$ & $0.048$ \\
\quad\quad excess sd
  & $0.150$ & $0.124$ & $0.096$ & $0.100$ & $0.088$ \\
DR, propensity on $(z, f)$ \quad coverage
  & $0.883$ & $0.913$ & $0.944$ & $0.939$ & $0.945$ \\
\quad\quad bias
  & $+0.005$ & $+0.004$ & $+0.002$ & $+0.006$ & $+0.001$ \\
\bottomrule
\end{tabular}
\caption{Scenario S1 (observable covariate-dependent labelling), $1000$
replications per cell, nominal $95\%$. ``MC sd'' is the Monte-Carlo standard
deviation of the estimator across replications and ``SE'' its estimated
standard error, averaged over them; ``DR'' is doubly robust throughout. Panel
(a) is a $2 \times 3$ grid --- IPW and DR, each at three propensities --- above
the unweighted baseline, so that any two rows in the same position may be
compared directly. \emph{True $\pi$} means the actual inclusion probabilities,
available only to the simulator; $\hat\pi$ \emph{on} $(z,f)$ means a logistic
fitted on the covariate set that drove selection, so that the model is
\emph{correctly specified} and $\tilde\pi = \pi$, but the coefficients are
estimated; $\hat\pi$ \emph{omitting} $f$ means a logistic fitted on $z$ alone,
misspecified, with $\tilde\pi \neq \pi$. Panel (b) carries only the two DR arms
that differ in $G_U$; a true-$\pi$ arm would track the correctly specified one,
both having $G_U \equiv 0$. Its $\E[n] = 218$ column is the panel~(a)
experiment itself, not an independent rerun, so the two panels agree there
exactly.
``Excess sd'' is
$(\mathrm{MC\ sd}^2 - \mathrm{mean\ SE}^2)^{1/2}$, the Monte-Carlo spread the
design standard error does not account for. Panel (a): estimators ignoring the
map's role in selection fail through bias, not variance --- the unweighted one
has the \emph{shortest} interval and the worst coverage, and coverage rises
with width across the six rows with one instructive exception, the doubly
robust arm on the misspecified propensity. Weighting on
$(z, f)$ recovers nominal coverage, the estimated propensity no worse than the
true one, as Proposition~\ref{thm:ipw}(i) predicts. Panel (b): both DR arms have
negligible superpopulation bias in these simulations --- self-normalisation
and a fitted outcome model leave a finite-sample bias, here at most $0.007$ ---
so coverage rather than mean bias separates them; the
misspecified-propensity DR estimator loses coverage once its sampling error
drops below $G_U$ (Theorem~\ref{prop:asym}), the correct one converging to
nominal from below. Code: \texttt{sim\_tab1.py}, \texttt{sim\_tab1.json}.}
\label{tab:s1}
\end{table}

\paragraph{S1: the failure is in the bias, and the interval hides it.}
Ignoring the map's role in selection does not inflate the variance --- it moves
the estimator. Unweighted PPI carries a bias of $+0.249$, larger than its own
Monte-Carlo spread of $0.231$, and reports a standard error of $0.076$ against
that spread: a third of the truth. The result is
the shortest interval in the table and the worst coverage, $28.5\%$. Weighting
on the wrong covariate set is no better --- an IPW propensity omitting $f$ has
a \emph{larger} bias, $+0.292$, and covers $35.2\%$ --- because the omitted
covariate is the one the labels responded to. Once the propensity is fitted on
$(z, f)$ the bias falls to $+0.007$ and coverage to within Monte-Carlo error of
nominal, $93.4\%$. The estimated propensity is no worse than the true one
($93.4\%$ against $92.3\%$), and slightly better calibrated --- its mean
standard error is $0.976$ of the Monte-Carlo spread against $0.939$ for the
oracle --- which is Proposition~\ref{thm:ipw}(i) visible in a simulation:
estimating $\alpha$ subtracts a nonnegative projection term from the variance
of the known-$\pi$ estimator rather than adding to it.

Reading the rows by width, coverage rises with it, and the one exception is
worth pausing on. The doubly robust arm on the misspecified propensity has a
\emph{shorter} interval than the failing IPW arm ($0.444$ against $0.476$) and
covers far better ($81.0\%$ against $35.2\%$). Its bias is essentially zero,
$+0.006$: the outcome model has done the work the propensity failed to do. What
it has not done is make the interval right, and panel (b) says why.

Read down the grid instead and the outcome model's proper role appears. At each
of the three propensities, adding it shortens the interval by about the same
amount --- $0.476 \to 0.444$, $0.567 \to 0.521$, $0.562 \to 0.516$, between
$7\%$ and $8\%$ --- while coverage moves by at most two points where the
propensity is sound ($93.4\% \to 91.3\%$ estimated, $92.3\% \to 90.5\%$ oracle).
That is the efficiency claim of Proposition~\ref{thm:dr} and nothing more: the
outcome model buys width. It buys validity only in the misspecified column, and
there it buys the bias back without buying the interval, which is the asymmetry
Theorem~\ref{prop:asym} is about. Note also that the three propensity settings
separate cleanly in the way Section~\ref{sec:ipw} predicts: estimating a
correctly specified propensity is free or better (the estimated rows are, if
anything, slightly better calibrated than the oracle rows), whereas
misspecifying it is not.

\paragraph{S1 as labels accumulate: the gap, measured.} With the propensity
misspecified but the outcome model correct --- the configuration of
Theorem~\ref{prop:asym} --- bias stays at the noise level at every sample size
($|{\text{bias}}| \le 0.006$ across the five columns) while coverage falls
monotonically from $82.1\%$ at $\E[n] = 110$ to $65.3\%$ at $900$. Simulating
bias therefore reveals nothing here, and it should not: $G_U$ has mean zero
over populations, and these replications regenerate the population each time.
What the averaging does not remove is the \emph{spread} $G_U$ contributes. The
excess-sd row isolates it: the design standard error falls from $0.157$ to
$0.048$ across the five columns, a factor of $3.3$ tracking $n^{-1/2}$, while
the Monte-Carlo spread the design standard error fails to account for goes
$0.150,\ 0.124,\ 0.096,\ 0.100,\ 0.088$ --- down by a factor of $1.7$ where the
sampling error falls by twice that, and flat over the last three columns. A
quantity that does not track $n^{-1/2}$, divided by one that does, is the whole
content of Theorem~\ref{prop:asym}, and here it
is arithmetic rather than asymptotics. The correct-propensity arm is the
control: its excess collapses to $0.107,\ 0.063,\ 0.017,\ 0.022,\ 0.011$ over
the same columns, as it must when $v \equiv 0$ makes $G_U$ identically zero,
and its coverage rises to nominal from below ($88.3\%$ to $94.5\%$) in the
ordinary small-sample way.

Two features of S1 recur in Sections~\ref{sec:sims-real} and
\ref{sec:sims-real-beta}: precision inverts validity --- the shortest intervals
here are the invalid ones --- and simulating bias reveals nothing, $G_U$ having
mean zero over populations, so that only coverage and the excess spread expose
it.

\subsection{Theorem~\ref{prop:asym} on a real population
(Figure~\ref{fig:thm1})}\label{sec:sims-real}

Above we chose the population, the spatial dependence and the map error
ourselves --- a weak test of Theorem~\ref{prop:asym}, whose $G_U$ attaches to a
\emph{particular} $U$. Since $\theta$ is known only under enumeration, the
strongest test is a semi-synthetic experiment on a real, fully enumerated
population: the outcome field and its spatial structure are the data's own, the
map is derived from them by a stated coarsening, and \emph{only the labelling}
is simulated --- the
survey-sampling standard device, as with MU284 and API \citep{sarndal1992}.

\paragraph{The population.} We aggregate ESA WorldCover 10\,m (2021, v200) to
the $N = 48{,}175$ one-kilometre cells whose centres fall on Estonian land;
$Y_i$ is the tree-cover fraction of cell $i$ ($\bar Y_U = 0.6019$,
$\Var_U(Y) = 0.0844$). The map is a constructed coarse proxy: $f_i$
is the tree fraction of the $10 \times 10$\,km block containing cell $i$, right
on average over each block and wrong within it
($\mathrm{corr}_U(Y, f) = 0.494$, $1 - \Var_U(\Delta)/\Var_U(Y) = 0.244$).
Both fields being enumerated, $\Delta_i = Y_i - f_i$ is known everywhere and
$\theta = \bar Y_U$ exactly.

\paragraph{The two models, and the asymmetry between them.} Covariates
available everywhere: the map $f$, the block cropland fraction, latitude,
and an accessibility score $a_i$ (a bounded transform of distance to the
nearest large city). The outcome model is the \emph{oracle population linear
projection} of $\Delta$ on $x_i = (1, f_i, \mathrm{crop}_i, \mathrm{lat}_i)$:
handed the exact population least-squares coefficients, the analyst incurs no
estimation error, and $u_i = \Delta_i - m(x_i)$ satisfies
$\sum_U x_i u_i = 0$ exactly, so $G_U$ cannot be blamed on outcome-model fit.
That orthogonality is a property of the realised population and does not by
itself deliver the superpopulation condition $\E_\xi[u \mid x] = 0$; the two
are different statements and neither implies the other. What this experiment
therefore validates is the \emph{conditional} finite-population mechanism of
Theorem~\ref{prop:asym} --- that $G_U$ is free of $n$, vanishes when the
propensity is corrected, and is inflated by the spatial arrangement of $u$ ---
while the genuinely outcome-correct doubly robust arm, in which the model is
also fitted, is isolated in Section~\ref{sec:sims-selection}. The
residual $u$ ($\sigma_u = 0.2526$) is the real, spatially structured remainder
of the map error. Accessibility is absent from the outcome model by design,
distance to a city governing fieldwork, not what grows there.
Labelling is Poisson sampling with log-linear intensity
$\pi_i \propto \exp(f_i^{c} + 1.2\,\mathrm{crop}_i^{c} + 2.5\,a_i)$,
$c$ denoting centring; the working
propensity is fitted by log-linear quasi-likelihood omitting $a$
(misspecified) or including it (correctly specified; the coefficients are
fitted either way, so neither arm uses the true $\pi_i$).

\paragraph{How $n$ is grown.} In a fixed finite population ``$n \to \infty$''
is undefined until one names the sequence of designs; we raise the
\emph{scale} of the intensity alone, holding relative labelling preference
fixed --- the asymptotics the theorem is about. The
intensity being log-linear, raising its scale \emph{is} shifting its
intercept: the ratio-stable multiplication $\pi_i = t\lambda_i$,
$\tilde\pi_i = t\tilde\lambda_i$ ($\lambda_i$, $\tilde\lambda_i$ true and
working intensity, $t > 0$ the scale delivering the label budget) that
Assumption~\ref{a:hratio} admits, $h_i = \pi_i/\tilde\pi_i$ being unchanged.
The logistic parametrisation $\pi_i = \mathrm{expit}(\log\varpi_N + a_0 +
z_i^{\top}\alpha_z)$ of Assumption~\ref{a:positivity} drifts its intercept with
$\log\varpi_N$ and, since $\mathrm{expit}(\ell) = e^{\ell}\{1 + O(e^{\ell})\}$,
agrees to relative error $O(\varpi_N)$ as $\varpi_N \to 0$: one sequence on two
link scales. Not the fixed-intercept sequence, which drives every
$\pi_i$ to the cap and degenerates to a census, where the finite-population
correction annihilates every error including $G_U$. Under the scale sequence
the gap is invariant to six decimals: $G_U = -0.003260$ at each of
$n = 100, 400, 1600, 6400$, and $G_U = 0$ identically when the propensity is
correct, as Theorem~\ref{prop:asym} asserts.

\paragraph{Magnitude.} The theorem also predicts $G_U$ to be of order
$\sigma_u/\sqrt{N_{\mathrm{eff},v}}$. One translation is needed first. That
prediction is a statement about $\Var_\xi$, and a real population has no $\xi$:
$u$ here is one fixed field, not a draw. We therefore replace the
superpopulation law by a randomisation over rearrangements of that field, which
is a different probability space and is offered as a surrogate rather than an
equivalent --- the same substitution the control below makes, and the reason
the magnitude claim is a scale check rather than a test of the identity
\eqref{eq:neffv}, which is exact and needs no checking. Permuting $u$ within
$k \times k$ cell blocks, $\widehat{N}_{\mathrm{eff},v} :=
\sigma_u^2/\widehat{\Var}(G_U)$ --- the variance over permutations
--- runs between $6{,}300$ and $10{,}500$ for $k$ between $3$ and $12$\,km,
predicting $\mathrm{sd}(G_U)$ of $0.0025$ to $0.0032$. The realised
$|G_U| = 0.0033$ tops that band, $1.0$ to $1.3$ standard
deviations out: an ordinary draw at the predicted \emph{scale}.
Beyond $k = 12$\,km $\widehat{N}_{\mathrm{eff},v}$ turns upward
($13{,}100$ at $25$\,km, $22{,}000$ at $50$\,km), an artefact: at
$50$\,km only $37$ blocks remain and the near-degenerate permutation
distribution understates $\mathrm{sd}(G_U)$. Free permutation
realises the exchangeable no-arrangement reference --- the $P = I$ one up to
the factor $N/(N-1)$, as above --- and shrinks the \emph{typical} gap to
$\mathrm{sd}(G_U) = 0.0014$, a factor of $2.4$ below the realised $|G_U|$ ---
the entire content of the $N_{\mathrm{eff}}$ correction; over $4{,}000$ such
permutations the mean $|G_U|$ is $0.0011$, the realised $0.0033$ lying $2.4$
permutation standard deviations out, in the upper $2\%$ by magnitude, and none
of the $32$ permutations used below reached it. (Free
permutation returns $\widehat{N}_{\mathrm{eff},v} = 33{,}900$, not $N$, the
reference variance also carrying the weight-ratio field's dispersion; only
the ratio of the two permutation variances estimates the spatial inflation
cleanly --- $3.2$ to $5.4$ over $k = 3$ to $12$\,km.)

\paragraph{Coverage.} Figure~\ref{fig:thm1}(a) tracks coverage of the nominal
$95\%$ DR interval as $\E[n]$ runs from $100$ to $8{,}000$, over $1200$
replications ($600$ above $n = 1600$). With a misspecified propensity and
the oracle population projection supplied as $m$ --- the arm that isolates the
conditional mechanism of Theorem~\ref{prop:asym} ---
coverage is $93.8\%$ at $n = 100$ and $79.5\%$ at $n = 8{,}000$, falling
monotonically once $n$ exceeds $800$; on this population, fitting the
projection rather than supplying it changes nothing
($93.4\% \to 81.7\%$) --- a robustness check, the general case needing the
nuisance-rate condition of Appendix~\ref{app:dr}. With a correctly specified
propensity the same estimator covers $89.7\%$ at $n = 100$ and $91.8\%$ at
$n = 200$ --- \emph{under}-covering, a four-parameter intensity being hard to
fit on a hundred labels --- then sits between $93\%$ and $95\%$ from $n = 400$
on. Panel (b) shows why: the Monte-Carlo standard deviation of
$\hat\theta_{\mathrm{DR}}$ falls from $0.0262$ to $0.0023$, tracking
$n^{-1/2}$, while the bias stays pinned near $-0.0033$ --- the population
constant $G_U = -0.00326$, recovered to two digits at every sample size, the
Monte-Carlo bias running between $-0.0030$ and $-0.0035$. The curves cross near
$n \approx 4{,}800$; coverage erodes before the crossing, bias entering the
coverage function once it is a non-negligible fraction of the half-width, and
steepens after.

The theorem's own coverage expression can be checked against these numbers, and
should be, with the qualification Section~\ref{sec:sims-real-beta} makes
explicit. Taking $r_n = |G_U|/\mathrm{se}(n)$ and reading
$\Phi(1.96 - r_n) - \Phi(-1.96 - r_n)$ off the theorem accounts for the decline
in shape but not in detail: it stays within $1.4$ points of the observed
coverage up to $n = 3{,}200$ and then misses by $3.1$ and $3.3$ points at
$n = 6{,}400$ and $8{,}000$. The missing factor is the same one isolated in
Section~\ref{sec:sims-real-beta}: $\mathrm{se}(n)$ in the theorem is the
\emph{design} standard error, whereas the interval is built from the plug-in
estimate, which grows conservative as $n$ rises --- here from
$q_n = \mathrm{se}/\mathrm{sd}_p = 0.98$ at $n = 100$ to $1.16$ at
$n = 8{,}000$. Carrying $q_n$ through, as
$\Phi\{q_n(1.96 - r_n)\} - \Phi\{-q_n(1.96 + r_n)\}$, brings the mean absolute
error over the eight sample sizes from $1.5$ to $0.4$ percentage points. The
erosion is therefore the theorem's, quantitatively and not merely
qualitatively; what the theorem does not price is the variance estimator's
conservativeness, which is a separate and separately identifiable effect.

\paragraph{The control that isolates space.} We permute $u$ uniformly at random
over the $48{,}175$ cells and rerun everything --- residual marginal law, map,
design, outcome model and misspecification unchanged, only the arrangement
destroyed. This produces the exchangeable no-arrangement reference rather
than $P = I$ exactly: permuting a fixed vector without replacement leaves
residuals at distinct sites correlated at $-1/(N-1)$, but $v^{\top}\mathbf 1 = 0$
annihilates the common component, so the quadratic form is the $P = I$ one up
to the factor $N/(N-1)$ --- $1.00002$ here. That removes the spatial inflation
$\kappa = v^{\top}Pv/\|v\|^2 = 4.7$. Over $32$ permutations the systematic
erosion goes --- mean coverage $94.7\%$ at $n = 100$ and $94.4\%$ at
$n = 8{,}000$, flat in $n$, against $93.8\%$ falling to $74.5\%$ for the real
arrangement rerun at the control's own, smaller replication count (the
$79.5\%$ of the reference experiment above lies within two Monte-Carlo
standard errors of it; Appendix~\ref{app:perm}) --- while the spread across
permutations widens with $n$.
Appendix~\ref{app:extra} gives the permutation distribution, its closed-form
counterpart over $4{,}000$ permutations, and the conservativeness of the
plug-in variance behind the excursions above $95\%$. Scrambling thus removes
not invalidity but the \emph{spatial inflation} of $\Var_\xi(G_U)$, leaving the
smaller exchangeable gap of Theorem~\ref{prop:asym}: the spatial arrangement of
the map's error causes the erosion and nothing else --- hence its absence from
the i.i.d.\ PPI literature.

\begin{figure}[tbp]
\centering
\includegraphics[width=\textwidth]{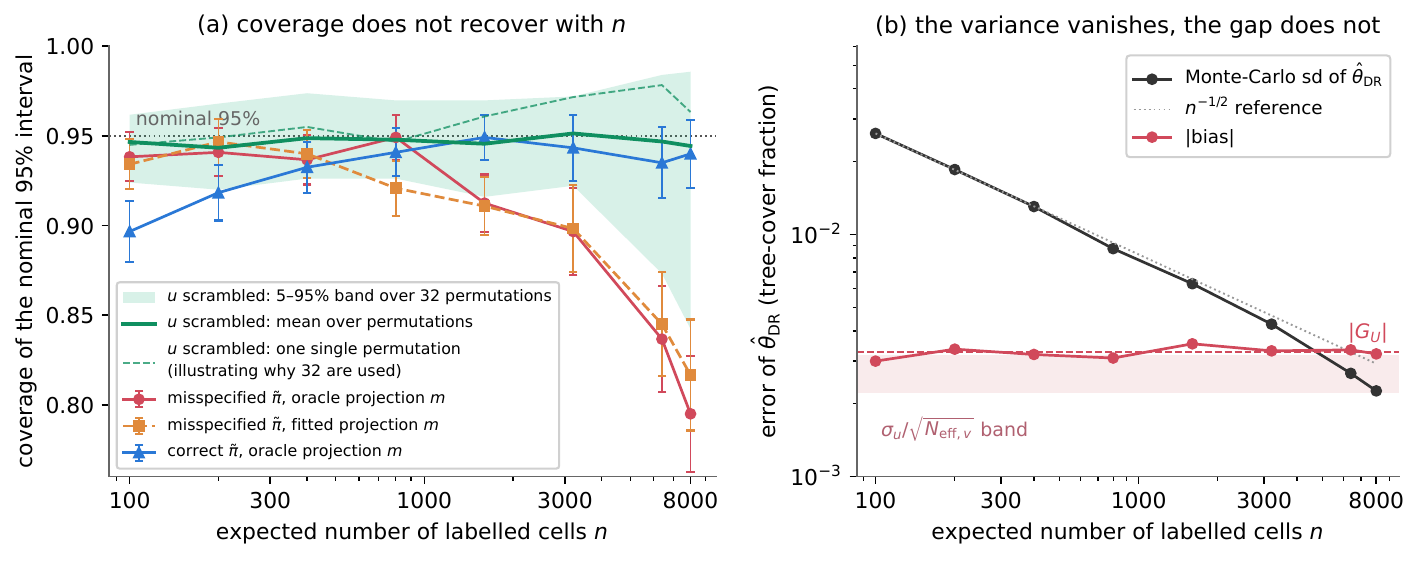}
\caption{Theorem~\ref{prop:asym} on a real, fully enumerated population:
$N = 48{,}175$ one-kilometre Estonian cells, $Y$ the WorldCover tree-cover
fraction, $f$ the $10$\,km block average, only the labelling simulated.
(a) Coverage of the nominal $95\%$ DR interval against expected label count:
a misspecified propensity with the oracle population projection as $m$ (red)
degrades as labels accumulate, whether that projection is supplied or fitted
(orange); a correct
propensity (blue) does not; scrambling $u$ removes the
decline (green: mean over $32$ free permutations, $5$--$95\%$ band shaded).
The dashed green line is a \emph{single} permutation, shown as a caution: it
happened to draw $G_U = 0.0004$, a quarter of a permutation standard deviation
from zero, and so shows near-perfect coverage for a reason that has nothing to
do with the mechanism --- which is why the control averages over $32$
(Appendix~\ref{app:perm}).
(b) The Monte-Carlo standard deviation follows $n^{-1/2}$
while the bias stays pinned at $G_U$, whose magnitude lies in the
theorem's $\sigma_u/\sqrt{N_{\mathrm{eff},v}}$ band; coverage falls where the
curves cross. Bars are $\pm 2$ Monte-Carlo standard errors.}
\label{fig:thm1}
\end{figure}

\subsection{Regression coefficients on the same population
(Figure~\ref{fig:thm1b})}\label{sec:sims-real-beta}

The theorem is easiest to state for the mean, and the collapse just documented
might be an artifact of targeting a level. We rerun everything --- same
population, map, design and omitted covariate --- with the estimand the
finite-population least-squares coefficient vector $\beta_U$ of $Y$ on
$\tilde x = (1, \text{cropland fraction}, \text{latitude})$, both centred and
observed on all $48{,}175$ cells. Here
$\beta_U = (0.6019, -0.8454, -0.0191)$ is a census parameter, so coverage is
again measurable. The estimand being linear,
Corollary~\ref{cor:ls} governs and not Corollary~\ref{cor:mest}: the reduction
is exact, and neither \eqref{eq:neffgrow} nor \eqref{eq:unifderiv} is invoked.
Only the vector mean of $\tilde x_i \Delta_i$ comes from the labels; $J_U$ and
$N^{-1}\sum_U \tilde x_i f_i$ from the full population.

\paragraph{The gap behaves exactly as the corollary says.} The population
quantity $J_U^{-1}G_U^{\psi} = (-0.00326,\, +0.00655,\, -0.00137)$ is identical
at $n = 100$, $n = 1{,}600$ and $n = 8{,}000$ to the six decimals printed, and
exactly zero in all three components when the propensity model is correctly
specified. The
intercept component reproduces the mean's gap to five decimals:
with centred covariates the intercept \emph{is} the population mean.
In $\beta$ units the cropland slope carries the largest raw gap, but those
units are not comparable across components: the Jacobian rescales gap and
standard error alike, and $r_n$ agrees across the $\beta$ and
estimating-function scales to within one percent at every $n$. On that scale
$G_U^{\psi} = (-3.26,\, +0.099,\, -0.339)\times 10^{-3}$, so the level's gap is
$33$ times the cropland slope's while the standard errors differ by a factor of
$9.0$; the residual factor of $3.7$ puts the level in trouble.

\paragraph{Where that residual factor comes \emph{not} from.} It is not spatial
coherence acting selectively on the level, as the scrambling control shows: the
spatial inflation factor $\kappa_k = g_k^{\top}Pg_k/\|g_k\|^2$ of
Appendix~\ref{app:cor} is $4.70$, $4.31$ and $4.68$ for intercept,
cropland slope and latitude slope --- a spread of ten percent, not a
factor of four. The factor of
$3.7$ measures instead the \emph{realised} draw of the residual field: the
standardised gaps $|G_{U,k}^{\psi}|/\mathrm{sd}_\xi(G_{U,k}^{\psi})$ are
$1.12$, $0.27$ and $0.24$. The exposure to carry to a new population is
$R_k(n) = \mathrm{sd}_\xi\bigl((J_U^{-1}G_U^\psi)_k\bigr)/\mathrm{se}_k(n)$,
at $n = 8{,}000$ equal to $1.11$, $1.26$, $1.23$ --- if anything slightly worse
for the slopes. The asymmetry below is a property of this population's draw,
not of the class of estimand: the corrected content of
Remark~\ref{rem:estimand}. Appendix~\ref{app:beta} gives the effective sample
sizes and the coherence calculation.

\paragraph{One rate, four constants.} Figure~\ref{fig:thm1b}(a) plots
$r_n = |J_U^{-1}G_U^\psi|_k/\mathrm{se}_k(n)$ against $n$ for the mean and the
three coefficients: four parallel lines of slope $1/2$ on log--log axes, so the
rate belongs to the theorem and not the estimand. Mean and regression intercept
cross the danger line $r_n = 1$ near
$n = 5{,}500$, whereas at $n = 8{,}000$ the cropland slope has reached only
$r_n = 0.34$ and the latitude slope $r_n = 0.29$: constants differing by a
factor of $3.7$ to $4.3$, the slopes needing $13$ to $19$ times as many labels
for the same exposure. Coverage
follows: as $n$ runs from $100$ to $8{,}000$ the intercept interval falls from
$94.5\%$ to $81.8\%$, while the cropland slope moves from $95.5\%$ to $98.2\%$
and the latitude slope from $93.8\%$ to $97.3\%$, turning mildly conservative,
the sandwich variance being calibrated for labelling noise that is itself
shrinking.

\paragraph{Two factors account for all of it.} Were the gap the interval's
only error, coverage would be exactly
$\Phi(1.96 - r_n) - \Phi(-1.96 - r_n)$, with no free parameters. It is not:
that one-factor curve misses by $1.45$ percentage points on average over the
$32$ points here and by $5.6$ at the worst (the intercept at $n = 8{,}000$,
predicting $76.2\%$ against the observed $81.8\%$), well outside Monte-Carlo
error. The missing factor is the one the previous paragraph named --- the
sandwich standard error is calibrated for labelling noise that is itself
shrinking, so it grows conservative as $n$ rises. Writing
$q_n = \mathrm{se}_k(n)/\mathrm{sd}_p(\hat\beta_k)$ for that inflation, which
runs from $0.97$ at $n = 100$ to $1.25$ at $n = 8{,}000$, coverage is
\[
\Phi\{q_n(1.96 - r_n)\} - \Phi\{-q_n(1.96 + r_n)\},
\]
still with no free parameters. Figure~\ref{fig:thm1b}(b) plots empirical
coverage against this prediction: the points sit on the $45^\circ$ line, mean
absolute error $0.37$ percentage points, worst case $1.2$, across four
estimands and eight sample sizes. So the decline is a location shift of known
size, partly masked at the slopes by a variance estimator that is
conservative for a separate and equally identifiable reason; neither is a
failure of the theorem. The one remaining departure, in the
correct-propensity arm at the smallest sample sizes ($90.3\%$ at $n = 100$,
recovering to nominal by $n \approx 800$), is ordinary small-sample error with
the opposite $n$-dependence.

\begin{figure}[tbp]
\centering
\includegraphics[width=\textwidth]{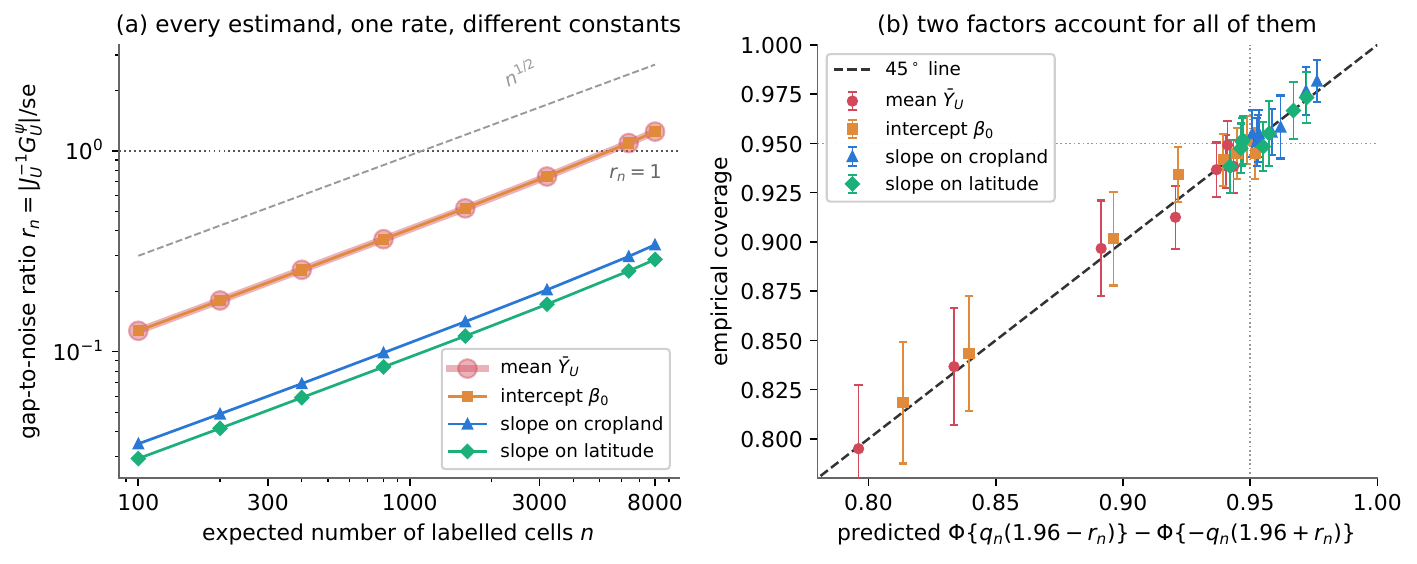}
\caption{Corollary~\ref{cor:ls} on the same real population: the estimand is
the finite-population least-squares coefficient vector of tree cover on
cropland fraction and latitude, alongside the mean of Figure~\ref{fig:thm1}.
(a) The gap-to-noise ratio
$r_n = |J_U^{-1}G_U^\psi|_k / \mathrm{se}_k(n)$ grows like $n^{1/2}$ for every
estimand --- one rate, as Corollary~\ref{cor:ls} requires --- but with an
estimand-specific constant: the two level parameters reach the danger line
$r_n = 1$ an order of magnitude in $n$ sooner than either slope.
(b) Empirical coverage against the parameter-free prediction
$\Phi\{q_n(1.96-r_n)\}-\Phi\{-q_n(1.96+r_n)\}$ combining the population gap
with the sandwich inflation $q_n = \mathrm{se}/\mathrm{sd}_p$, for all four
estimands and all eight sample sizes; the dashed line is $45^\circ$. Dropping
$q_n$ costs $1.45$ percentage points of mean absolute error against $0.37$.
Bars are $\pm 2$ Monte-Carlo standard errors.}
\label{fig:thm1b}
\end{figure}

\paragraph{What this does and does not establish.} On a population nobody
designed for the purpose, the gap Theorem~\ref{prop:asym} predicts is present,
of the predicted order, invariant to $n$ to six decimals, vanishes when the
propensity model is correctly specified, and loses its spatial inflation ---
not the gap itself,
each permutation retaining a smaller exchangeable finite-population gap ---
when the spatial arrangement is destroyed; componentwise for a census
regression coefficient as well as for a mean.
Not established is that the misspecification we chose is the one a practitioner
would commit. No longer our choice are the population, the residual
field and its $N_{\mathrm{eff},v}$; the map remains a construction, being a
block coarsening of the outcome rather than an independently produced
product.

\FloatBarrier
\subsection{Outside the scope: preferential labelling
(Figure~\ref{fig:fig2b})}\label{sec:sims-scope}

The three experiments above test what this paper claims. This one tests where
the claim stops. Selection is allowed to respond to the latent field itself,
$\pi_i = \mathrm{expit}(\alpha_0 + 0.8\tilde f_i + \omega\tilde\eta_i)$ with
$\omega \ge 0$ --- preferential sampling in the sense of \citet{diggle2010},
pattern~6 of Table~\ref{tab:taxonomy}, and excluded by the Scope paragraph of
Section~\ref{sec:ipw}. Nothing here is offered as a method. The question is
only how much unobserved selection it takes to void the machinery of
Sections~\ref{sec:design} and~\ref{sec:ipw}, so that a practitioner can judge
how much of a hazard the exclusion is. Everything else --- population, map,
rectifier, label budget, replication count --- is as in
Section~\ref{sec:sims-selection}.

\begin{figure}[tbp]
\centering
\includegraphics[width=0.72\textwidth]{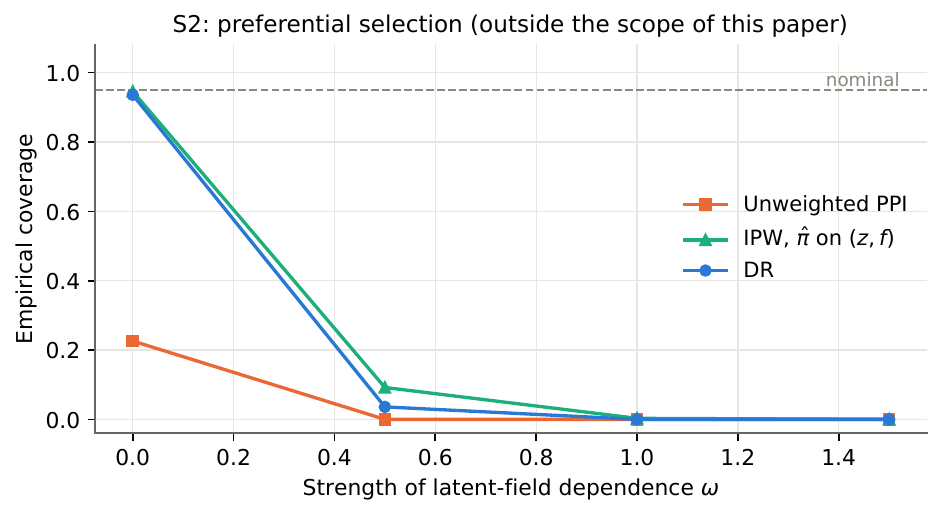}
\caption{Scenario S2: empirical coverage of the nominal $95\%$ interval as the
strength $\omega$ of latent-field dependence in the propensity is turned up
from zero, $1000$ replications per point. At $\omega = 0$ the mechanism is
observable and Section~\ref{sec:ipw} applies; any $\omega > 0$ and the
observable-based estimators lose coverage together, the doubly robust one no
better protected than the IPW one. Biases are in Table~\ref{tab:s2}.}
\label{fig:fig2b}
\end{figure}

\begin{table}[tbp]
\centering\footnotesize
\setlength{\tabcolsep}{6pt}
\begin{tabular}{lrrrrrrrr}
\toprule
& \multicolumn{4}{c}{coverage} & \multicolumn{4}{c}{bias} \\
\cmidrule(lr){2-5}\cmidrule(lr){6-9}
$\omega$ & $0$ & $0.5$ & $1.0$ & $1.5$ & $0$ & $0.5$ & $1.0$ & $1.5$ \\
\midrule
Unweighted PPI                   & $0.226$ & $0.000$ & $0.000$ & $0.000$ & $+0.284$ & $+0.724$ & $+1.094$ & $+1.339$ \\
IPW, $\hat\pi$ on $(z, f)$        & $0.946$ & $0.092$ & $0.003$ & $0.000$ & $+0.002$ & $+0.429$ & $+0.801$ & $+1.089$ \\
DR                               & $0.936$ & $0.036$ & $0.000$ & $0.000$ & $-0.003$ & $+0.433$ & $+0.810$ & $+1.097$ \\
\bottomrule
\end{tabular}
\caption{Scenario S2 (preferential labelling): the propensity depends on the
latent field through $\omega\tilde\eta_i$, $\omega \ge 0$. At $\omega = 0$ the
mechanism is observable and
Section~\ref{sec:ipw} applies. Any $\omega > 0$ and observable-based methods
are wrong, not merely inefficient: at $\omega = 0.5$ the bias is already three
to four design standard errors, twice the interval half-width, and
coverage has collapsed from $0.94$ to $0.09$ or below, the doubly robust
estimator no better protected than the IPW one. This is pattern~6 of
Table~\ref{tab:taxonomy}, a matter for a companion paper rather than a variance
correction. Code: \texttt{sim\_figure2.py}.}
\label{tab:s2}
\end{table}

The answer is: not much. Figure~\ref{fig:fig2b} traces it. At $\omega = 0$ the
mechanism is observable, S1 applies, and both weighted estimators are at nominal ($94.6\%$, $93.6\%$). At
$\omega = 0.5$ --- a modest dependence, half a standard deviation of the latent
field --- coverage is $9.2\%$ and $3.6\%$, and the bias, $+0.43$, is already
about twice the interval half-width. By $\omega = 1$ nothing covers at all.

Two features of the failure matter for how the exclusion should be read. It
runs through bias, as in S1, so no variance correction reaches it: this is not
a case where a wider interval would have been safe. And the doubly robust
estimator is \emph{not} better protected than the IPW one ($3.6\%$ against
$9.2\%$ at $\omega = 0.5$), which is worth saying because double robustness
might be expected to help. It does not, and the reason is structural: it
insures against one of two models being wrong, and an unmeasured field driving
selection breaks the propensity and the outcome model at once. What the case
needs is a joint model for $(Y, S)$, which is the companion paper and not this
one.

\section{Empirical illustration: land cover of Estonia}\label{sec:empirical}

\subsection{Woodland share}\label{sec:emp-woodland}

We estimate Estonia's areal woodland fraction from the ESA WorldCover 10\,m
2021 map \citep{zanaga2022worldcover} and LUCAS in-situ observations
\citep{dandrimont2020lucas}. The three-year gap between them does not affect
design validity --- the estimand is defined by the labels and the design by how
they were placed --- but it does enter the rectifier, which therefore carries
real land-cover change between $2018$ and $2021$ alongside classification
error. Read $R^2_{\mathrm{map}}$ and the tuning constants as properties of this
label--map pair, not as a measure of the map's accuracy at its own epoch. The census estimand $\theta = \bar Y_U$ is the
woodland share Estonia would report were every pixel classified by the LUCAS
field protocol:
$Y_i = \one\{\text{LUCAS letter group} = \text{C (woodland)}\}$,
$f_i = \one\{\text{WorldCover} = \text{tree cover}\}$, $U$ the
$N = 11{,}328$ master-grid cells. Reading $f_i$ \emph{at the grid point},
where $Y_i$ is defined, makes $\theta = \bar f_U + \bar\Delta_U$ an
identity on the same $N$ units, with $\bar f_U = 0.5734$; we use all
$n = 2{,}665$ Estonian points of 2018.

\paragraph{The design, and why it cannot be ignored.}
LUCAS draws from a $2\times2$\,km master grid stratified by \texttt{STR18},
not a spatial but, against CORINE 2018, a \emph{land-cover} stratification:
stratum 1 arable and pasture (CORINE 211/231/242/243), 4 forest
(311/312/313), 7 urban (112/121), 8 inland water (512). Realised fractions
$\pi_h = n_h/N_h$ run from $0.033$ to $0.846$, a $25.8$-fold spread, Kish
design effect $1 + \mathrm{CV}^2(w) = 1.234$; these are \emph{reconstructed
post-stratification weights}, not published inclusion probabilities. Not
self-weighting, the sample makes H\'ajek weights $w_i = W_h/n_h$ a requirement,
not a refinement; all estimates use them and the design
variance $\Var(\bar z_w) = (\sum_i w_i)^{-2}\sum_h W_h^2(1-\pi_h)s^2_{z,h}/n_h$,
whose $h$-th term is Proposition~\ref{thm:ht}'s within-stratum variance.

In Figure~\ref{fig:fig3} the design-weighted label-only estimate is $0.576$
(SE $0.0053$) and the naive map share $0.573$, both matching Estonia's known
forest cover of some $52$--$58\%$; ignoring the design gives $0.473$,
implausible and $10.7$ standard errors away: the unweighted mean estimates the
\emph{sample}'s stratum mix. The apparent ``definitional gap'' between
WorldCover tree cover and LUCAS woodland is mostly the forest stratum sampled
at $0.19$ against arable at $0.31$.

The power-tuned design-based PPI estimate is $0.5812$, 95\% interval
$[0.5699, 0.5925]$, SE $0.0058$: a $9.2\%$ \emph{increase in interval width}
relative to the classical
estimate, though map and labels are strongly associated (design-weighted
point-biserial $\rho = 0.76$,
$R^2_{\mathrm{map}} = 1 - \Var_w(Y-f)/\Var_w(Y) = 0.52$), exactly where an
i.i.d.\ analysis would predict large gains --- Section~\ref{sec:design}'s
criterion is the \emph{within-stratum} covariance, not this pooled
association, and the two point opposite ways here. Here $\Var_w$ is the
design-weighted sample variance,
\[
\Var_w(Z) = \Bigl(\sum_{i \in S} w_i\Bigr)^{-1}
\sum_{i \in S} w_i (Z_i - \bar Z_w)^2 ,
\qquad
\bar Z_w = \Bigl(\sum_{i \in S} w_i\Bigr)^{-1}\sum_{i \in S} w_i Z_i ,
\]
while the design-variance formula above instead takes
$\sum_{i \in S} w_i = 1$.

The correction is small: $\bar f_{S,w} = 0.5666$ against
$\bar f_U = 0.5734$, a gap of $0.0068$, or $1.1$ design standard errors of
$\bar f_{S,w}$ (SE $0.0062$) --- sampling noise, the design having stratified
on CORINE land cover, close to WorldCover tree cover, and balanced the sample
on the map in advance. And $\hat\tau_{\mathrm{pp}} = 0.76$ exploits the \emph{total} map--label
association, most of it between strata and already spent by the weights; the
second term of \eqref{eq:ppi} then adds variance the first no longer contains.

Proposition~\ref{thm:lamdesign} recovers part of it: minimising the stratified
design variance gives $\hat\tau_{\mathrm{dsn}} = 0.29$ rather than $0.76$, an
estimate of $0.5780$ --- two fifths of the correction --- and SE $0.0050$, a
$5.7\%$ \emph{reduction}. The arithmetic runs the other way for the classical
estimator: an SRS variance on these $2{,}665$ points gives $0.0097$, so
stratification alone delivers a design effect of $0.30$, a $3.3$-fold variance
reduction, more than PPI could have offered.

\begin{figure}[tbp]
\centering
\includegraphics[width=\textwidth]{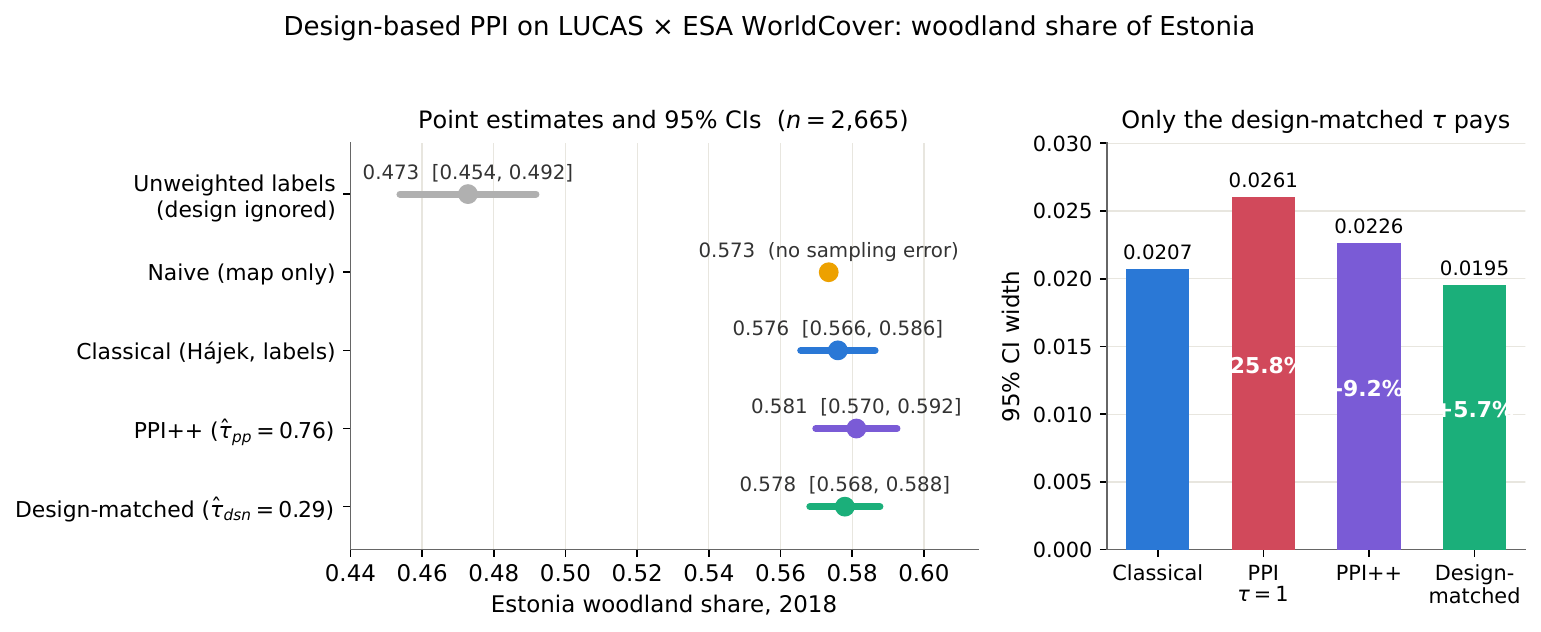}
\caption{Design-based PPI on LUCAS $\times$ ESA WorldCover, Estonia 2018,
$n = 2{,}665$. Left: label-only estimate (0.576), naive map share (0.573) and
unweighted label mean (0.473), the last $10.7$ standard errors away. Right: the
reconstructed \texttt{STR18} design has already balanced the sample on the map
($\bar f_{S,w} = 0.5666$ vs.\
$\bar f_U = 0.5734$; $1.1$ design standard errors), so power-tuned PPI widens
the interval by $9.2\%$ while $\hat\tau_{\mathrm{dsn}} = 0.29$
(Proposition~\ref{thm:lamdesign}) shortens it by $5.7\%$.}
\label{fig:fig3}
\end{figure}

\paragraph{A lesson, not an accident.} Only the correction
$\bar f_U - \bar f_{S,w}$ can be pre-empted by the design, not the variance
reduction governed by $\tau$ and the \emph{within}-stratum association; the
i.i.d.\ literature, where $\bar f_U - \bar f_S$ is $O_p(n^{-1/2})$ \emph{by
construction}, cannot separate them.

\paragraph{Change of support, and spatial stratification.}
Appendix~\ref{app:lucas2} adds two robustness checks. Re-read as the majority
class in $3\times3$ and $5\times5$ windows, at both sample and grid points so
that $\bar f_U$ and $\bar f_{S,w}$ stay on a common support, the map leaves the
estimate unchanged ($0.5812$, $0.5826$, $0.5815$; SE $0.0058$,
$0.0058$, $0.0056$): window width is immaterial, mixing supports is not.
Crossing the nine LUCAS strata with a $4\times4$ spatial grid, singleton cells
collapsed as Appendix~\ref{app:lucas2} sets out, makes the
interval \emph{worse} ($0.5832$, SE $0.0065$): the map has removed
$58\%$ of the outcome variance, only $0.85\%$ of the remainder lies between
spatial cells against $0.53\%$ under exchangeability, so
Proposition~\ref{thm:balance}'s gain is negligible against the cost of $94$
within-cell variances instead of nine.

\paragraph{Caveats.} Four, in Appendix~\ref{app:lucas}, in decreasing order of
consequence. The gold standard has \emph{two tiers}. Of the $2{,}665$ points
$1{,}697$ were visited in the field and $968$ classified in the office from
very-high-resolution imagery, a remote-sensing basis shared with the map that
can only flatter PPI; our conclusion being negative, the bias runs against
it. The weights are \emph{reconstructed}; \emph{coverage of the
frame} is imperfect at two rare strata, costing at most $0.08$ percentage
points of gain. And $\bar f_U$ is a \emph{grid} average over
the $N = 11{,}328$ equal-area cells, as \eqref{eq:ppi} requires; the areal
average over all $10\,$m pixels is $0.5665$, not $0.5734$.

\subsection{Seven estimands: what makes design-matched variance
bind?}\label{sec:emp-seven}

Holding sample, design, weights, support and code fixed, we vary only which
LUCAS letter group is the outcome and which WorldCover class the map, giving
seven estimands (the eighth, inland water, dropped for too few labelled points)
whose map quality spans $R^2_{\mathrm{map}} = 0.52$ down to $-0.35$; a negative
value means only that for grassland and shrubland the WorldCover boundary is so
misaligned with the LUCAS definition that subtracting the map \emph{increases}
the variance. Every row rests on the same $n = 2{,}665$ points and design, so
differences reflect the map--label relationship alone.

Table~\ref{tab:seven} collects the results. Spatial dependence is Moran's $I$
on a symmetrised $10$-nearest-neighbour graph (median nearest-neighbour
distance $2.2$\,km): for a variable $Z$ at the $n$ sampled sites with adjacency
weights $w_{ij} \ge 0$ ($w_{ii} = 0$, not the design weights $w_i = 1/\pi_i$),
\[
I(Z) = \frac{n}{S_0}\,
\frac{\sum_{i \ne j} w_{ij}(Z_i - \bar Z)(Z_j - \bar Z)}
{\sum_i (Z_i - \bar Z)^2},
\qquad S_0 = \sum_{i \ne j} w_{ij},
\]
The null is a randomisation, not a normal approximation: the outcomes here are
binary and two of them carry only about $30$ positives, where a Gaussian null
is least trustworthy, and a randomisation null is also what the design calls
for. We permute the values \emph{within design strata} over $19{,}999$ draws,
which fixes each stratum's marginal distribution and the sample's design
composition and destroys only the spatial arrangement; the table reports
$z = (I - \E_{\mathrm{perm}}[I])/\mathrm{sd}_{\mathrm{perm}}[I]$. The pooled
diagnostic below uses the corresponding free permutation. For reference, the
normality null would give $\E[I] = -1/(n-1) = -0.0004$ and
$\mathrm{sd}[I] = 0.0081$ from $S_0$,
$S_1 = \tfrac12\sum_{i \ne j}(w_{ij}+w_{ji})^2$ and
$S_2 = \sum_i (\sum_j w_{ij} + \sum_j w_{ji})^2$; it agrees closely on the
pooled rectifier and is mildly anti-conservative on the centred one.

\emph{Which} variable feeds the diagnostic separates the design-based from the
i.i.d.\ formulation: the reconstructed design has already removed the
between-stratum
variance, and LUCAS strata being land-cover classes, pooled $I$ largely
measures the strata. We therefore centre within stratum, the object
Proposition~\ref{thm:balance} bounds.

\begin{table}[tbp]
\centering\scriptsize
\setlength{\tabcolsep}{3pt}
\begin{tabular}{lrrrrrrrrrrrrr}
\toprule
& & \multicolumn{3}{c}{estimates} & \multicolumn{3}{c}{map and tuning} &
\multicolumn{3}{c}{SE reduction vs.\ classical (\%)} &
\multicolumn{3}{c}{spatial structure} \\
\cmidrule(lr){3-5}\cmidrule(lr){6-8}\cmidrule(lr){9-11}\cmidrule(lr){12-14}
Estimand & $n_+$ & $\bar Y_{S,w}$ & $\bar f_U$ & $\hat\theta_{\mathrm{dsn}}$ &
$R^2_{\mathrm{map}}$ & $\hat\tau_{\mathrm{pp}}$ &
$\hat\tau_{\mathrm{dsn}}$ & PPI & PPI++ & dsn &
$I(Y)$ & $I(\Delta)$ & $z$ \\
\midrule
Woodland      & 1260 & 0.576 & 0.573 & 0.578 & $0.52$  & 0.76 & 0.29 & $-25.8$ & $-9.2$ & $+5.7$  & $-0.000$ & $-0.002$ & $-0.2$ \\
Wetland       &  226 & 0.056 & 0.038 & 0.054 & $0.46$  & 0.80 & 0.59 & $+7.5$  & $+12.6$ & $+14.5$ & $0.108$  & $0.063$  & $+7.1$ \\
Cropland      &  479 & 0.138 & 0.147 & 0.135 & $0.44$  & 0.70 & 0.35 & $-16.7$ & $+0.4$ & $+8.2$  & $0.025$  & $0.016$  & $+1.9$ \\
Bare / sparse &   31 & 0.008 & 0.001 & 0.007 & $0.18$  & 0.80 & 0.95 & $+36.0$ & $+34.9$ & $+36.1$ & $0.018$  & $0.007$  & $+0.9$ \\
Artificial    &  127 & 0.020 & 0.008 & 0.020 & $0.16$  & 0.67 & 0.41 & $-3.9$  & $+2.5$ & $+4.0$  & $0.028$  & $0.008$  & $+1.0$ \\
Grassland     &  485 & 0.156 & 0.184 & 0.154 & $-0.33$ & 0.36 & 0.19 & $-45.2$ & $+0.6$ & $+3.3$  & $0.004$  & $0.001$  & $+0.2$ \\
Shrubland     &   30 & 0.011 & 0.003 & 0.011 & $-0.35$ & 0.00 & 0.00 & $-16.5$ & $0.0$  & $0.0$   & $0.017$  & $0.017$  & $+2.2$ \\
\bottomrule
\end{tabular}
\caption{Seven land-cover estimands for Estonia, 2018 ($n = 2{,}665$ LUCAS
points, H\'ajek-weighted; $N = 11{,}328$ master-grid cells; map shares at the
$N$ grid points), sorted by map quality. $n_+$: labelled positives.
$\bar Y_{S,w}$: classical design-weighted label-only estimate. $\bar f_U$:
population map share. $\hat\theta_{\mathrm{dsn}}$: rectifier estimate at
$\hat\tau_{\mathrm{dsn}}$. $R^2_{\mathrm{map}} = 1 - \Var_w(Y-f)/\Var_w(Y)$:
the map alone, before tuning.
$\hat\tau_{\mathrm{pp}} = \widehat{\Cov}(Y,f)/\widehat{\Var}(f)$: PPI++ value;
$\hat\tau_{\mathrm{dsn}}$: design-matched value \eqref{eq:lamdesign}; both
clipped to $[0,1]$. SE columns: percentage \emph{reductions} against
$\bar Y_{S,w}$ (positive better) at $\tau = 1$, $\hat\tau_{\mathrm{pp}}$,
$\hat\tau_{\mathrm{dsn}}$. $I(\cdot)$: Moran's $I$ on the
\emph{within-stratum-centred} variable; $z$: its standardisation for
$Y - \hat\tau_{\mathrm{dsn}} f$ against a within-stratum permutation null
($19{,}999$ draws).
For shrubland map and label never co-occur ($10$ points carry WorldCover shrub,
$30$ the LUCAS label, zero overlap), so both $\hat\tau$ clip to zero and every
rectifier collapses to the classical one.}
\label{tab:seven}
\end{table}

\paragraph{After the design, a weak map still yields white noise.}
Wetland has $I(Y) = 0.108$, thirteen null standard deviations, wetness being
organised far more finely than its stratum; woodland has $I(Y) = -0.000$, its
structure the forest stratum's and already absorbed by the design. Pooled,
woodland's $I(Y)$ is $0.080$: a distinction an i.i.d.\ analysis cannot draw.

The rectifiers are quiet: $I(\Delta) \le I(Y)$ in every row, and only three
approach the two-sigma line --- wetland at $z = 7.1$, shrubland marginally at
$z = 2.2$, cropland just short at $z = 1.9$ --- the remaining four lying
between
$z = -0.2$ and $z = 1.0$, off the identity line of Figure~\ref{fig:fig4}
(left). Wetland retains $I(\Delta) = 0.063$, consistent with regionally
structured error in the wetland class --- the mechanism of
Section~\ref{sec:design}, and plausibly a seasonal-inundation signature read
inconsistently across Estonia's bogs, though we do not demonstrate that cause
from these data.

The whitening is unrelated to map quality: grassland has the second-worst map
($R^2_{\mathrm{map}} = -0.33$) yet $I(\Delta) = 0.001$, $z = 0.2$, and the two
are rank-correlated at $-0.14$. Nor is the panel confounded: pooled and unweighted, $R^2_{\mathrm{map}}$ and $I(Y)$ are
rank-correlated at $0.82$, within strata at $0.11$, and the cell one most wants
--- a weak map over a structured outcome --- is populated. Structured
rectifiers require structured map \emph{error}, not low accuracy.

\paragraph{Consequence: the diagnostic is not $R^2$.}
On an $8 \times 8$ spatial partition, its $H = 51$ occupied cells serving as
strata for this diagnostic only and null share $(H-1)/(n-1) = 1.88\%$, the
between-stratum share of the
within-stratum-centred rectifier is $2.40\%$ for woodland, $2.17\%$ for
cropland, $1.95\%$ for grassland and $2.47\%$ for bare ground --- all within
sampling noise of the null, across $R^2_{\mathrm{map}}$ from $+0.52$ to
$-0.33$. Only wetland ($5.04\%$) and shrubland ($4.83\%$) separate: wetland
because its map error is regional, shrubland because no shrub pixel meets a
sampled point, so $\hat\tau = 0$ and $\Delta = Y$. The ratio \eqref{eq:vsucc}
of Section~\ref{sec:emp-soc}, centred at $1$ under its permutation null, agrees
on the first and not the second: wetland $0.967$ ($p = 0.002$), shrubland
$0.997$ ($p = 0.28$), cropland $1.018$ ($p = 0.97$). Read the diagnostic ---
preferably more than one --- on the \emph{within-stratum rectifier}, not on map
accuracy; Sections~\ref{sec:design}--\ref{sec:ipw} are unaffected
(Section~\ref{sec:sims}).

\paragraph{Power tuning must match the design.}
Plain PPI fixes $\tau = 1$: across the seven estimands it \emph{loses} $9.2\%$
of precision on average and is worse than ignoring the map in five of seven
cases, catastrophically for grassland ($-45.2\%$). PPI++
\citep{angelopoulos2023ppipp} sets
$\hat\tau_{\mathrm{pp}} = \widehat{\Cov}(Y,f)/\widehat{\Var}(f)$ and repairs
most of that: mean gain $+6.0\%$, worse than classical only for woodland, the
best map in the panel, where it gives up $9.2\%$ --- a mis-specified
objective, not a tuning problem.

Proposition~\ref{thm:lamdesign} identifies it: $\hat\tau_{\mathrm{pp}}$
minimises the i.i.d.\ variance, which loads the \emph{total} covariance of $Y$
and $f$, while the design has already extracted the between-stratum part, so
re-spending it inflates $\tau$. The gap is not second order --- $0.76$ against
$0.29$ for woodland, $0.70$ against $0.35$ for cropland, $0.36$ against $0.19$
for grassland, a factor of two wherever LUCAS stratum and WorldCover class are
nearly the same variable, and almost none ($0.80$ against $0.95$) for bare
ground. Substituting $\hat\tau_{\mathrm{dsn}}$ raises the mean gain to
$+10.3\%$ and dominates or ties PPI++ in every row (shrubland ties at zero).
That no row is worse than classical should not be sold as evidence: the clipped
optimisation has $\tau = 0$, the classical estimator, in its feasible set, so
the fitted rule cannot raise the same estimated design variance it minimised.
The empirical content is the size of the gain and the margin over PPI++, not
the sign. Nor does Proposition~\ref{thm:lamdesign} remove the finite-sample
optimism of tuning and reporting on one sample: its $\sqrt{n}$ result buys
first-order design validity for the estimator --- the correction is free at
that order --- and nothing more. The same caution applies to PPI++ itself:
\citet{mani2025nofreelunch} show that its asymptotic guarantee of never losing
to the label-only estimator does not survive at finite $n$, improvement
requiring a sample-size-dependent correlation threshold. Population optima and
finite-sample tuning are different objects on both sides of the comparison.

\subsection{Soil organic carbon: a continuous outcome, and the limits of the
diagnostic}\label{sec:emp-soc}

Where Section~\ref{sec:emp-seven}'s estimands are binary indicators predicted
by a categorical map, weak by accident, here the outcome is continuous, the map
a regression surface weak by construction, and the sample smaller by a factor
of seventeen.

\paragraph{Labels.} The LUCAS~2018 soil module provides topsoil (0--20\,cm)
soil organic carbon stock (SOCS) at $15{,}389$ points across the European
Union (EU)
\citep{chen2024socs}, a subset of the roughly $20{,}000$ targeted; the
shortfall reflects the pedotransfer function's ancillary inputs, not the carbon
measurement, and is not ignorable by construction. Matching those identifiers
to the harmonised LUCAS~2018 in-situ records recovers each field-observed
letter group, and $159$ fall in Estonia (woodland $73$, cropland $49$,
grassland $30$, shrubland $3$, artificial $2$, bare $2$, wetland $0$). The match is not binding --- a bounding box selects
$185$ SOCS points, all $26$ failures lying in Latvia (Appendix~\ref{app:soc})
--- so the reduction to $159$ is inherited from the stock database.

\paragraph{Frame, estimand and positivity.} The estimand is a census
parameter on the \emph{land} frame: $\theta = \bar Y_{U_{\mathrm{land}}}$, the
mean topsoil SOC stock under the LUCAS soil protocol over the
$N_{\mathrm{land}} = 10{,}807$ master-grid cells whose grid-point WorldCover
class is a land class. The remaining $521$ cells --- permanent water, $4.6\%$
of the $11{,}328$ --- are outside the target, not a positivity failure: one
cannot core a lake, and a mean SOC stock is not defined there. Accordingly
$\bar f_U$ in Table~\ref{tab:soc} is the map average over
$U_{\mathrm{land}}$, on the same support as $Y$.
The grid flags $461$ of the
$11{,}328$ Estonian cells as soil-module cells, $60\%$ arable against $21\%$ of
the grid and $30\%$ forest against $59\%$; the $159$ analysed points inherit
the skew ($48\%$ arable, $48\%$ forest). On the soil-module
frame the estimates of Table~\ref{tab:soc} are working design-weighted,
post-stratified quantities rather than exactly design-unbiased ones: the module
is not a probability subsample of the master grid and the weights are
reconstructed, so exact design validity would require module selection and
SOCS availability to be ignorable within the reconstructed strata. To keep the working post-stratification on the same
support as the estimand, the reconstructed \texttt{STR18} weights take $N_h$
over $U_{\mathrm{land}}$, dropping stratum~$8$ (inland water, $508$ cells).
Taking $N_h$ over the full grid instead --- so that the water cells' mass is
carried by land labels --- moves the design-weighted mean from
$9.24$ to $9.15$\,kg\,m$^{-2}$, a shift of $0.13$ standard errors: the choice
of frame is visible but small beside everything else here. Extended from the soil-module frame to all
land cells the estimates further assume $f$ carries the
difference between frames, which our interval widths do not price; we flag that
extrapolation as the largest untested assumption.

\paragraph{Map.} No public wall-to-wall SOC raster exists for the domain, so
the map must be built outside the Estonian labels. We fit $\hat g(\ell) = $
mean SOCS within LUCAS letter group $\ell$ on the $15{,}230$ soil points
\emph{outside} Estonia and set $f(\text{pixel}) = \hat g(\chi(c))$, $c$ the
WorldCover class of the pixel and $\chi$ the crosswalk of
Section~\ref{sec:emp-seven}, giving
$\hat g = (5.10,\, 3.46,\, 7.67,\, 6.55,\, 5.94,\, 3.07,\, 13.40)$
kg\,m$^{-2}$ for artificial, cropland, woodland, shrubland, grassland, bare and
wetland: deliberately crude --- seven values, explaining
$R^2_{\mathrm{map}} = 0.11$ of the outcome variance.

\begin{table}[tbp]
\centering\footnotesize
\setlength{\tabcolsep}{5pt}
\begin{tabular}{lrrrrrr}
\toprule
& $n$ & $\bar f_S$ & $\bar f_U$ & estimate & SE & SE red.\ (\%) \\
\midrule
\multicolumn{7}{l}{\emph{design-weighted (primary)}} \\
classical (labels only, H\'ajek) & 159 & --- & --- & $9.24$ & $0.755$ & --- \\
naive (map only) & --- & --- & $6.89$ & $6.89$ & --- & --- \\
PPI, $\tau = 1$ & 159 & $6.63$ & $6.89$ & $9.50$ & $0.765$ & $-1.3$ \\
PPI++, $\hat\tau_{\mathrm{pp}} = 1.32$ & 159 & $6.63$ & $6.89$ & $9.58$ & $0.771$ & $-2.1$ \\
design-matched, $\hat\tau_{\mathrm{dsn}} = -0.29$ (unconstrained) & 159 & $6.63$ & $6.89$ & $9.17$ & $0.755$ & $+0.1$ \\
design-matched, clipped to $\tau = 0$ & 159 & $6.63$ & $6.89$ & $9.24$ & $0.755$ & $0.0$ \\
\midrule
\multicolumn{7}{l}{\emph{unweighted, conditioning on the realised soil sample}} \\
classical (labels only) & 159 & --- & --- & $8.33$ & $0.607$ & --- \\
PPI, $\tau = 1$ & 159 & $5.84$ & $6.89$ & $9.37$ & $0.574$ & $+5.5$ \\
PPI++, $\hat\tau = 1.40$ & 159 & $5.84$ & $6.89$ & $9.79$ & $0.571$ & $+6.0$ \\
\midrule
\multicolumn{7}{l}{\emph{sensitivity to the construction of $\hat g$ (unweighted)}} \\
$\hat g$ from Baltic/Nordic points only & 159 & $6.17$ & $7.70$ & $9.86$ & $0.573$ & $+5.7$ \\
$\hat g(\ell) + \hat\beta(\text{lat}-60)$ & 159 & $6.72$ & $7.66$ & $9.27$ & $0.577$ & $+5.0$ \\
measured bulk density only & 65 & $5.53$ & $6.89$ & $9.02$ & $0.875$ & $+5.3$ \\
oracle $\hat g$ fitted \emph{on} Estonia$^{\dagger}$ & 159 & $8.27$ & $8.91$ & $8.96$ & $0.568$ & $+6.5$ \\
\bottomrule
\end{tabular}
\caption{Mean topsoil (0--20\,cm) soil organic carbon stock over Estonia,
kg\,m$^{-2}$. $\bar f_S$ is the map average over the labelled points
(H\'ajek-weighted in the first block, unweighted in the others) and $\bar f_U$
its average over the $N_{\mathrm{land}} = 10{,}807$ land cells of the master
grid, on
the grid-point support of Section~\ref{sec:emp-seven}; their gap is what the
rectifier corrects. The last column is the percentage SE \emph{reduction}
against the same block's classical estimator, positive better. The
design-weighted block, primary here, uses the LUCAS strata with the three rare
strata collapsed (six points, $46\%$ of the design variance) and the
post-stratification frame restricted to $U_{\mathrm{land}}$, matching the
estimand; the full-grid frame is reported as a sensitivity in the text. The
soil module is not a probability subsample of the master grid either way, so
its weights remain a working approximation. The classical estimator for the measured-bulk-density
subset is $7.66$ with SE $0.924$. In the row
$\hat g(\ell) + \hat\beta(\mathrm{lat}-60)$, $\mathrm{lat}$ is latitude in
degrees north, $60$ a centring constant near the Estonian median and
$\hat\beta$ a least-squares coefficient on the same out-of-Estonia points. The
oracle row fits $\hat g$ on the Estonian labels: an optimistic in-sample
benchmark for what a better-calibrated map of the same form might deliver, not
a valid estimator and not a strict upper bound, its wetland value being absent
as well.
$^{\dagger}$No Estonian soil point falls on a WorldCover wetland cell, so the
oracle $\hat g$ has no wetland value and wetland cells enter its $\bar f_U$ at
zero --- one more reason to read the row as an optimistic in-sample benchmark
rather than a bound in either direction. Code: \texttt{lucas/soc\_final3.py} for the first two blocks,
\texttt{lucas/soc\_sens.py} for the third.}
\label{tab:soc}
\end{table}

\begin{figure}[tbp]
\centering
\includegraphics[width=\textwidth]{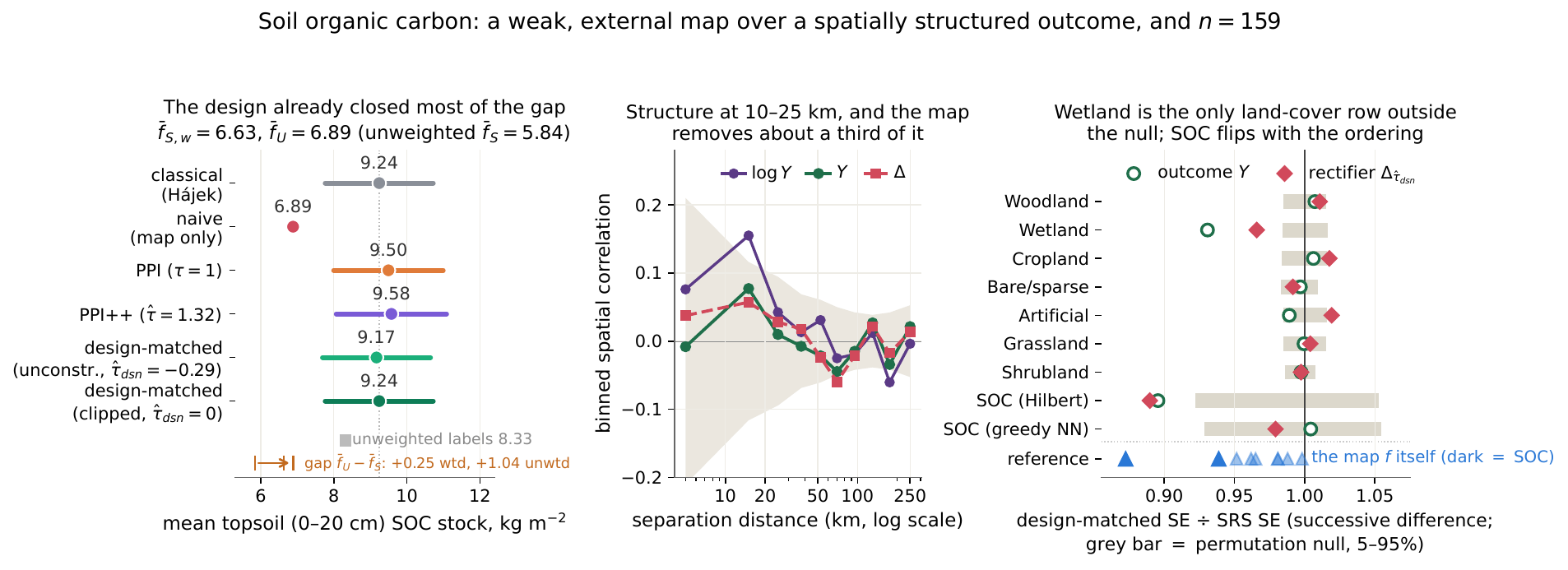}
\caption{Soil organic carbon over Estonia. (a) The estimators of
Table~\ref{tab:soc} with $95\%$ intervals, on both readings of the
design-matched coefficient --- unconstrained $\hat\tau_{\mathrm{dsn}} = -0.29$
and clipped to $0$, which returns the classical estimator; the orange arrow is
the gap between the map's average over the labelled sample and over the
population. (b) Binned correlogram: unlike Figure~\ref{fig:fig4}, the rectifier
tracks the outcome closely, so little is absorbed; the shaded band is
$\pm 1.96/\sqrt{n_{\text{pairs}}}$. (c) The yardstick of \eqref{eq:vsucc}:
successive-difference SE over SRS SE, with the $5$--$95\%$ permutation null as
a grey bar; below $1$ a spatially matched variance estimator pays. For the
seven estimands ($n = 2{,}665$) it is stable and only wetland leaves the null;
for SOC ($n = 159$) the two orderings disagree, and both are plotted.}
\label{fig:fig5}
\end{figure}

\paragraph{The map earns its place by shifting the point estimate, not by
cutting variance.}
Take the unweighted block of Table~\ref{tab:soc} first, an i.i.d.\ PPI
treatment of these $159$ points. The labelled mean is $\bar Y_S = 8.33$
kg\,m$^{-2}$ with SE $0.61$ and PPI returns $9.37$ with SE $0.57$: the interval
narrows by $5.5\%$, exactly what
$R^2_{\mathrm{map}} = 0.11$ predicts, while the point estimate moves by
$+1.04$, nearly two standard errors. The two map averages explain it,
$\bar f_S = 5.84$ against $\bar f_U = 6.89$: the soil subsample
over-represents cropland ($31\%$ against $15\%$ of land cells),
under-represents woodland ($46\%$ against $60\%$) and contains no wetland point
against $4\%$, cropland being the lowest- and wetland the highest-carbon class
in $\hat g$. The sensitivity rows bracket the shift --- Baltic and Nordic points alone give
$9.86$, a latitude term $9.27$, an oracle $\hat g$ fitted on Estonian labels
$8.96$ --- all above the classical estimate,
none close to the naive map-only $6.89$. The correction is driven throughout by
the sample-population imbalance in the map, not by any design guarantee: this
block conditions on the realised soil sample and treats it as i.i.d., which is
what the sensitivity rows are for.

\paragraph{And the design knows it too.} Now weight. Counted on the map's own
land-cover classes, so sample and population are alike, the soil points are
$35\%$ cropland and $45\%$ woodland against $15\%$ and $60\%$ of the
population; H\'ajek weights move them to $16\%$ and $64\%$, so
$\bar f_{S,w} = 6.63$ and the gap falls from $1.04$ to $0.25$. The
design-weighted classical estimate, $9.24$ with SE $0.755$, sits within one
standard error of every rectifier variant in the table. The design-matched
coefficient is $\hat\tau_{\mathrm{dsn}} = -0.29$: unconstrained it uses the map
with reversed sign for $9.17$ (SE $0.755$), a reduction of $0.1\%$; clipped to
$0$ it returns the classical estimator. Proposition~\ref{thm:lamdesign}
prescribes the unconstrained value, the design variance being convex in $\tau$,
but $0.1\%$ from a negative coefficient estimated on $159$ points, six carrying
$46\%$ of the design variance, is well inside the sampling error of
$\hat\tau$. The lesson is not ``use the map backwards'' but that the design has
already spent the map's information: $f = \hat g(\chi(c))$ is a function of
land cover and \texttt{STR18} stratifies on it ---
Section~\ref{sec:emp-woodland}'s mechanism by another door. The two blocks
disagree by $0.9$ kg\,m$^{-2}$, more than either reports as a standard error: a
weighting and compositional effect, not sampling noise --- and not a design
effect in the survey-sampling sense of a variance ratio.

\paragraph{$\hat\tau$ need not lie in $[0,1]$.} The power-tuned coefficient is
$\hat\tau_{\mathrm{pp}} = 1.40$ unweighted and $1.32$ weighted, yet
Section~\ref{sec:emp-seven} clipped $\hat\tau$ to $[0,1]$. For a census mean
with $N \gg n$ the variance-optimal $\tau$ is unconstrained, so clipping is an
extrapolation constraint, not a variance-optimality device: it keeps
$\hat\theta(\tau)$ in the convex hull of the classical estimator ($\tau = 0$)
and the untuned PPI estimator ($\tau = 1$),
bounding the damage a badly estimated $\hat\tau$ can do at small $n$. Here a
seven-valued step function is under-dispersed relative to a continuous outcome,
$\widehat{\Var}(f) = 3.49$ against $\widehat{\Var}(Y) = 58.7$, a
standard-deviation ratio of $4.10$, so the slope
$\hat\rho\,\mathrm{sd}(Y)/\mathrm{sd}(f)$ exceeds one whenever
$\hat\rho > 1/4.10 = 0.24$; $\hat\rho = 0.34$ ($R^2 = 0.12$) gives $1.40$.
Since $\hat\tau_{\mathrm{dsn}}$ went the other way, past zero, clipping at
either end can be wrong; implementations should expose the unclipped
coefficient.

\paragraph{The rectifier is \emph{not} whitened, but it is close.} SOCS being
strongly right-skewed (sample skewness $3.3$ at $n = 159$), the normal-theory
null for Moran's $I$ is untrustworthy here, so we use a $4{,}999$-draw
randomisation null. On a $25$\,km band the outcome is
structured, $I(Y) = 0.086$ ($p = 0.019$), and so, marginally, is the
rectifier, $I(\Delta) = 0.060$ ($p = 0.054$); on a $10$-nearest-neighbour graph
the two are $0.044$ ($p = 0.057$) and $0.024$ ($p = 0.15$), on a
$5$-nearest-neighbour graph $0.030$ and $0.021$, neither significant, and on a
$50$\,km band both are null. Skewness costs power: on the log scale, where it
falls to $0.8$, the outcome gives $I = 0.144$ ($p = 0.001$) at $25$\,km and
$0.110$ ($p = 0.006$) at $5$ neighbours. The correlogram of
Figure~\ref{fig:fig5}(b) localises the dependence to one band --- the
correlation of $Y$ is $-0.01$ below $10$\,km, $+0.10$ between $10$ and
$25$\,km ($+0.08$ and $+0.16$ on the log scale) and beyond $25$\,km
indistinguishable from zero --- with the rectifier's correlogram below the
outcome's but inside the same envelope. The strong form of the
Section~\ref{sec:emp-seven} claim is therefore not general: a map explaining a
tenth of the variance removes about a third of the structure at $25$\,km.

\paragraph{A design-matched yardstick, and what it says.} Detectable is not
consequential: Moran's $I$ tests for no dependence, not whether a
design-matched variance estimator beats the i.i.d.\ one. Order the labelled sample along a Hilbert space-filling curve,
write $z_{(i)}$ for the rectifier in that order, and take the
successive-difference estimator \citep{matern1947,dorazio2003}
\begin{equation}\label{eq:vsucc}
v_{\mathrm{succ}}
= \frac{1}{2n(n-1)} \sum_{i=1}^{n-1} \bigl(z_{(i+1)} - z_{(i)}\bigr)^2 ,
\end{equation}
standard for spatially systematic samples \citep[Ch.~8]{wolter2007}. Charging
only variability between spatial neighbours, it is made \emph{smaller} than the
i.i.d.\ estimator $s^2/n$ by positive short-range dependence --- the gap
\citet{babcock2018remote} document for forest inventory. We report the ratio
$\sqrt{v_{\mathrm{succ}}}\big/(s/\sqrt n)$ against a null from permuting the
values over the fixed ordering: below one and outside the null, a spatially
matched variance would pay.

For the seven land-cover estimands the yardstick is stable and agrees with
Moran's $I$. On $n = 2{,}665$ points with a median Hilbert step of $3.9$\,km
the $5$--$95\%$ permutation null is close to $[0.984, 1.016]$ in all but the
two sparsest rows; against it the within-stratum map $f$ is spatially organised
in six of seven cases ($p \le 0.04$, ratios between $0.95$ and $0.998$), while
outcome and rectifier sit inside the null everywhere except wetland, where $Y$
gives $0.931$ ($p = 0.001$) and $\Delta$ gives $0.967$ ($p = 0.002$) --- the
only row Moran's $I$ also flags strongly.

For SOC they do not converge. Under a Hilbert ordering (median step $14.5$\,km)
the ratio is $0.895$ for the outcome and $0.890$ for the rectifier, both
outside a null of $[0.92, 1.05]$; under a greedy nearest-neighbour ordering of
the same points (median step $10.5$\,km) it is $1.004$ and $0.979$, comfortably
inside, with $f$ below the null under both ($0.872$ and $0.939$). At
$n = 2{,}665$ the ordering moves the ratio in the third decimal, at $n = 159$ by
more than the width of the null band, a space-filling curve through $159$
scattered points making a different set of $158$ comparisons than a greedy
path. We decline a verdict: the outcome has structure at $10$--$25$\,km, the
rectifier less, and $159$ points cannot resolve whether the residue is worth a
spatially matched variance estimator.

\paragraph{The corrected claim.} What Section~\ref{sec:emp-seven} should have
said concerns \emph{which surface} the diagnostic reads: computed on the raw
rectifier, the statistic of Table~\ref{tab:seven} returns $z = +7.4$ for
woodland, $+10.3$ for cropland and $+14.1$ for wetland, clearing the two-sigma
line for five of seven estimands though nothing about the map changed. This is
an empirical recommendation, not a theoretical result.

Here we part company with \citet{salerno2026spatial}, amicably: they propose an
optional gate in which a Moran test on the labelled residuals decides between
the spatial variance estimator and an i.i.d.\ fallback, a stability heuristic
outside their main theorem. Under a stratified design that gate can open with
nothing behind it. On the pooled rectifier it opens for five of the seven
land-cover estimands ($z$ from $+2.4$ to $+14.1$); on the
within-stratum-centred rectifier for one decisively (wetland, $z = +7.1$)
and one marginally (shrubland, $+2.2$, with cropland just short at $+1.9$);
and the
variance ratio the fallback protects selects wetland alone ($0.967$,
$p = 0.002$). Woodland is sharpest: pooled $z = +7.4$, within-stratum
$z = -0.2$, ratio $1.010$, dead centre of the null --- a gate reading the first
number would buy a spatial variance estimator, and its finite-sample
instability, for the best map in the panel. If a gate is wanted --- their
reason, HAC instability at small $n$, being sound --- we would centre within
design strata and gate on \eqref{eq:vsucc} instead.

\paragraph{Caveats.} Three, in Appendix~\ref{app:soc}. The labelled soil sample
contains no wetland point, so $\hat g(\text{H}) = 13.40$ rests on eight
non-Estonian observations while wetland carries $4\%$ of Estonian land area and
the largest $\hat g$ value; the $-1.5$\,kg\,m$^{-2}$ spread across the
sensitivity rows of Table~\ref{tab:soc} is the honest uncertainty on it. SOCS
is itself partly a model output, bulk density being predicted where unmeasured;
the subset with measured bulk density ($n = 65$) reproduces the result
(PPI $9.02$ against a classical $7.66$). And the labels are point
observations while the map is a pixel grid, a mismatch immaterial in
Section~\ref{sec:emp-woodland}.

\section{Discussion}\label{sec:discussion}

Two things here point in opposite directions. The theory says that
when labelling is clustered or covariate-driven, treating PPI as i.i.d.\ is not
conservative but wrong, and the design-matched variance is mandatory. The
empirical sections say that on the map side --- where the structure would have
to come from the rectifier rather than the design --- the payoff is, in these
data, mostly small or unresolved: one estimand of seven shows it, and the soil
application leaves the question open. The two are compatible, but not for the reason we first proposed.
The land-cover panel of Section~\ref{sec:emp-seven} suggested that a prediction
map leaves behind spatially white error even when the map is bad; the soil
carbon study of Section~\ref{sec:emp-soc} warns against generalising that to
continuous outcomes, since there a map explaining a tenth of the variance
leaves a rectifier that retains a visible part of the outcome's own structure.
The evidence bears less than we first claimed. In the land-cover panel, on the
within-stratum-centred rectifier that
Proposition~\ref{thm:balance} actually governs, exactly one of the seven
estimands is a candidate for a spatially matched variance: wetland, at ratio
$0.967$ ($p = 0.002$); the other six sit inside their permutation nulls. In the
soil application the same ratio is not stable to the ordering at $n = 159$ ---
$0.890$ under a Hilbert path, below the null band, and $0.979$ under a greedy
nearest-neighbour path, inside it --- so that application returns no verdict.
The honest summary is narrow: six negatives, one positive, one application that
could not be called; not a general law about classification error. What would
make a design-matched variance bind is structure in \emph{how labels were
placed} --- settled by the design, and where this paper's theory works --- or
structure in \emph{how the map errs}, or dependence reaching the scale at which
labels are separated, the last two being what the diagnostic
measures. Hence a two-step procedure, the second step conditional on the
first. Identify the labelling mechanism and use the variance its design calls
for: under simple random sampling the fixed population's spatial correlation
does not enter the design variance at all, and under a clustered mechanism a
cluster or block variance is already mandatory. Then, for stratified or
spatially spread samples, where what is in question is whether a further
spatially matched variance is worth its cost,
compute the successive-difference variance ratio \eqref{eq:vsucc} against its
permutation null, on the within-stratum-centred rectifier and at a sample size
at which the ratio is stable.

Four extensions follow from the limits of the design-based stance, in the order
in which the assumptions weaken.

The first is a model-based companion. This paper needs the labelling to be a
probability mechanism, known or estimable, and much spatial data is not like
that: air-quality monitors, forest plots and soil networks sit where they sit
for administrative and historical reasons no one randomised and no propensity
model recovers. A natural route is then to model the field itself, with
inference under infill or increasing-domain asymptotics. That companion largely
exists \citep{salerno2026spatial}, so the remaining work is less to build it
than to reconcile it with what is proved
here. Theorem~\ref{prop:asym} has a counterpart there with the conditioning
reversed: $G_U$ is a fixed displacement once one conditions on the realised
population, whereas a model-based analysis averages over the field randomness
that makes $\E_\xi[G_U] = 0$. What is a hazard
in the design-based framework is, in the model-based one, averaged into the
error term and reported as correct on average.

The second is preferential labelling, the out-of-scope experiment of
Section~\ref{sec:sims-scope} and
pattern~6 of Table~\ref{tab:taxonomy}, where selection responds to the latent
field or to $Y$ itself and every method built on observables collapses. The
standard remedy is a joint model for the outcome and the point pattern with a
shared latent field \citep{diggle2010}, and PPI adds something genuinely new:
the map $f$ is a wall-to-wall observable correlated with the latent field by
construction, so it can carry part of the identification burden the parametric
link normally carries alone. We are nonetheless sceptical that the correction
should be the headline. Identification here rests on assumptions the data
cannot check, and the more defensible output is a sensitivity analysis
reporting the strength of latent dependence at which a conclusion reverses ---
$\omega$ in the parametrisation of Section~\ref{sec:sims-scope} plays this role,
and
Table~\ref{tab:s2} shows how little of it is needed to matter.

The third is small-area estimation, where we expect these results to bind
hardest in practice. The demand in remote sensing is rarely for
one national mean but for a census parameter in each of many administrative
units, most containing few labels and some none. That calls for shrinking the
rectifier across areas, putting a model-based device inside a design-based
estimator --- model-assisted small-area estimation, now with the map as an
auxiliary known for every pixel of every area. Two of our results bear on it.
Remark~\ref{rem:estimand} says that no functional is structurally protected
from the doubly robust gap --- an area mean least of all, since it is not
protected by any argument and is exposed by the general one; and the gap does
not shrink with the number of labels. Borrowing strength across areas can
therefore make the finite-population gap relatively more important, if it
reduces the sampling variance without at the same time addressing propensity
misspecification. How far that carries over is not settled by
Theorem~\ref{prop:asym} as stated, since a small-area model changes the
outcome model and the estimating equation and may change the propensity
treatment too. The conclusion we do draw is the practical one: a small-area
spatial PPI should not treat propensity modelling as optional.

The fourth is software, the least interesting and the most likely to determine
whether any of this is used. The decision encoded in Table~\ref{tab:taxonomy}
can be made mechanical: an implementation that accepts a map, a label table and
a description of how the labels were placed, returns the variance estimator
appropriate to that description, and refuses an i.i.d.\ interval when the
description does not warrant one, would prevent most of the failures documented
here. The diagnostic above costs one line and gives a quick, design-aware indication
of whether further spatial variance machinery is likely to pay --- and can
return no verdict at all when the sample is too small. In the land-cover
application the answer was no for
six of seven estimands and yes for one; in the soil application the sample
was too small to answer --- and a tool that can return those answers, including
the last, is more useful than one that always finds work to do.

\paragraph{Data and code.}
The land-cover application uses ESA WorldCover 2021 v200
\citep{zanaga2022worldcover} and the LUCAS 2018 survey
\citep{dandrimont2020lucas}; the soil application additionally uses the LUCAS
2018 soil module and the bulk-density and carbon-stock derivations of
\citet{chen2024socs}. All simulations, the real-population validation of
Theorem~\ref{prop:asym} and every figure and table are produced by the scripts
distributed as ancillary files with this preprint; the \texttt{anc/} directory
contains a \texttt{README} mapping each script to the exhibit it produces.
Random seeds are fixed. The input data are public and are not redistributed.

\paragraph{Acknowledgements and disclosure.}
This work was supported by JSPS KAKENHI Grant Number 23K16848 and 26K14728.
The author thanks the LUCAS and ESA WorldCover programmes for making their data
publicly available. Generative AI tools were used in preparing this
manuscript: for drafting and editing prose, for writing and running the
simulation and validation code, and for literature search. All mathematical
statements, proofs and numerical results were checked by the author, who takes
full responsibility for the content.
\bibliographystyle{apalike}
\bibliography{refs}

\appendix

\section{Assumptions and proofs, with provenance}\label{app:proofs}

Table~\ref{tab:provenance} maps every result to its source and isolates what,
if anything, is new. Readers familiar with \citet{sarndal1992} will recognise
most arguments; the annotations are intended to make each step independently
checkable against the cited texts.

\begin{table}[h]
\centering\small
\begin{tabular}{L{0.30\textwidth} L{0.36\textwidth} L{0.26\textwidth}}
\toprule
Result & Source of the argument & New content \\
\midrule
Prop.~\ref{thm:ht} (two-phase validity) &
\citet[Ch.~2, 9]{sarndal1992}; CLTs: \citet{hajek1964, berger1998} &
none (translation) \\
Prop.~\ref{thm:balance} (spatial balance) &
stratification: \citet[Ch.~3]{sarndal1992}; GRTS: \citet{stevens2004grts,
stevens2003variance} &
explicit PPI form; collapsed-pairs bias identity (elementary) \\
Prop.~\ref{thm:design-opt} (allocation) &
Neyman allocation; anticipated variance: \citet{isaki1982} &
variogram instantiation \\
Prop.~\ref{thm:lamdesign} (design-matched $\tau$) &
stratified variance: \citet[Ch.~3]{sarndal1992}; power tuning:
\citet{angelopoulos2023ppipp} &
design-matched $\tau^{*}_{\mathrm{dsn}}$ and its gap from the PPI++ value
(elementary, apparently not stated) \\
Prop.~\ref{thm:ipw} (estimated $\pi$) &
M-estimation: \citet[Ch.~5]{vaart1998}; efficiency: \citet{robins1994,
henmi2004} &
map-as-selection-covariate observation; Bernoulli $\Rightarrow$ no spatial
correction (elementary corollary) \\
Prop.~\ref{thm:dr} (AIPW) &
\citet{robins1994} &
none (translation); consistency stated under an increasing-domain or ergodic
LLN, not from score unbiasedness alone \\
Thm.~\ref{prop:asym} (DR asymmetry) &
--- &
\textbf{new} \\
Cor.~\ref{cor:mest} ($M$-estimands) &
$M$-estimation expansion: \citet[Ch.~5]{vaart1998} &
\textbf{new} (consequence of Thm.~\ref{prop:asym}); estimand-dependence of the
constant. Stated under the increasing-domain condition \eqref{eq:neffgrow} and
the uniform derivative condition \eqref{eq:unifderiv}; neither is implied by
Thm.~\ref{prop:asym} \\
Cor.~\ref{cor:ls} (finite-population least squares) &
--- &
\textbf{new} (exact, assumption-free specialisation of
Thm.~\ref{prop:asym}) \\
\bottomrule
\end{tabular}
\caption{Provenance of all formal results.}
\label{tab:provenance}
\end{table}

\subsection{Framework and assumptions}\label{app:framework}

We work in the nested-finite-population framework of \citet{isaki1982}
introduced in Remark~\ref{rem:asymp}: a
sequence of populations $U_\nu = \{1, \dots, N_\nu\}$, $N_\nu \to \infty$,
with fixed lists $\{\Delta_{\nu i}\}$ and sample sizes $n_\nu \to \infty$.
The index $\nu$ is suppressed. All expectations $\E_p$, variances $\Var_p$
are with respect to the labelling mechanism only; $\E_\xi$ refers to a
superpopulation working model where explicitly indicated.

\begin{enumerate}[label=(A\arabic*), leftmargin=2.6em, itemsep=0.15em]
\item \label{a:moments} (Moment stability.) There exist $\delta > 0$ and
$M < \infty$ with $N^{-1}\sum_{i\in U} |\Delta_i|^{2+\delta} \le M$ for all
$\nu$, and $S^2_\Delta = (N-1)^{-1}\sum_{i\in U}(\Delta_i - \bar\Delta_U)^2
\to S^2_\infty \in (0, \infty)$.
\item \label{a:design} (Regular design.) $\pi_i > 0$ and $\pi_{ij} > 0$ for all
$i \ne j$; there exist constants $0 < c_0 \le c_1 < \infty$ and
$\bar\pi < 1$, not depending on $\nu$, with
\[
c_0\,\frac{n}{N} \;\le\; \pi_i \;\le\; c_1\,\frac{n}{N}
\quad\text{and}\quad \pi_i \le \bar\pi
\qquad \text{for all } i \in U;
\]
and $n/N \to \varpi \in [0,1)$. The bounds are \emph{relative} to the sampling
fraction: an absolute floor $\min_i \pi_i \ge \pi_{\min} > 0$, as is
conventional in the i.i.d.\ PPI literature, would force $n/N \ge \pi_{\min}$
and so exclude every design sequence considered here. The scale-grown
log-linear intensity of Section~\ref{sec:sims-real} satisfies the relative
bounds with $c_0 = 0.022$ and $c_1 = 5.54$ uniformly in $n$, and
$\bar\pi = 0.98$ by capping.
\item \label{a:clt} (Design CLT.) The design belongs to a class for which the
H\'ajek--Lindeberg conditions hold: for SRS, the Erd\H{o}s--R\'enyi--H\'ajek
condition
$\max_i (\Delta_i - \bar\Delta_U)^2 \bigl/ \bigl\{ \tfrac{n}{1-\varpi} S^2_\Delta \bigr\} \to 0$
--- which is \emph{not} implied by \ref{a:moments} alone, since that assumption
gives only $\max_i(\Delta_i - \bar\Delta_U)^2 = O(N^{2/(2+\delta)})$ (and not
$o$: a single unit of size $cN^{1/(2+\delta)}$ is compatible with the bound).
It does follow, for example, whenever $n / N^{2/(2+\delta)} \to \infty$, a
sufficient condition and not a necessary one, holding unconditionally in
$\delta$ only as $\delta \to \infty$; we therefore impose the
Erd\H{o}s--R\'enyi--H\'ajek condition directly,
noting that the simple sufficient condition $\max_i|\Delta_i| = o(\sqrt{n})$
covers every population used in this paper --- for stratified one-per-block
designs, the
Lindeberg condition \eqref{eq:lindeberg-blocks} below; for rejective and other
high-entropy unequal-probability designs, the conditions of
\citet{hajek1964} or \citet{berger1998}; for ordered pivotal and related
spatially balanced designs, the conditions of \citet{chauvet2012} and
\citet{boistard2017}.
\item \label{a:positivity} (Relative positivity, Section~\ref{sec:ipw}.)
Write $x_i = (1, z_i^{\top})^{\top}$, $\alpha = (\alpha_0, \alpha_z)$ and
$\varpi_N = n/N$, and parametrise the logistic intercept around the sampling
fraction, $\alpha_0 = \log\varpi_N + a_0$, so that
$\pi(x_i;\alpha) = \mathrm{expit}(\alpha_0 + z_i^{\top}\alpha_z)
= \varpi_N\, q_i(\eta)$ with $\eta = (a_0, \alpha_z)$. Covariates are uniformly bounded,
$\sup_{i,\nu}\|z_i\| \le M_z$, and there is a compact
$\mathcal{K}$, free of $\nu$, containing the truth $\eta^{*}$ (which may
itself depend on $\nu$; only membership of the fixed $\mathcal{K}$ is used,
and no convergence of $\eta^{*}_\nu$ to a fixed limit is assumed) --- so that
$|z_i^{\top}\alpha_z|$
is bounded uniformly in $i$, $\nu$ and $\eta \in \mathcal{K}$ --- and hence
constants
$0 < q_L \le q_U < \infty$ with $q_i(\eta) \in [q_L, q_U]$ for all
$i \in U$ and $\eta \in \mathcal{K}$; equivalently
$c_0\, n/N \le \pi(x_i;\eta) \le c_1\, n/N$ and $\pi(x_i;\eta) \le \bar\pi < 1$,
with $c_0, c_1, \bar\pi$ as in \ref{a:design}. (Boundedness of the
covariates --- map values, distances, terrain indices on a bounded domain ---
is what the uniform bound $q_i \in [q_L, q_U]$ already presupposes; it makes
every covariate moment finite and is used as such in
Section~\ref{app:thm-ipw}.) Because the intercept moves with
$\varpi_N$, the true coefficient vector $\alpha^{*}_\nu$ is a \emph{drifting}
sequence and Section~\ref{app:thm-ipw} is a triangular-array argument
throughout; no absolute floor $\pi_{\min} > 0$ is assumed anywhere, and none
is available, since $\pi_i \le c_1 n/N \to c_1 \varpi$ with $\varpi$ possibly $0$.
Finally, the truth $\eta^{*}_\nu$ lies in the \emph{interior} of $\mathcal{K}$,
uniformly in $\nu$, and so does the pseudo-true $\alpha^{\dagger}_\nu$ of
Appendix~\ref{app:dr} when the model is misspecified: both are then stationary
points of their population criteria, which is what lets
Section~\ref{app:thm-ipw} centre the score at them and expand. Combined with
\ref{a:info}, whose information matrix is uniformly nonsingular on
$\mathcal{K}$, this closes the standard M-estimation argument in either case.
\item \label{a:info} (Identifiability.) The \emph{$n$-normalised} information
\[
A_{\alpha\alpha}
= \lim n^{-1}\!\sum_i \pi_i(1-\pi_i)x_i x_i^{\top}
= \lim N^{-1}\!\sum_i q_i(1-\pi_i)x_i x_i^{\top}
\]
exists and is positive definite, and the limits
\[
A_{c\alpha} = \lim N^{-1}\!\sum_i (1-\pi_i)(\Delta_i - \bar\Delta_U)x_i^{\top},
\qquad
B_{cc} = \lim N^{-1}\!\sum_i q_i^{-1}(1-\pi_i)(\Delta_i-\bar\Delta_U)^2
\]
exist and are finite, with
$V = B_{cc} - A_{c\alpha}A_{\alpha\alpha}^{-1}A_{c\alpha}^{\top} > 0$.
The $n$-normalisation is essential: under relative
positivity $N^{-1}\sum_i \pi_i(1-\pi_i)x_i x_i^{\top} = O(n/N)$, which
degenerates to zero whenever $\varpi_N \to 0$, so the conventional $N$-normalised
information condition is vacuous here.
\item \label{a:om} (Outcome model, Proposition~\ref{thm:dr}.) $m(x;\beta)$ is
linear in $\beta$ (or smooth with dominated derivatives); the pseudo-true
$\beta^{*}$ exists and $N^{-1}\sum_i \{\Delta_i - m(x_i;\beta^{*})\}^2$ is
bounded. This much suffices wherever $m$ is \emph{supplied}, which is the case
in Theorem~\ref{prop:asym}. Where $m$ is instead \emph{fitted} and its score
joins the stacked system --- the sandwich statement of
Proposition~\ref{thm:dr}, and the remark on estimated outcome models in
Appendix~\ref{app:dr} --- we assume in addition the standard M-estimation
regularity for that block: the $\beta$ estimating equation has a unique
pseudo-true root in a compact neighbourhood of $\beta^{*}$, its population
Jacobian is uniformly nonsingular there, the score and its derivative have
bounded moments of the order used, and the usual triangular-array expansion
applies. These are conditions on the nuisance, not consequences of the
displayed bound, and Proposition~\ref{thm:dr}'s fitted-$m$ sandwich claim
should be read as conditional on them.
\item \label{a:field} (Residual field, Theorem~\ref{prop:asym}.) Under
$\xi$, $u_i = \Delta_i - m(x_i)$ has mean zero given $x$, variance
$\sigma_u^2$, and correlation function $\rho_u$ with mean correlation
$\bar r_U = N^{-2}\sum_{i,j}\rho_u(s_i, s_j) \asymp N_{\mathrm{eff}}^{-1}$.
Moreover $\rho_u \ge 0$ pointwise, so that $\bar r_U > 0$.
\item \label{a:hratio} (Weight-ratio field, Theorem~\ref{prop:asym}.) The
working propensity $\tilde\pi$ is a function of $x$ alone, so
$h_i = \pi_i/\tilde\pi_i$ is $x$-measurable and $\E_\xi[h_i u_i] = 0$. It
obeys the same relative positivity as the truth,
\[
c_0\,\frac{n}{N} \;\le\; \tilde\pi_i \;\le\; c_1\,\frac{n}{N},
\]
with the constants of \ref{a:design}; this is an assumption on the working
model and not a consequence of \ref{a:design}, which bounds only the true
$\pi_i$, and it is what licenses the weight bound
$\tilde w_i \le (\varpi_N q_L)^{-1}$ used in
Appendices~\ref{app:thm-ipw} and \ref{app:dr}. When the working model is the
logistic of \ref{a:positivity} it holds automatically, being that
assumption's band $q_i(\eta) \in [q_L, q_U]$ evaluated at
$\eta = \alpha^{\dagger} \in \mathcal{K}$; and in either case it forces
$h_i \in [c_0/c_1,\, c_1/c_0]$, which makes the boundedness of $\max_i h_i$
required below automatic. Next, the
design sequence is \emph{ratio-stable}, by which is meant the following
invariance \emph{within} each population $U_\nu$: the label budget is varied by
multiplying both the true and the working intensity by a common scale, $\pi_{\nu
i} = t\,\lambda_{\nu i}$ and $\tilde\pi_{\nu i} = t\,\tilde\lambda_{\nu i}$ with
$t$ chosen to deliver $n$, so that $h_{\nu i} = \lambda_{\nu i}/\tilde\lambda_{\nu
i}$ does not depend on $t$ and hence not on $n$. Across $\nu$ the lists
$\{v_{\nu i}\}$ may of course differ, and what is required of them is
$\bar h_\nu$ bounded away from $0$ and $\infty$,
$\overline{v_\nu^2} := N^{-1}\sum_{i}v_{\nu i}^2$ bounded away from $0$ and
$\infty$ (note $\bar v = 0$ identically, so it is the mean \emph{square} that
is meant throughout), and
$\max_i h_{\nu i}$ bounded above, uniformly in $\nu$ --- the last is
immediate when both intensities are log-linear in bounded covariates, as
everywhere in this paper, and is what makes $\max_i v_{\nu i}^2 = O(1)$ in
the upper bound of Appendix~\ref{app:dr}.
This is what makes ``the gap is free of $n$'' a statement about a fixed object
rather than an artefact of how the budget is grown. Finally, the
modulating vector is aligned with the residual field, in the sense that
$\liminf_\nu\, v^{\top}P v \bigl/ \{(N\bar h)^2\, \bar r_U\} > 0$ with
$P = [\rho_u(s_i, s_j)]$. Growing $n$ by shifting the
intercept of a \emph{log-linear} intensity is exactly the scale multiplication
above and is therefore admissible; so, \emph{when the sampling fraction
vanishes}, is the drifting logistic intercept
$\alpha_0 = \log\varpi_N + a_0$ of
\ref{a:positivity}: as $\varpi_N \to 0$ the linear predictor
$\ell \to -\infty$, so $\mathrm{expit}(\ell) = e^{\ell}\{1+O(e^\ell)\}$ and the
two link scales describe the same design sequence to relative error
$O(n/N)$, as Section~\ref{sec:sims-real} spells out. Under
$n/N \to \varpi > 0$ that approximation is not available and ratio-stability
must be assumed of the logistic sequence directly. Growing
$n$ by tilting the \emph{slope} is not admissible,
since that changes $h$ itself, and in the limit degenerates to a census (see
Section~\ref{sec:sims-real}).
\item \label{a:gauss} (Nondegeneracy of the gap, Theorem~\ref{prop:asym}.)
The regimes are keyed to $N_{\mathrm{eff},v}$, the quantity that governs the
gap, and only under the alignment condition of \ref{a:hratio} may they be read
off the ordinary $N_{\mathrm{eff}}$ instead.
In the infill-type regime relevant to the gap ($N_{\mathrm{eff},v}$ bounded),
the family is
uniformly anti-concentrated at the origin,
\[
\lim_{\varepsilon \downarrow 0}\,\limsup_\nu\,
\Prob_\xi(|G_{U_\nu}| \le \varepsilon) = 0 .
\]
This displayed condition is what the proof uses and may be assumed directly.
Convergence in distribution, $G_{U_\nu} \Rightarrow G$ under $\xi$ for some
limit law $G$ with $\Prob(G = 0) = 0$, is a convenient \emph{sufficient}
condition for it --- by the portmanteau theorem applied to the open set
$\{|g| < \varepsilon\}$, followed by $\varepsilon \downarrow 0$ --- but the
two are not equivalent, and we state the anti-concentration rather than the
weak limit because it is the weaker requirement. A per-$\nu$ statement that each
$\xi$-law of $G_{U_\nu}$ has no atom at $0$ would not suffice: the coverage
claim compares $|G_{U_\nu}|$ against a half-width shrinking at rate
$n^{-1/2}$, and without uniformity the two could vanish together. In
\emph{both} regimes we require in addition that the conditional design CLT of
\ref{a:clt} hold \emph{uniformly} over the realised populations, in the sense
that
\[
\sup_{t}\bigl|\Prob_p\{\Var_p(\hat c)^{-1/2}(\hat c - \E_p \hat c) \le t\}
- \Phi(t)\bigr| \;\longrightarrow\; 0
\]
in $\xi$-probability. This is what allows the design limit and the $\xi$-average
to be combined, and it is needed for the infill branch just as much as for the
other: the coverage argument of Appendix~\ref{app:dr} writes the conditional
coverage as $\mathrm{cov}_n(G_U) + \varepsilon_n(U)$ and then integrates over
$\xi$, so what is wanted is $\E_\xi|\varepsilon_n(U)| \to 0$ and not merely
$\varepsilon_n(U) \to 0$ along each fixed population sequence; since
$\varepsilon_n$ is bounded, convergence in $\xi$-probability delivers it by
dominated convergence. In the increasing-domain regime
($N_{\mathrm{eff},v} \to \infty$) we require, further,
$G_U/\mathrm{sd}_\xi(G_U) \Rightarrow N(0,1)$.
When $u$ is a mean-zero Gaussian field under $\xi$ both regimes are immediate,
and it is worth being explicit that they are immediate for different reasons.
$G_{U_\nu} = (N\bar h)^{-1}\sum_i v_i u_i$ is a \emph{linear} form in $u$, hence
exactly $N\{0, \Var_\xi(G_{U_\nu})\}$ at every finite $N$, not merely in the
limit. Under infill asymptotics it is therefore enough that
$\Var_\xi(G_{U_\nu})$ be bounded away from $0$, since then
$\Prob_\xi(|G_{U_\nu}| \le \varepsilon) \le
2\varepsilon/\sqrt{2\pi\Var_\xi(G_{U_\nu})}$ uniformly in $\nu$; under
increasing-domain asymptotics the standardised gap
$G_U/\mathrm{sd}_\xi(G_U)$ is exactly standard normal and no limit theorem is
invoked. In particular the Lindeberg-type ratio
$\max_i v_i^2/\sum_i v_i^2 \to 0$, which a non-Gaussian field would need, is
\emph{not} required in the Gaussian case. The uniform conditional design CLT
displayed above is a separate requirement, about the labelling mechanism rather
than about $u$, and is not implied by Gaussianity of the field. This is the only
point at which a distributional assumption on $\xi$ beyond two moments is
used.
\item \label{a:mest} (Estimating function, Corollary~\ref{cor:mest}.)
$\psi(y, \tilde x; \beta)$ is twice continuously differentiable in $\beta$ on a
neighbourhood $\mathcal{B}$ of $\beta_U$, with
$\sup_{\beta\in \mathcal{B}} N^{-1}\sum_{i\in U}\|\partial^2_\beta \psi_i(\beta)\| \le M$;
$\beta_U$ is the unique root of $N^{-1}\sum_{i\in U}\psi_i(\beta) = 0$ in
$\mathcal{B}$;
$J_U = -N^{-1}\sum_{i\in U}\partial_\beta \psi_i(\beta_U)$ has smallest
eigenvalue bounded below uniformly in $\nu$; $\tilde x_i$ is observed on all of
$U$; and \ref{a:om}--\ref{a:field} hold componentwise for the rectifier
$\Delta_i^\psi(\beta_U) = \psi(Y_i, \tilde x_i; \beta_U) -
\psi(f_i, \tilde x_i; \beta_U)$. Finally, the empirical estimating function
$\hat\Psi$ is differentiable on $\mathcal{B}$ and its derivative converges uniformly to
the population one,
\begin{equation}\label{eq:unifderiv}
\sup_{\beta\in \mathcal{B}}\bigl\|\partial_\beta \hat\Psi(\beta)
 - \partial_\beta \Psi_U(\beta)\bigr\| \;=\; o_p(1),
\qquad \Psi_U(\beta) = N^{-1}\!\sum_{i\in U}\psi_i(\beta).
\end{equation}
Condition \eqref{eq:unifderiv} does \emph{not} follow from the rest: what
Theorem~\ref{prop:asym} delivers at $\beta_U$ is control of $\hat\Psi(\beta_U)$
itself, not of its Jacobian. That $\partial_\beta\psi$ is of
known-plus-rectifier form --- the case throughout this paper, since $\tilde x$ is
observed everywhere --- is by itself \emph{not} enough either, and it is worth
saying why. It does make $\partial_\beta\hat\Psi(\beta)$ a DR estimator of
$\partial_\beta\Psi_U(\beta)$ to which Theorem~\ref{prop:asym} applies
entrywise, so that
\[
\partial_\beta\hat\Psi(\beta) - \partial_\beta\Psi_U(\beta)
 = O_p(n^{-1/2}) + O_p\bigl(\min_k \tilde N_{\mathrm{eff},k}(\beta)^{-1/2}\bigr)
\]
pointwise, with $\tilde N_{\mathrm{eff},k}(\beta)$ the effective sample sizes of
the residual fields of $\partial_\beta\psi$ at $\beta$. But the second term is
exactly the fixed gap of Theorem~\ref{prop:asym} transplanted to the
derivative: under a misspecified propensity it does not shrink with $n$, so the
Jacobian estimator inherits a Theorem~\ref{prop:asym}-type displacement of its
own and \eqref{eq:unifderiv} can fail. What makes the argument go through is
that those effective sample sizes diverge uniformly on $\mathcal{B}$,
\begin{equation}\label{eq:derivneff}
\inf_{\beta \in \mathcal{B}}\; \min_k \tilde N_{\mathrm{eff},k}(\beta)
\;\longrightarrow\; \infty ,
\end{equation}
which is the derivative-level analogue of \eqref{eq:neffgrow}; given
\eqref{eq:derivneff}, uniformity over $B$ follows from a finite net together
with the Lipschitz bound $M$ on $\partial^2_\beta\psi$. In the
increasing-domain regime in which Corollary~\ref{cor:mest} is stated
\eqref{eq:derivneff} is the natural companion of \eqref{eq:neffgrow} and holds
whenever the derivative residual fields are no more strongly correlated than
the residual field itself. Under infill asymptotics it fails, and with it
\eqref{eq:unifderiv}. Only when $\psi$ is linear in $\beta$ is the condition
automatic with nothing assumed at all, because then
$\partial_\beta\hat\Psi \equiv \partial_\beta\Psi_U = -J_U$ is a known
population quantity carrying no residual field; that is
Corollary~\ref{cor:ls}.
\end{enumerate}

\subsection{Proof of Proposition~\ref{thm:ht}}\label{app:thm1}

\paragraph{Design unbiasedness.}
Since $\bar f_U$ is a constant, $\hat\theta - \theta =
\hat{\bar\Delta}_{\HT} - \bar\Delta_U$. With $I_i = \one(i \in S)$,
$\E_p[I_i] = \pi_i$, so
$\E_p[\hat{\bar\Delta}_{\HT}] = N^{-1}\sum_{i\in U} \pi_i \Delta_i/\pi_i =
\bar\Delta_U$ exactly, for any design satisfying \ref{a:design} and any fixed
list $\{\Delta_i\}$; no moment conditions are needed.

\paragraph{Variance.}
Writing $\check\Delta_i = \Delta_i/\pi_i$,
\[
\Var_p\bigl(\hat{\bar\Delta}_{\HT}\bigr)
= \frac{1}{N^2}\sum_{i,j\in U} \mathrm{Cov}(I_i, I_j)\,
\check\Delta_i \check\Delta_j
= \frac{1}{N^2}\sum_{i,j\in U}(\pi_{ij} - \pi_i \pi_j)\,
\check\Delta_i \check\Delta_j ,
\]
with $\pi_{ii} = \pi_i$, which is \eqref{eq:syg}. For a fixed-size design,
$\sum_{j \ne i}(\pi_{ij} - \pi_i\pi_j) = -\pi_i(1 - \pi_i)$ for each $i$
\citep[Result 2.4.6]{sarndal1992}, and substituting this identity yields the
Sen--Yates--Grundy form
\begin{equation}\label{eq:syg-form}
\Var_p\bigl(\hat{\bar\Delta}_{\HT}\bigr)
= -\frac{1}{2N^2}\sum_{i \neq j}(\pi_{ij} - \pi_i\pi_j)
\bigl(\check\Delta_i - \check\Delta_j\bigr)^2 .
\end{equation}

\paragraph{Unbiased variance estimation.}
The estimator \eqref{eq:vhat} of the main text,
$\hat V = N^{-2}\sum_{i,j \in S}\{(\pi_{ij} - \pi_i\pi_j)/\pi_{ij}\}
\check\Delta_i\check\Delta_j$, satisfies
$\E_p[\hat V] = \Var_p(\hat{\bar\Delta}_{\HT})$ whenever $\pi_{ij} > 0$
(\ref{a:design}), by term-by-term expectation using
$\E_p[I_i I_j] = \pi_{ij}$. For a fixed-size design the same identity
\citep[Result 2.4.6]{sarndal1992} that produced \eqref{eq:syg-form} yields a
second, also design-unbiased, estimator of the same variance, the
Sen--Yates--Grundy form
\[
\hat V_{\mathrm{SYG}} = \frac{1}{2N^2}\sum_{i \neq j \in S}
\frac{\pi_i\pi_j - \pi_{ij}}{\pi_{ij}}
\bigl(\check\Delta_i - \check\Delta_j\bigr)^2 ,
\]
which is the form implemented in Sections~\ref{sec:sims}
and~\ref{sec:empirical} and the one whose ratio consistency is at issue below.
The two estimators share an expectation but not a value: $\hat V$ and
$\hat V_{\mathrm{SYG}}$ can differ by more than the variance they estimate at
an individual sample, and only the second is guaranteed nonnegative when
$\pi_i\pi_j \ge \pi_{ij}$.
Under SRS, \eqref{eq:syg-form} reduces to
$(1-\varpi)S^2_\Delta/n$ and $\hat V$ to $(1-\varpi)s^2_\Delta/n$ with $s^2_\Delta$
the sample variance; $s^2_\Delta \to_p S^2_\Delta$ under \ref{a:moments}
by a standard truncation--Chebyshev argument for SRSWOR (e.g.,
\citealp[Remark 2.9]{sarndal1992}), giving the ratio consistency
$\hat V / \Var_p \to_p 1$ used below.

\paragraph{Asymptotic normality and coverage.}
Under \ref{a:moments} and \ref{a:clt}, the finite-population central limit
theorems of \citet{hajek1964} (SRS, rejective) and \citet{berger1998}
(unequal-probability, high-entropy) give
$\{\Var_p(\hat{\bar\Delta}_{\HT})\}^{-1/2}
(\hat{\bar\Delta}_{\HT} - \bar\Delta_U) \Rightarrow N(0, 1)$.
Combining with ratio consistency of $\hat V$ and Slutsky's lemma,
$\Prob\bigl(\theta \in \hat\theta \pm z_{1-\alpha/2}\hat V^{1/2}\bigr) \to
1 - \alpha$. \hfill$\square$

\subsection{Proof of Proposition~\ref{thm:balance}}\label{app:thm2}

\paragraph{Exact variance under one-per-block stratification.}
With $\pi_i = 1/B = n/N$, $\hat{\bar\Delta} = n^{-1}\sum_{h=1}^n \Delta_{J_h}$
where $J_h$ is uniform on $U_h$, independent across $h$. Hence
\[
\Var(\hat{\bar\Delta}) = n^{-2}\sum_h \Var(\Delta_{J_h})
= n^{-2}\sum_h \sigma^2_h = \bar\sigma^2_W / n .
\]

\paragraph{Comparison with SRS.}
The within/between decomposition
\[
(N-1) S^2_\Delta = \sum_h B\sigma^2_h + \sum_h B(\mu_h - \bar\Delta_U)^2
\]
gives $S^2_\Delta = \tfrac{N}{N-1}(\bar\sigma^2_W + \sigma^2_B)$. Therefore
\[
\Var_{\mathrm{SRS}} - \Var_{\mathrm{str}}
= \frac{(1-n/N)}{n}\,\frac{N}{N-1}\bigl(\bar\sigma^2_W + \sigma^2_B\bigr)
- \frac{\bar\sigma^2_W}{n}
= \frac{1}{n}\Bigl[\tfrac{N-n}{N-1}\sigma^2_B
- \tfrac{n - 1}{N - 1}\,\bar\sigma^2_W\Bigr],
\]
which is strictly positive iff
$\sigma^2_B > \tfrac{n-1}{N-n}\,\bar\sigma^2_W = c_N\bar\sigma^2_W$,
as claimed. Under an intrinsically stationary working model $\xi$,
$\E_\xi[\sigma^2_B]$ measures how much of the rectifier's variation survives
averaging over a block, while $c_N\E_\xi[\bar\sigma^2_W] = O(n/N)$. The
comparison is therefore governed by the block scale relative to the correlation
range of $\Delta$. The block scale is fixed by the label budget --- with one
draw per block and $n$ blocks over $N$ units it is
$L_B \asymp B^{1/2} = (N/n)^{1/2}$ grid spacings for square blocks, $B = N/n$
being the block \emph{cardinality} of Proposition~\ref{thm:balance} --- so what
matters is whether the
range reaches it. When it does, the anticipated between-block variance stays
bounded away from zero, the threshold becomes mild as $n/N$ tends to zero, and
the condition holds
in $\xi$-expectation --- the anticipated-variance version of the claim. A block
scale \emph{large} relative to the correlation range averages the spatial
structure away within each block and drives
$\E_\xi[\sigma^2_B]$ towards zero, and then no amount of smallness in $\varpi_N$
rescues the comparison; the statement is a claim about a well-chosen
stratification, not about spatial balance in the abstract.

\paragraph{CLT.}
$\hat{\bar\Delta}$ is a sum of $n$ independent, non-identically distributed
terms $n^{-1}\Delta_{J_h}$; the Lindeberg--Feller theorem applies under
\begin{equation}\label{eq:lindeberg-blocks}
\frac{1}{n\,\bar\sigma^2_W}\sum_{h=1}^{n}
\E\Bigl[(\Delta_{J_h} - \mu_h)^2\,
\one\bigl\{|\Delta_{J_h} - \mu_h| > \varepsilon \sqrt{n \bar\sigma^2_W}\bigr\}\Bigr]
\to 0 \quad \forall\,\varepsilon > 0,
\end{equation}
which follows from \ref{a:moments} by H\"older and Markov inequalities
provided $\liminf \bar\sigma^2_W > 0$.

\paragraph{Collapsed-pairs estimator.}
The $n$ blocks are grouped into $n/2$ disjoint pairs, each block appearing in
exactly one pair; $n$ even is therefore part of the construction, and the
identity below fails for overlapping schemes such as a chain of successive
pairs, in which the shared blocks are double counted.
For a pair $k$ of blocks with independent draws $\Delta_{k1}, \Delta_{k2}$
and $d_k = \Delta_{k1} - \Delta_{k2}$,
$\E[d_k^2] = \sigma^2_{k1} + \sigma^2_{k2} + (\mu_{k1} - \mu_{k2})^2$.
Summing over the disjoint pairs, so that each $\sigma^2_h$ is counted once,
$\E[\hat V_{\mathrm{cp}}] = n^{-2}\sum_h \sigma_h^2 +
n^{-2}\sum_k(\mu_{k1} - \mu_{k2})^2 = \Var(\hat{\bar\Delta}) +
n^{-2}\sum_k \delta_k^2$ with $\delta_k = \mu_{k1} - \mu_{k2}$: upward bias
exactly $n^{-2}\sum_k \delta_k^2$. The relative bias vanishes iff
$\sum_k \delta_k^2 / \sum_h \sigma^2_h \to 0$, i.e., local drift between
adjacent blocks is dominated by within-block variation; under $\xi$,
$\E_\xi[\delta_k^2]$ is bounded by twice the mean variogram between adjacent
blocks, giving a checkable sufficient condition. Ratio consistency
$\hat V_{\mathrm{cp}} / \E[\hat V_{\mathrm{cp}}] \to_p 1$ needs a law of large
numbers for the \emph{squared} pair differences, and is therefore not free: the
Chebyshev route requires $\Var(d_k^2) $ to exist, i.e.\ a fourth moment,
$\delta \ge 2$ in \ref{a:moments}; for $\delta < 2$ one may instead run a
triangular-array $L_{1+\delta/2}$ argument through the von Bahr--Esseen
inequality \citep{vonbahr1965}, and absent either, only the expectation
identity above is claimed. Given ratio consistency, conservative
coverage follows as in Appendix~\ref{app:thm1}.

\paragraph{GRTS and local pivotal designs.}
Both are fixed-size, equal-probability designs whose $\pi_{ij}$ for nearby
pairs are strictly below $\pi_i\pi_j$ \citep{stevens2004grts,
grafstrom2012lpm}; by \eqref{eq:syg-form} this depresses the variance
whenever nearby $\check\Delta_i$ are similar, which is the same mechanism as
the stratified computation above. That much is a statement about the design,
and it is all we claim. Asymptotic normality for ordered pivotal sampling
follows from \citet{chauvet2012} under his conditions, and for high-entropy
classes from \citet{boistard2017} under theirs; neither is verified here for a
particular implementation. The local-mean estimator of
\citet{stevens2003variance} plays the role of $\hat V_{\mathrm{cp}}$ with
overlapping neighbourhoods, and the drift decomposition above shows that it is
subject to \emph{analogous} local-drift bias terms --- but the neighbourhoods
overlap, so the pairs are no longer independent and the exact bias identity of
the collapsed-pairs computation does not transfer. We therefore do not claim
conservativeness for it, in line with the statement of
Proposition~\ref{thm:balance}.
\hfill$\square$

\subsection{Proof of Proposition~\ref{thm:design-opt}}\label{app:thm3}

(i) For stratified SRS with allocations $\{n_h\}$,
$\Var_p(\hat\theta) = \sum_h W_h^2 (1 - n_h/N_h) S^2_h / n_h$ with
$W_h = N_h/N$, so
$\mathrm{AV}(p,\xi) = \sum_h W_h^2 \sigma^2_h(\xi)/n_h - \sum_h W_h^2
\sigma^2_h(\xi)/N_h$, where $\sigma^2_h(\xi) = \E_\xi[S^2_h] =
\{N_h(N_h - 1)\}^{-1}\sum_{i \ne j \in U_h}\gamma(s_i - s_j)$ by the standard
variogram identity. The second sum is free of $\{n_h\}$; minimising the first
under $\sum_h n_h = n$ by Cauchy--Schwarz (or Lagrange) gives
$n_h \propto W_h \sigma_h(\xi)$, the claimed spatial Neyman rule.

(ii) Here the working model is assumed second-order stationary (or at least to
supply site-specific second moments), since an intrinsic model determines only
increments and so leaves $\E_\xi[\Delta_i^2]$ undefined; part (i), which
involves $\gamma$ alone, is free of this requirement. For Poisson sampling,
$\Var_p(\hat{\bar\Delta}_{\HT}) = N^{-2}\sum_i (1/\pi_i - 1)\Delta_i^2$, so
$\mathrm{AV} = N^{-2}\sum_i (1/\pi_i - 1)\,\E_\xi[\Delta_i^2]$. Minimising
$\sum_i a_i/\pi_i$ with $a_i = \E_\xi[\Delta_i^2]$ subject to
$\sum_i \pi_i = n$ gives $\pi_i \propto \sqrt{a_i}$ by Cauchy--Schwarz,
clipped at $1$ in the usual way.

(iii) Knowing $\pi_i$ and $\pi_{ij}$ exactly --- which the analyst does, having
set them --- is \emph{not} by itself enough for \ref{a:design}--\ref{a:clt}:
those assumptions also impose the relative floor $\pi_i \ge c_0 n/N$, and the
unconstrained rule $\pi_i \propto \{\E_\xi[\Delta_i^2]\}^{1/2}$ of (ii) can
violate it outright when the working model makes some $\E_\xi[\Delta_i^2]$
very small, at which point the inverse-probability weights are not uniformly
controlled and the CLT is unavailable. Under the clipped rule
$\pi_i \in [c_0 n/N, \bar\pi]$, however, the relative bounds of
\ref{a:design} hold by construction whatever $\xi$ was. \ref{a:clt} is not
automatic and must still be checked; for the Poisson design at hand it follows
from those bounds together with \ref{a:moments} by Lyapunov's condition, the
summands $\check\Delta_i$ being independent with uniformly bounded
$(2+\delta)$-th moments after weighting. Proposition~\ref{thm:ht} then applies
verbatim, and only the attained $\mathrm{AV}$ is affected. Since the objective
$\sum_i a_i/\pi_i$ is convex and the box is convex, the constrained optimum is
$\pi_i = \min\{\max(C a_i^{1/2},\, c_0 n/N),\, \bar\pi\}$ --- the clipped
version of the displayed rule with the constant $C$ \emph{re-adjusted} so that
$\sum_i \pi_i = n$ still holds, the truncated-Neyman or water-filling form.
Clipping without that readjustment does not respect the budget. \hfill$\square$

\subsection{Proof of Proposition~\ref{thm:ipw}}\label{app:thm-ipw}

\paragraph{Reparametrisation: the true parameter drifts.}
Relative positivity says the propensity is of \emph{order} $n/N$, so the
logistic intercept cannot be held fixed as $\nu$ grows, and the model must be
read as a triangular array. Write $x_i = (1, z_i^{\top})^{\top}$,
$\alpha = (\alpha_0, \alpha_z)$, $\varpi_N = n/N$, and parametrise the intercept
around the sampling fraction,
\[
\alpha_0 = \log \varpi_N + a_0,
\qquad
\pi(x_i;\alpha) = \mathrm{expit}(\alpha_0 + z_i^{\top}\alpha_z)
= \varpi_N\, q_i(\eta),
\qquad
q_i(\eta) = \frac{e^{a_0 + z_i^{\top}\alpha_z}}
                 {1 + \varpi_N e^{a_0 + z_i^{\top}\alpha_z}},
\]
with $\eta = (a_0, \alpha_z)$ ranging over a compact set $\mathcal{K}$ on which
$z_i^{\top}\alpha_z$ is bounded uniformly in $i$ and $\nu$. Then
$q_i(\eta) \in [q_L, q_U] \subset (0, \infty)$ with $q_L, q_U$ free of $\nu$,
and \ref{a:positivity} is precisely the statement that the truth $\eta^{*}$
lies in such a $\mathcal{K}$. Here $\varpi_N$ is a reference rate, not a
definition of the expected label count: we require only
$\sum_{i \in U}\pi_i \asymp \varpi_N N$, which $q_i \in [q_L, q_U]$ supplies,
and we do not impose $N^{-1}\sum_i q_i(\eta^{*}) = 1$. Should one prefer that
exact normalisation, the truth becomes a sequence $\eta^{*}_\nu$, still in the
fixed compact $\mathcal{K}$; because every order symbol in this section is
uniform over $\mathcal{K}$, no argument below changes, and in particular no
convergence of $\eta^{*}_\nu$ to a fixed limit is required. The true coefficient
vector
$\alpha^{*}_\nu = (\log\varpi_N + a_0^{*},\, \alpha_z^{*})$ drifts with
$\nu$ in either convention: everything below is a statement about
$\hat\eta - \eta^{*}$, equivalently about $\hat\alpha - \alpha^{*}_\nu$, and \emph{no absolute floor $\pi_{\min}$
is used at any point}. What replaces it is the bounded relative intensity
$q_i \in [q_L, q_U]$, which is what \ref{a:positivity} actually supplies; all
order symbols below are uniform over $\mathcal{K}$.

\paragraph{Estimating equations, and the two normalisations.}
Since $\eta = (a_0, \alpha_z)$ determines $\alpha = (\log\varpi_N + a_0,
\alpha_z)$ once the budget is fixed, we write $\pi(x_i;\eta)$ for
$\pi(x_i;\alpha)$ at that $\alpha$; the free parameter below is always $\eta$,
which stays in a compact set, and never the drifting $\alpha$.
Let $\varphi = (\eta, c)$ and, for each $i \in U$,
\[
\psi_i^{\alpha}(\varphi) = \{R_i - \pi(x_i;\eta)\}\,x_i,
\qquad
\psi_i^{c}(\varphi) = R_i\, w_i(\eta)\,\{\Delta_i - c\},
\qquad w_i(\eta) = 1/\pi(x_i;\eta),
\]
and stack them with \emph{different} normalisations,
\begin{equation}\label{eq:ipw-blocks}
\Psi_N^{\alpha}(\varphi) = \frac{1}{n}\sum_{i\in U}\psi_i^{\alpha}(\varphi),
\qquad
\Psi_N^{c}(\varphi) = \frac{1}{N}\sum_{i\in U}\psi_i^{c}(\varphi).
\end{equation}
Rescaling a block does not move its zero set, so $\hat\varphi$ solving
$\Psi_N(\hat\varphi) = 0$ is the same estimator as before --- the H\'ajek ratio
$\hat c = \sum_i R_i w_i \Delta_i / \sum_i R_i w_i$ together with the logistic
MLE --- and $\hat\theta = \bar f_U + \hat c$. The two normalisations are a
device for the analysis, not a change of estimator, and they are the point of
this proof: under relative positivity the $\alpha$-block carries $O(n)$
effective terms while the $c$-block carries $O(N)$ of them, so a single
normalisation cannot keep both the information matrix and the score variance
at order one. With \eqref{eq:ipw-blocks} both are $O(1)$, and the common rate
is $\sqrt{n}$.

At $\varphi^{*} = (\eta^{*}, c^{*})$ with $c^{*} = \bar\Delta_U$ we have
$\E[\psi_i^{\alpha}] = 0$ for every $i$ and
$\E[\psi_i^{c}] = \pi_i w_i(\Delta_i - \bar\Delta_U) = \Delta_i - \bar\Delta_U$,
nonzero pointwise but summing to zero over $U$; hence
$\E[\Psi_N(\varphi^{*})] = 0$.

\paragraph{Both blocks are $O_p(n^{-1/2})$.}
The $R_i$ are independent. For the $\alpha$-block, using
$\pi_i(1-\pi_i) \le \pi_i = \varpi_N q_i \le \varpi_N q_U$,
\[
\Var\{\Psi_N^{\alpha}(\varphi^{*})\}
= \frac{1}{n^2}\sum_{i\in U}\pi_i(1-\pi_i)\,x_i x_i^{\top}
= \frac{1}{n}\Bigl\{\frac{1}{N}\sum_{i\in U} q_i(1-\pi_i)x_i x_i^{\top}\Bigr\}
= O(n^{-1}),
\]
the braced factor being $O(1)$ by \ref{a:positivity}. For the $c$-block, using
$w_i \le (\varpi_N q_L)^{-1}$,
\[
\Var\{\Psi_N^{c}(\varphi^{*})\}
= \frac{1}{N^2}\sum_{i\in U}\pi_i(1-\pi_i)w_i^2(\Delta_i - c^{*})^2
\;\le\; \frac{1}{q_L\,n}\cdot\frac{1}{N}\sum_{i\in U}(\Delta_i - c^{*})^2
= O(n^{-1})
\]
by \ref{a:moments}. Both blocks are thus $O_p(n^{-1/2})$, and this --- not a
count of population units --- is the origin of the $\sqrt{n}$ rate.

\paragraph{Consistency.}
For $\eta$, the $\alpha$-block is the score of the log-likelihood
\[
\ell_N(\eta) = n^{-1}\!\sum_{i\in U}\bigl[R_i\log\pi_i(\eta)
+ (1-R_i)\log\{1-\pi_i(\eta)\}\bigr],
\]
which is concave in $\eta$ for every $\nu$ because
$\eta \mapsto \alpha$ is affine. Two structural facts carry the argument.
First, the link being canonical, the Hessian is \emph{nonrandom},
\[
-\frac{\partial^2 \ell_N}{\partial\eta\,\partial\eta^{\top}}
= \frac{1}{n}\sum_{i\in U} \pi_i(\eta)\{1-\pi_i(\eta)\}\,x_i x_i^{\top}
\;\succeq\; q_L(1-\bar\pi)\,\frac{1}{N}\sum_{i\in U} x_i x_i^{\top}
\quad\text{on } \mathcal{K},
\]
and the right side is eventually bounded below by a positive multiple of the
identity, since $A_{\alpha\alpha} \preceq q_U N^{-1}\sum_i x_i x_i^{\top}$
and $A_{\alpha\alpha}$ has a positive-definite limit by \ref{a:info}; so
$\ell_N$ is strongly concave on $\mathcal{K}$, uniformly in $\nu$, and its
maximiser is eventually unique. Second, the divergent part of the criterion is
free of $\eta$: writing $\mathrm{logit}\,\pi_i(\eta) = \log\varpi_N + a_0 +
z_i^{\top}\alpha_z$, the only randomness in $\ell_N$ enters through
$n^{-1}\sum_i R_i\,\mathrm{logit}\,\pi_i(\eta)$, whose leading piece
$n^{-1}(\log\varpi_N)\sum_i R_i$ fluctuates at the larger order
$|\log\varpi_N|\,n^{-1/2}$ when $\varpi_N \to 0$ --- a variance computation on
$\ell_N$ itself would stall here --- but does not depend on $\eta$ and so
moves no maximiser. Subtracting it leaves
$\mathring\ell_N(\eta) = n^{-1}\sum_i [R_i(a_0 + z_i^{\top}\alpha_z)
+ \log\{1-\pi_i(\eta)\}]$, whose second term is nonrandom and whose first
has variance $n^{-2}\sum_i \pi_i(1-\pi_i)(a_0 + z_i^{\top}\alpha_z)^2 \le
C/n$ on $\mathcal{K}$, for a constant $C$ free of $\nu$, by the covariate
bound of \ref{a:positivity}. Pointwise
convergence of a concave criterion upgrades to uniform convergence on compacts
\citep{hjort2011}; the strong concavity supplies an identification margin
$\E\mathring\ell_N(\eta) \le \E\mathring\ell_N(\eta^{*}) -
c_{\mathcal{K}}\,|\eta-\eta^{*}|^2$ (with $c_{\mathcal{K}} > 0$ free of
$\nu$, not the estimand $c$) uniform in $\nu$, the maximiser of $\E\ell_N$ being
exactly $\eta^{*}$ for every $\nu$ by pointwise Kullback--Leibler under
correct specification; hence $\hat\eta \to_p \eta^{*}$ --- with no appeal to
a fixed limiting criterion, which is what the drifting intercept rules out and
why the usual $Z$-estimator route \citep[Theorem 5.9]{vaart1998} is not
taken. For $c$, write
$\hat c(\eta) = \{n^{-1}\sum_i R_i\, q_i(\eta)^{-1}\Delta_i\}
\big/ \{n^{-1}\sum_i R_i\, q_i(\eta)^{-1}\}$. Numerator and denominator
have variances $O(n^{-1})$ at each $\eta$ by the bounds above and
\ref{a:moments}, and are Lipschitz in $\eta$ on $\mathcal{K}$ with an
$O_p(1)$ random constant ($\partial_\eta q_i^{-1}$ is dominated through $q_L$
and the covariate bound, so the constant has mean at most
$C\,N^{-1}\sum_i q_i |\Delta_i| = O(1)$); pointwise convergence on a finite
net of $\mathcal{K}$ plus the Lipschitz bound then give uniform convergence to
the means. The mean of the denominator,
$N^{-1}\sum_i q_i(\eta^{*})/q_i(\eta)$, lies in $[q_L/q_U,\, q_U/q_L]$,
bounded away from zero, so continuity at $\eta^{*}$ gives
$\hat c = \hat c(\hat\eta) \to_p c^{*}$.

\paragraph{Where correct specification is, and is not, used.}\label{par:spec}
This step deserves to be isolated, because Appendix~\ref{app:dr} invokes the
present section in a case where the propensity model is \emph{wrong}, and it
would be circular to do so if correct specification were load-bearing
throughout. It is not. Correct specification is used exactly once above: to
identify the maximiser of $\E\mathring\ell_N$ as the truth $\eta^{*}$, by
pointwise Kullback--Leibler. Every other ingredient is indifferent to it. The
concavity of $\ell_N$ and the nonrandomness of its Hessian follow from the
canonical link alone; the strong-concavity margin $c_{\mathcal{K}}$ uses only
$q_i(\eta) \in [q_L, q_U]$ on $\mathcal{K}$ and the covariate bound; the
score-variance bounds of the preceding paragraph use only $\pi_i \le \varpi_N
q_U$ for the \emph{true} propensity, which is \ref{a:design}, and
$w_i(\eta) \le (\varpi_N q_L)^{-1}$ for the \emph{working} one, which is
\ref{a:positivity} at any $\eta \in \mathcal{K}$; and the convergence of
$\mathring\ell_N$ is uniform on $\mathcal{K}$, hence blind to which point of
$\mathcal{K}$ the argmax happens to be.

Accordingly, if the logistic model is misspecified, replace $\eta^{*}$
throughout by the maximiser $\eta^{\dagger}_\nu$ of $\E\mathring\ell_N$ over
$\mathcal{K}$. It exists and is unique for each $\nu$ by the same strong
concavity, and, being an interior argmax by \ref{a:positivity} of a smooth
concave criterion, it
solves the population score equation
\[
\sum_{i\in U}\bigl\{\pi_i - \pi(x_i;\eta^{\dagger}_\nu)\bigr\}\,x_i = 0,
\]
so the $\alpha$-block is still exactly centred at
$\varphi^{\dagger} = (\eta^{\dagger}_\nu, c^{*}_U)$ --- centring is a property
of the projection, not of the truth --- and the $c$-block is centred there by
the definition of $c^{*}_U$ in Appendix~\ref{app:dr}. Every display of this
section then holds verbatim with $\eta^{*}$ read as $\eta^{\dagger}_\nu$,
giving $\hat\eta - \eta^{\dagger}_\nu \to_p 0$ and
$\sqrt{n}\,(\hat\varphi - \varphi^{\dagger}) = O_p(1)$; no convergence of
$\eta^{\dagger}_\nu$ to a fixed limit is required, for exactly the reason that
none was required of the drifting $\eta^{*}_\nu$. Two displays do change, and
neither harmfully. First, $A_{cc} = N^{-1}\sum_i \E[R_i \tilde w_i] = \bar h$,
which equals $1$ only under correct specification; by the previous paragraph
$\bar h \in [c_0/c_1, c_1/c_0]$, bounded away from $0$ and $\infty$, so $A_N$
remains invertible with a bounded inverse. Second, the probability limit of
$\hat c$ is the $h$-weighted average $\sum_i h_i \Delta_i / \sum_i h_i$ ---
with $\Delta_i$ replaced by the outcome-model residual $u_i$ in the augmented
system of Appendix~\ref{app:dr} --- rather than $\bar\Delta_U$. That
displacement is not an error term to be bounded but the object
Theorem~\ref{prop:asym} is about, and it is what Appendix~\ref{app:dr} names
$G_U$.

\paragraph{Asymptotic normality.}
A Taylor expansion of the block-normalised system \eqref{eq:ipw-blocks} at
$\varphi^{*}$ gives
\[
\sqrt{n}\,(\hat\varphi - \varphi^{*})
= A_N^{-1}\sqrt{n}\,\Psi_N(\varphi^{*}) + o_p(1),
\]
where $A_N = \E[-\partial \Psi_N/\partial\varphi]$ evaluated at $\varphi^{*}$
is block lower-triangular, $\partial\Psi_N^{\alpha}/\partial c \equiv 0$, with
\[
A_{\alpha\alpha}
= \frac{1}{n}\sum_{i\in U}\pi_i(1-\pi_i)x_i x_i^{\top}
= \frac{1}{N}\sum_{i\in U} q_i(1-\pi_i)x_i x_i^{\top},
\qquad
A_{cc} = \frac{1}{N}\sum_{i\in U}\E[R_i w_i] = 1,
\]
\[
A_{c\alpha} = \frac{1}{N}\sum_{i\in U}(1-\pi_i)(\Delta_i - c^{*})\,x_i^{\top},
\]
all three $O(1)$ and $A_{\alpha\alpha}$ eventually positive definite by
\ref{a:info}. (That $A_{cc} \to_p 1$ rather than merely in expectation follows
from $\Var(N^{-1}\sum_i R_i w_i) \le (q_L n)^{-1} \to 0$.) The $o_p(1)$
remainder deserves its own line in this array setting. The $\alpha$-equation
does not involve $c$, and its $\eta$-derivative is the \emph{nonrandom}
matrix $-A_{\alpha\alpha}(\eta)$ of the consistency step, Lipschitz in
$\eta$ on $\mathcal{K}$ uniformly in $\nu$ because the third derivatives of
the logistic log-likelihood are dominated by the same $q$- and covariate
bounds; with $\hat\eta \to_p \eta^{*}$ the mean-value matrix converges to
$A_{\alpha\alpha}$, invertibility passes to the limit by \ref{a:info}, and
the $O_p(n^{-1/2})$ score bound then upgrades consistency to
$\hat\eta - \eta^{*} = O_p(n^{-1/2})$ and delivers the linearisation for the
$\alpha$-rows. The $c$-equation is \emph{linear} in $c$, so no second-order
term in $c$ exists; its random coefficients $N^{-1}\sum_i R_i w_i(\eta)$ and
$-\partial\Psi_N^{c}/\partial\eta$ are sums of independent terms with
variances $O(n^{-1})$ by the same $q$-domination, Lipschitz in $\eta$ with
$O_p(1)$ constants, hence equal to $1$ and $A_{c\alpha}$ up to $o_p(1)$
anywhere on the $O_p(n^{-1/2})$ ball around $\varphi^{*}$.

For the score, substitute $w_i = (\varpi_N q_i)^{-1}$ and $N = n/\varpi_N$
in the $c$-block:
\[
\sqrt{n}\,\Psi_N^{c}(\varphi^{*})
= \frac{1}{\sqrt{n}}\sum_{i\in U} R_i\, q_i^{-1}(\Delta_i - c^{*}),
\qquad
\sqrt{n}\,\Psi_N^{\alpha}(\varphi^{*})
= \frac{1}{\sqrt{n}}\sum_{i\in U} (R_i - \pi_i)\,x_i,
\]
both sums of independent triangular arrays whose row totals have mean zero.
The $\alpha$-summands are centred individually; the $c$-summands are not,
$\E[\,n^{-1/2}R_i q_i^{-1}(\Delta_i - c^{*})\,] =
n^{-1/2}\varpi_N(\Delta_i - c^{*})$, so the CLT is applied to the centred
variables, the centring adding
$\sum_i |n^{-1/2}\varpi_N(\Delta_i - c^{*})|^{2+\delta} =
\varpi_N^{1+\delta}\,O(n^{-\delta/2})$ to the Lyapunov sum --- of smaller
order than the main term below. Lyapunov's condition holds under
\ref{a:moments} \emph{as it stands}, with no strengthening of the moment
requirement: for the $c$-block,
\[
\sum_{i\in U}\E\Bigl|\tfrac{1}{\sqrt n}R_i q_i^{-1}(\Delta_i - c^{*})\Bigr|^{2+\delta}
\le \frac{1}{q_L^{1+\delta}}\, n^{-(2+\delta)/2}\,\varpi_N
\sum_{i\in U}|\Delta_i - c^{*}|^{2+\delta}
= O\bigl(n^{-\delta/2}\bigr) \to 0,
\]
because $\varpi_N \sum_{i \in U}|\Delta_i - c^{*}|^{2+\delta} = O(n)$; the
$\alpha$-block is identical with $\|x_i\|$ in place of $|\Delta_i - c^{*}|$,
its moment sum finite by the covariate bound of \ref{a:positivity}. The
Lindeberg--Feller theorem for triangular arrays therefore gives
$\sqrt{n}\,\Psi_N(\varphi^{*}) \Rightarrow N(0, B)$, and independence of $R_i$
across pixels removes --- rather than merely replaces --- any mixing condition
on $\{\Delta_i\}$. The blocks of $B$ are
\[
B_{\alpha\alpha} = \frac{1}{n}\sum_i \pi_i(1-\pi_i)x_i x_i^{\top}
= A_{\alpha\alpha},
\qquad
B_{c\alpha} = \frac{1}{N}\sum_i (1-\pi_i)(\Delta_i - c^{*})x_i^{\top}
= A_{c\alpha},
\]
\[
B_{cc} = \frac{1}{N}\sum_i q_i^{-1}(1-\pi_i)(\Delta_i - c^{*})^2 ,
\]
the first two being the information and the score-covariance identities of the
logistic model. Since $\hat\theta - \theta = \hat c - c^{*}$,
\begin{equation}\label{eq:ipw-sandwich}
\sqrt{n}\,(\hat\theta - \theta) \Rightarrow N(0, V),
\qquad
V = (A_N^{-1} B A_N^{-\top})_{cc}
  = B_{cc} - A_{c\alpha}A_{\alpha\alpha}^{-1}A_{c\alpha}^{\top},
\end{equation}
the closed form following from block lower-triangularity, $A_{cc} = 1$, and the
two identities $B_{\alpha\alpha} = A_{\alpha\alpha}$,
$B_{c\alpha} = A_{c\alpha}$. The rate is in $n$; $N$ enters only through the
population averages defining $A$ and $B$.

\paragraph{Variance estimation: two separate claims.}
It is worth separating what is exact from what is asymptotic. Write the scaled
score as $\sqrt{n}\,\Psi_N(\varphi^{*}) = \sum_{i\in U}\zeta_i$ with
$\zeta^{\alpha}_i = n^{-1/2}(R_i - \pi_i)\,x_i$ and
$\zeta^{c}_i = n^{-1/2}R_i\, q_i^{-1}(\Delta_i - c^{*})$. Each $\zeta_i$ is
affine in $R_i$, $\zeta_i = d_i + R_i b_i$, with intercepts
$d^{\alpha}_i = -n^{-1/2}\pi_i x_i$, $d^{c}_i = 0$ and slopes
$b^{\alpha}_i = n^{-1/2}x_i$ and
$b^{c}_i = n^{-1/2}q_i^{-1}(\Delta_i - c^{*})$, so
$\Var(\zeta_i) = \pi_i(1-\pi_i)\,b_i b_i^{\top}$ exactly and
$B = \sum_{i\in U}\pi_i(1-\pi_i)\,b_i b_i^{\top}$.
\emph{(a) At the true parameter the meat admits an exactly design-unbiased
estimator,}
\begin{equation}\label{eq:bhat}
\hat B \;=\; \sum_{i\in U} R_i\,(1-\pi_i)\,b_i b_i^{\top}:
\end{equation}
a sum over the \emph{sample}, each selected term carrying the factor
$(1-\pi_i)$ and --- the point --- \emph{no} inverse-probability weight,
because the indicator itself has mean $\pi_i$ and supplies the first factor
of $\pi_i(1-\pi_i)$; hence $\E[\hat B] = B$ exactly, blockwise, cross block
included, and $\hat B$ is computable because $b^{c}_i$ involves
$\Delta_i$ only where $R_i = 1$. (Adding the weight, i.e.\
$\sum_i R_i w_i (1-\pi_i)\,b_i b_i^{\top}$, is \emph{not} innocuous:
it overstates the $(c,c)$ entry by a factor of order $\varpi_N^{-1}$, since the
compensating $\pi_i$ is no longer supplied by the indicator.)
\emph{(b) Omitting $(1-\pi_i)$ is conservative, by a relative amount of order
$n/N$.} Dropping the factor gives $\hat B^{\mathrm{nv}} =
\sum_i R_i\,b_i b_i^{\top}$ with
$\E[\hat B^{\mathrm{nv}}] - B = \sum_i \pi_i^2\,b_i b_i^{\top}
\succeq 0$, whose $(c,c)$ entry is \emph{exactly}
$n^{-1}\varpi_N^2 \sum_i (\Delta_i - c^{*})^2
= \varpi_N\, N^{-1}\sum_i (\Delta_i - c^{*})^2$ because
$\pi_i^2 q_i^{-2} = \varpi_N^2$: a relative excess of order $\varpi_N \to \varpi$
that vanishes with the sampling fraction and is removed by reinstating
$(1-\hat\pi_i)$. Being positive semidefinite as a matrix, the excess survives
the sandwich, $A^{-1}(B + E)A^{-\top} \succeq A^{-1}BA^{-\top}$ for
$E \succeq 0$, so the naive interval is conservative for $\theta$ itself, not
merely blockwise.
\emph{(c) Plug-in consistency.} Two replacements remain. Substituting
$\hat\varphi$ for $\varphi^{*}$: every entry of $\hat B(\varphi)$ and
$\hat A(\varphi)$ is Lipschitz in $\varphi$ on a fixed neighbourhood of
$\varphi^{*}$ with $O_p(1)$ random constants dominated through $q_L$ and the
covariate bound of \ref{a:positivity}, and
$\hat\varphi - \varphi^{*} = O_p(n^{-1/2})$, so the substitution is
$o_p(1)$. Concentration of $\hat B(\varphi^{*})$ at $B$: the summands are
independent with
$\E\bigl|R_i(1-\pi_i)(b^{c}_i)^2\bigr|^{1+\delta/2}
\le \pi_i\, n^{-(1+\delta/2)} q_L^{-(2+\delta)}
|\Delta_i - c^{*}|^{2+\delta}$, which sums to $O(n^{-\delta/2})$, so the
von Bahr--Esseen inequality \citep{vonbahr1965} (for $\delta \le 2$;
Chebyshev for $\delta > 2$) gives $\hat B_{cc} - B_{cc} \to_p 0$ with no
moment beyond \ref{a:moments}; the $\alpha$ and cross blocks are handled
identically, the bounded covariates absorbing $\|x_i\|$. Slutsky then
delivers asymptotically valid, and at worst conservative, normal intervals,
nondegenerate because $V > 0$ by \ref{a:info}.

\paragraph{Corollaries.}
(i) Conservativeness of the known-$\pi$ variance is now explicit rather than
invoked: the known-$\pi$ asymptotic variance is $B_{cc}$, and estimating
$\alpha$ subtracts from it the nonnegative quadratic form
$A_{c\alpha}A_{\alpha\alpha}^{-1}A_{c\alpha}^{\top}$ in
\eqref{eq:ipw-sandwich}. This is the finite-population, drifting-intercept
version of the projection argument of \citet{robins1994} and
\citet{henmi2004}, and it is nondegenerate exactly when the rectifier residual
$\Delta_i - c^{*}$ is correlated with the covariates driving selection.
(ii) Independence of $R_i$ makes $\pi_{ij} = \pi_i\pi_j$: all cross terms in
\eqref{eq:syg} vanish, so no spatial adjustment enters the variance,
regardless of the spatial correlation of $\{\Delta_i\}$. \hfill$\square$

\subsection{Proof of Proposition~\ref{thm:dr} and Theorem~\ref{prop:asym}}
\label{app:dr}

\paragraph{DR estimating equations.}
Augment $\varphi = (\alpha, \beta, c)$ with
$\psi^{\beta}_i = R_i\{\Delta_i - m(x_i;\beta)\}\,\partial_\beta m$, and
replace $\psi^c_i$ by $R_i w_i(\alpha)\{\Delta_i - m(x_i;\beta) - c\}$;
$\hat\theta_{\mathrm{DR}} = \bar f_U + N^{-1}\sum_{i\in U} m(x_i;\hat\beta) +
\hat c$.

\paragraph{Case (a): propensity correct.}
At the true propensity parameter and \emph{any} pseudo-true $\beta^{*}$,
$\sum_i \E[\psi^c_i] = \sum_i \{\Delta_i - m(x_i;\beta^{*}) - c\}= 0$ at
$c^{*} = \bar\Delta_U - \bar m_U(\beta^{*})$, so the population solution
returns $\theta$ exactly --- design consistency conditionally on the realised
population, with no requirement on the outcome model. Normality and sandwich
consistency follow as in Appendix~\ref{app:thm-ipw} with the enlarged
parameter; when both models are correct the $c$-equation residuals
$u_i - c$ replace $\Delta_i - c$, which can only reduce the leading variance
term \emph{in $\xi$-expectation} --- the efficiency claim, which is a
superpopulation statement and not a guarantee at any one realised
population.

\paragraph{Case (b): outcome model correct, propensity misspecified.}
Let $\tilde\pi_i = \pi(x_i;\alpha^{\dagger})$ be the pseudo-true
(misspecified) propensities, where $\alpha^{\dagger}$ denotes the probability
limit of $\hat\alpha$ under the working model and is \emph{not} the true
$\alpha^{*}$ of Section~\ref{sec:ipw}; $h_i = \pi_i/\tilde\pi_i$ is the
weight-ratio field, and $\tilde w_i = 1/\tilde\pi_i$. The
population $c$-equation now solves
\[
\sum_{i\in U} \pi_i \tilde w_i \{u_i - c\} = 0
\;\;\Longrightarrow\;\;
c^{*}_U = \frac{\sum_{i\in U} h_i u_i}{\sum_{i\in U} h_i},
\qquad u_i = \Delta_i - m(x_i;\beta^{*}),
\]
whereas the target requires $\bar u_U$. Conditionally on the realised
population, the design fluctuation of $\hat c$ around $c^{*}_U$ is
$O_p(n^{-1/2})$ by the argument of Appendix~\ref{app:thm-ipw}, read in its
misspecified form: that section's rate is driven by the two score-variance
bounds and by uniform convergence of a strongly concave criterion over
$\mathcal{K}$, none of which uses the truth of the propensity model, and its
one appeal to correct specification --- identifying the argmax as $\eta^{*}$
--- is replaced here by the pseudo-true $\alpha^{\dagger}$, at which the
$\alpha$-block is exactly centred by construction. The paragraph
``Where correct specification is, and is not, used'' in
Appendix~\ref{app:thm-ipw} sets this out; the working propensity's own
relative positivity, needed for $\tilde w_i \le (\varpi_N q_L)^{-1}$, is part
of \ref{a:hratio}. Hence the
decomposition
$\hat\theta_{\mathrm{DR}} - \theta = O_p(n^{-1/2}) + G_U$ with
\[
G_U = c^{*}_U - \bar u_U
= \frac{\sum_i (h_i - \bar h)(u_i - \bar u_U)}{\sum_i h_i}.
\]

\paragraph{The gap under the working model.}
Under \ref{a:field} and \ref{a:hratio}, $\E_\xi[G_U] = 0$ (as
$\E_\xi[u_i \mid x] = 0$ and $h$ is $x$-measurable), and, writing
$v_i = h_i - \bar h$,
\[
\Var_\xi(G_U)
= \frac{1}{(N\bar h)^2}\sum_{i,j} v_i v_j\, \sigma_u^2\, \rho_u(s_i, s_j)
\;\le\; \frac{\max_i v_i^2}{\bar h^2}\; \sigma_u^2\, \bar r_U
\;\asymp\; \sigma_u^2\, N_{\mathrm{eff}}^{-1},
\]
with a matching lower bound of the same order whenever the weight-ratio field
varies at spatial scales comparable to the correlation range of $u$; this is
the alignment condition of \ref{a:hratio}, namely that
\[
\liminf_{N\to\infty}\;
\frac{v^{\top}P\,v}{(N\bar h)^2\, \bar r_U} \;>\; 0,
\qquad P = [\rho_u(s_i,s_j)],\quad
\bar r_U = N^{-2}\textstyle\sum_{i,j}\rho_u(s_i,s_j) = N_{\mathrm{eff}}^{-1}.
\]
The bound does not involve $n$.

\paragraph{Coverage consequences, on two probability spaces.}
The interval half-width is $\mathrm{se}(n) \asymp n^{-1/2}$. Conditionally on
the realised population $G_U$ is a fixed number, and the design CLT of
\ref{a:clt} gives
$\Prob_p(\theta\in\mathrm{CI}\mid U) = \Phi(z - r_n) - \Phi(-z - r_n) + o(1)$
with $r_n = G_U/\mathrm{se}(n)$ and $z = z_{1-\alpha/2}$. Along a \emph{fixed}
population sequence this tends to $0$ if and only if $|r_n| \to \infty$, i.e.\
$\sqrt{n}\,|G_U| \to \infty$; $N_{\mathrm{eff},v}$ bounded does not deliver this,
since it constrains only the $\xi$-law of $G_U$ and not any particular
realisation --- a sequence with $G_U = 0$, or with $G_U$ shrinking at rate
$n^{-1/2}$, is not excluded.

Averaging over $\xi$ removes the need for a per-sequence condition, and this is
the sense in which the infill statement is made. Write
$\mathrm{cov}_n(g) = \Phi(z - g/\mathrm{se}(n)) - \Phi(-z - g/\mathrm{se}(n))$,
a bounded function with $\mathrm{cov}_n(g) \to 0$ for every fixed $g \ne 0$ and,
for each fixed $\delta > 0$, $\sup_{|g| > \delta}\mathrm{cov}_n(g) \to 0$ as
$n \to \infty$, since $\mathrm{cov}_n$ is decreasing in $|g|$. By the uniform
conditional design CLT of \ref{a:gauss} the conditional coverage is
$\mathrm{cov}_n(G_U) + \varepsilon_n(U)$ with $\varepsilon_n$ bounded and
$\E_\xi|\varepsilon_n| \to 0$, so it suffices to handle
$\E_\xi[\mathrm{cov}_n(G_U)]$. Splitting at $\delta$,
\[
\E_\xi\bigl[\mathrm{cov}_n(G_U)\bigr]
\;\le\; \Prob_\xi(|G_{U_\nu}| \le \delta)
\;+\; \sup_{|g| > \delta}\mathrm{cov}_n(g),
\]
whence, the second term vanishing for each fixed $\delta$,
\[
\limsup_\nu\, \E_\xi\bigl[\mathrm{cov}_n(G_U)\bigr]
\;\le\; \limsup_\nu\, \Prob_\xi(|G_{U_\nu}| \le \delta),
\]
for every $\delta > 0$. Letting $\delta \downarrow 0$ and invoking the
anti-concentration display of \ref{a:gauss} gives
$\E_\xi[\Prob_p(\theta\in\mathrm{CI}\mid U)] \to 0$. Note what is and is not
used: only the uniform anti-concentration of the family $\{G_{U_\nu}\}$ at the
origin. No weak limit $G_{U_\nu} \Rightarrow G_\infty$ is needed --- such a
limit, together with $\Prob(G_\infty = 0) = 0$, would imply the
anti-concentration by the portmanteau theorem, but it is strictly stronger and
we do not assume it --- and no almost-sure convergence of $G_U$ is claimed.
Under
increasing-domain asymptotics with $n/N \to \varpi \in (0,1)$ and
$N_{\mathrm{eff},v} \asymp N$, the CLT of \ref{a:gauss} gives
$\Var_\xi(G_U)/\Var_p(\hat c) \to \mathcal{R}^2 \in (0,\infty)$, the limiting
ratio of the conditional gap's variance to the design variance, and coverage
converges to $2\Phi\bigl(z_{1-\alpha/2}/\sqrt{1 + \mathcal{R}^2}\bigr) - 1 <
1 - \alpha$: a constant deflation. If instead the propensity is correctly
specified,
$h_i \equiv 1$, $G_U \equiv 0$, and case (a) applies. If $u$ is i.i.d.\ under
$\xi$, $\bar r_U = N^{-1}$, so
$G_U = O_p(N^{-1/2})$, which is $o_p(n^{-1/2})$ if and only if $n/N \to 0$:
under the usual sampling-fraction-vanishing convention the gap is
asymptotically negligible, which is why the phenomenon is invisible in the
i.i.d.\ PPI literature. If instead $n/N \to \varpi \in (0,1)$ the gap survives
even for i.i.d.\ $u$, and it should be said plainly that it is \emph{not}
repaired by the finite-population correction. The correction rescales the
design variance by the sampling fraction, $1-n/N$; the gap is a separate,
$\xi$-random additive
term with variance $\asymp N^{-1}$, of the same order but not the same object.
Adding it to the interval width would require knowing $\sigma_u^2$ and $v$,
which is precisely what a misspecified propensity denies. The net effect is the
constant deflation computed above, $2\Phi(z_{1-\alpha/2}/\sqrt{1+\mathcal{R}^2}) - 1$,
with $\mathcal{R}^2$ now of order one rather than large: undercoverage that does not
grow with $n$, but does not vanish either. \hfill$\square$

\paragraph{Remark: the outcome model may be estimated, under an added rate
condition.} The statement above supplies $m$. Suppose instead $m$ is fitted on
the labels, and add to \ref{a:om} the \emph{nuisance-rate} condition
\begin{equation}\label{eq:nuisrate}
\|\hat\beta - \beta^{*}\| = O_p(n^{-1/2})
\qquad\text{conditionally on the realised population,}
\end{equation}
or any rate fast enough to make the induced term below $o_p(1)$. This is an
\emph{assumption}, not a consequence of \ref{a:om}, and it is not innocuous
here: \ref{a:om} asserts only that a pseudo-true $\beta^{*}$ exists with
bounded residual mean square, and when selection depends on $x$ the fit
computed on the labelled sample targets a \emph{selection-weighted}
finite-population projection rather than $\beta^{*}$ itself. The discrepancy
between the two is a fixed feature of the realised population, so it need not
shrink with $n$ at all, and outside \eqref{eq:nuisrate} the argument below is
unavailable.
Granting \eqref{eq:nuisrate}, the residual becomes
$\hat u_i = \Delta_i - m(x_i;\hat\beta)$ and the gap acquires an
extra term $(\bar h)^{-1} N^{-1} \sum_i v_i\, \{m(x_i;\beta^{*}) -
m(x_i;\hat\beta)\}$. Because $v$ has mean zero and $m(\cdot;\beta)$ is smooth
in $\beta$, this term is $O_p(n^{-1/2}) \cdot N^{-1}|\sum_i v_i \partial_\beta
m_i|$. The second factor is $O(1)$ and not $O(N_{\mathrm{eff},v}^{-1/2})$: the
quadratic-form argument of Theorem~\ref{prop:asym} bounds the $\xi$-variance of
a $v$-weighted average of a \emph{mean-zero} field, and $\partial_\beta m$ is
not such a field, so all that is available is
$N^{-1}|\sum_i v_i \partial_\beta m_i| \le \max_i|v_i| \cdot
\sup_\beta N^{-1}\sum_i \|\partial_\beta m_i\| = O(1)$ under \ref{a:om}. The
extra term is therefore $O_p(n^{-1/2})$ --- of the same order as the design
error, and absorbed into it. Under \eqref{eq:nuisrate}, then, fitting $m$ adds
a vanishing term but does not touch $G_U$, so it does not restore consistency:
the gap is not an artefact of knowing $m$. What is \emph{not} claimed is that
the extension is free; it buys the conclusion with a nuisance rate that the
conditional finite-population frame does not supply on its own.
Section~\ref{sec:sims-selection} verifies the genuinely outcome-correct arm,
in which $\E_\xi[u\mid x] = 0$ holds by construction of the data-generating
process; Section~\ref{sec:sims-real} then shows on a real population that
estimating rather than supplying the population projection does not remove the
conditional gap --- supplying it and fitting it give coverage
$93.8\% \to 79.5\%$ and
$93.4\% \to 81.7\%$ respectively over $n = 100$ to $8{,}000$. That curve is a
robustness check on the population at hand, not a demonstration that
\eqref{eq:nuisrate} holds generally.

\subsection{Proof of Corollary~\ref{cor:mest}}\label{app:cor}

\paragraph{The rectified estimating equation.}
Fix $\beta$ and split the population estimating function at the map:
\[
\frac{1}{N}\sum_{i\in U}\psi_i(\beta)
= \underbrace{\frac{1}{N}\sum_{i\in U}\psi(f_i, \tilde x_i;\beta)}_{\text{known:
$f$ and $\tilde x$ observed on all of } U}
\;+\; \frac{1}{N}\sum_{i\in U}\Delta_i^\psi(\beta),
\]
\[
\Delta_i^\psi(\beta) = \psi(Y_i,\tilde x_i;\beta)
  - \psi(f_i,\tilde x_i;\beta).
\]
The first term requires no labels. The second is a census mean of a
$p$-vector, so it is exactly the object of Theorem~\ref{prop:asym}, one
component at a time; write $\hat D_{\mathrm{DR}}(\beta)$ for its DR estimator
and let $\hat\beta_{\mathrm{DR}}$ solve
$\hat\Psi(\beta) = N^{-1}\sum_U \psi(f_i,\tilde x_i;\beta) +
\hat D_{\mathrm{DR}}(\beta) = 0$. Nothing in this construction is specific to
regression: it is the statement that a census $M$-estimand is a smooth function
of $p$ census means of known-plus-rectifier form.

\paragraph{Expansion at $\beta_U$.}
Apply Theorem~\ref{prop:asym} componentwise at $\beta = \beta_U$. Under
\ref{a:om}--\ref{a:field} for the rectifier and with the propensity
misspecified,
\begin{equation}\label{eq:psi-at-betaU}
\hat\Psi(\beta_U)
= \frac{1}{N}\sum_{i\in U}\psi_i(\beta_U) + O_p(n^{-1/2}) + G_U^\psi
= O_p(n^{-1/2}) + G_U^\psi,
\end{equation}
since $\beta_U$ annihilates the first term by definition. The single design and
the single working propensity are shared by all $p$ components, so the same
weight-ratio field $h_i = \pi_i/\tilde\pi_i$ acts on each; only the residual
field changes, from $u_i$ to $u_{ik}$.

\paragraph{From the estimating function to $\beta$: where \eqref{eq:neffgrow}
enters.}
Two things are needed and neither is automatic. First, the perturbation
\eqref{eq:psi-at-betaU} must be $o_p(1)$, so that the root is eventually
trapped in the neighbourhood $\mathcal{B}$ of \ref{a:mest}. Its two pieces are
$O_p(n^{-1/2})$, which vanishes, and $G_U^{\psi}$, which does not: by the
definition of $N_{\mathrm{eff},k}$ its $k$-th component has $\xi$-standard
deviation $\sigma_{u,k} N_{\mathrm{eff},k}^{-1/2}$, so
$\|G_U^{\psi}\| = O_p\bigl((\min_k N_{\mathrm{eff},k})^{-1/2}\bigr) = o_p(1)$
provided \eqref{eq:neffgrow} holds --- a clean sufficient condition, and
without further uniform-integrability and nondegeneracy conditions stronger
than strictly necessary. Under infill asymptotics with
$N_{\mathrm{eff},k}$ bounded, $G_U^{\psi}$ has a nondegenerate limit by
\ref{a:gauss} and the argument below is simply unavailable; the corollary is
stated under \eqref{eq:neffgrow} for that reason, and
Corollary~\ref{cor:ls} covers the linear case with no such condition. Second,
the \emph{empirical} Jacobian must converge, and this is a genuinely separate
requirement: \eqref{eq:psi-at-betaU} controls $\hat\Psi$ at the single point
$\beta_U$ and says nothing about $\partial_\beta\hat\Psi$. It is exactly what
\eqref{eq:unifderiv} of \ref{a:mest} assumes. Combined with the population map
$\beta\mapsto \Psi_U(\beta)$ being $C^2$ with nonsingular derivative $-J_U$ at
$\beta_U$ and second derivative bounded by $M$ on $\mathcal{B}$, \eqref{eq:unifderiv}
gives
$\sup_{\beta \in \mathcal{B}}\|\hat\Psi(\beta) - \hat\Psi(\beta_U)
+ J_U(\beta - \beta_U)\| = o_p(\|\beta-\beta_U\|)
 + O_p(\|\beta - \beta_U\|^2)$ uniformly. Granting
these, a standard $M$-estimation argument
\citep[Thm.~5.41]{vaart1998} gives a root $\hat\beta_{\mathrm{DR}} \to \beta_U$
and a second-order Taylor expansion about $\beta_U$,
\[
0 = \hat\Psi(\hat\beta_{\mathrm{DR}})
= \hat\Psi(\beta_U) - J_U(\hat\beta_{\mathrm{DR}} - \beta_U)
+ O_p\bigl(\|\hat\beta_{\mathrm{DR}} - \beta_U\|^2\bigr),
\]
whence, using \eqref{eq:psi-at-betaU} and
$\|\hat\beta_{\mathrm{DR}} - \beta_U\| = O_p(n^{-1/2} + \|G_U^\psi\|)$,
\[
\hat\beta_{\mathrm{DR}} - \beta_U
= J_U^{-1}\hat\Psi(\beta_U) + O_p\bigl(n^{-1} + \|G_U^\psi\|^2\bigr)
= O_p(n^{-1/2}) + J_U^{-1}G_U^\psi + O_p\bigl(n^{-1} + \|G_U^\psi\|^2\bigr).
\]
Each component of $G_U^\psi$ has $\E_\xi = 0$ and $\xi$-variance
$\sigma_{u,k}^2 N_{\mathrm{eff},k}^{-1}$, with
$\sigma_{u,k}^2 = N^{-1}\sum_{i \in U}\Var_\xi(u_{ik})$ the average residual
variance of the $k$-th component and $N_{\mathrm{eff},k}$ defined by that
identity in Corollary~\ref{cor:mest}; the argument of
Appendix~\ref{app:dr} applied to $u_{\cdot k}$ supplies the closed form
$(N\bar h)^2/(v^{\top}P_k v)$ for it whenever that component is
$\xi$-stationary, with $P_k$ its own correlation matrix. The sharper
modulated-field form \eqref{eq:quadform} requires in addition that $u_{ik}$
factorise as $\tilde x_{ik}u_i$, which is the least-squares case below and not
a property of a general $\psi$. Hence
$\|G_U^\psi\|^2 = O_p\bigl((\min_k N_{\mathrm{eff},k})^{-1}\bigr)$: the remainder
is smaller than
the retained term by $(\min_k N_{\mathrm{eff},k})^{-1/2}$, and both are free of $n$. If the
propensity model is correctly specified then $h_i \equiv 1$, every component of
$G_U^\psi$ is
identically zero, and the corollary reduces to case~(a) of
Appendix~\ref{app:dr}. \hfill$\square$

\paragraph{Proof of Corollary~\ref{cor:ls}: the linear case is exact.}
For $\psi(y,\tilde x;\beta) = \tilde x(y - \tilde x^\top\beta)$ the second
derivative vanishes, $J_U = N^{-1}\sum_U \tilde x_i\tilde x_i^\top$ is a known
constant, and the expansion is an identity rather than an approximation.
No consistency step is required --- there is no equation to solve, since
$\hat\beta_{\mathrm{DR}}$ is defined in closed form --- so neither
\eqref{eq:neffgrow} nor the stochastic-differentiability argument above is
used, and the displayed identity of Corollary~\ref{cor:ls} holds for every
finite $N$ and $n$. Explicitly:
$\beta_U = J_U^{-1}N^{-1}\sum_U \tilde x_i Y_i$, of which
$J_U^{-1}N^{-1}\sum_U \tilde x_i f_i$ is known outright, leaving only the census
mean of $\tilde x_i \Delta_i$ to be estimated. Since
$u_{ik} = \tilde x_{ik}u_i$, a single outcome model for $\Delta$ supplies all
$p$ components at once, and no separate consistency argument for
$\hat\beta_{\mathrm{DR}}$ is needed.

\paragraph{What the quadratic form actually measures.}
Still in the least-squares case, where $u_{ik} = \tilde x_{ik}u_i$, write
$P = [\rho_u(s_i,s_j)]$, $v_i = h_i - \bar h$ and $a_i = \tilde x_{ik}$,
and set $g_k = v \odot a$ with $\odot$ the componentwise product, so that by
the calculation in Appendix~\ref{app:dr}
\begin{equation}\label{eq:quadform}
\Var_\xi\bigl(G_{U,k}^\psi\bigr)
= \frac{\sigma_u^2}{(N\bar h)^2}\; g_k^{\top} P\, g_k
\;=\; \frac{\sigma_u^2\, \overline{a^2}}{N_{\mathrm{eff},k}},
\qquad
N_{\mathrm{eff},k}
:= \frac{\overline{a^2}\,(N\bar h)^2}{\kappa_k\, \|g_k\|^2},
\qquad
\kappa_k := \frac{g_k^{\top} P\, g_k}{\|g_k\|^2},
\end{equation}
with $\overline{a^2} = N^{-1}\sum_U a_i^2$. Only $\kappa_k$, the spatial
inflation factor of the $k$-th modulating vector, carries spatial information:
it equals $1$ when $P = I$ and grows with the coherence of $g_k$. The
remaining factor $\overline{a^2}(N\bar h)^2/\|g_k\|^2$ is a scale that cancels
against the corresponding standard error.

It is natural to try to read $\kappa_k$ off the spectrum of $P$; write
$\Lambda_1 \ge \Lambda_2 \ge \cdots$ for its eigenvalues, the diagonal of the
spectral factor of $P$. For a
nonnegative stationary $P$ the leading eigenvector is constant (exactly on a
torus, approximately on a regular grid away from the boundary), so one might
argue that a level functional with $a \equiv 1$ retains the projection of $v$
on that eigenvector and enjoys $\kappa \asymp \Lambda_1$, while a centred
covariate removes the projection and is capped at $\Lambda_2$ by
Courant--Fischer. \emph{Both halves of that argument fail.} First,
$\sum_i v_i = 0$ identically, since $\bar h$ is the mean of $h$ over $U$; so
$v$ already has exactly zero projection on the constant vector and the level
functional gains nothing from it. Second, the cap is vacuous: for a stationary
field whose correlation range is a bounded number of grid spacings,
$\Lambda_2/\Lambda_1 \to 1$ as the domain grows --- for an exponential
correlation of range $8$ cells the ratio is $0.96$ on a $256$-site
one-dimensional torus and $0.998$ on a $1024$-site one, and $0.94$ on a
$256 \times 256$ two-dimensional torus and $0.99$ on a $512 \times 512$ one ---
so bounding a quadratic form by $\Lambda_2$ rather than $\Lambda_1$ excludes
essentially nothing. (Our population is $48{,}175$ cells with a correlation
range of about two, comfortably inside this regime.)

The correct reading is in the lag domain. If the units sit on a regular grid
and $\rho_u$ is stationary,
\begin{equation}\label{eq:lagform}
g^{\top} P\, g \;=\; \sum_{\ell} \rho_u(\ell)\, C_g(\ell),
\qquad
C_g(\ell) = \sum_i g_i\, g_{i+\ell},
\end{equation}
where $g$ abbreviates $g_k$ and $g_{i+\ell}$ is its value at the site displaced
from $i$ by the lag $\ell$,
so the quadratic form is the overlap of the correlation function of the
residual field with the \emph{autocorrelation function of the modulating
vector}, over the lags at which $\rho_u$ is appreciable. What inflates
$\kappa_k$ is therefore local coherence of $g_k$ at the correlation scale of
$u$, not any global projection. Multiplying $v$ by a spatially smooth covariate
rescales $g$ slowly and leaves $C_g(\ell)/C_g(0)$ nearly unchanged over the few
lags that matter, so every component is inflated by roughly the same factor;
centring imposes one linear constraint on $g$ and does not decohere it locally.

Both routes agree on the population of Section~\ref{sec:sims-real-beta}. The
empirical correlation of $u$ has a $1/e$ range of about two cells
($\rho_u = 1,\, 0.53,\, 0.20,\, 0.07$ at lags $0$--$3$ along a grid axis), and
evaluating \eqref{eq:lagform} out to that range gives $\kappa_k =
6.1,\, 5.8,\, 6.0$ for the intercept and the two slopes, while the model-free
block-permutation estimate $\Var_{\text{block}}(G_k)/\Var_{\text{free}}(G_k)$
--- the $\xi$-variance of the $k$-th component $G_k$ of $G_U^{\psi}$ under
block permutation of $u$ against its variance under free permutation ---
gives $4.7,\, 4.3,\, 4.7$. (The lag-domain figure is the less reliable of the
two --- truncating at $3$, $5$, $8$ and $12$ cells returns $6.3/6.1/2.9/0.5$ for
the intercept --- so we quote the permutation estimate throughout.) The two
disagree on the level of $\kappa$ but agree on the point at issue: it does not
separate across estimands. The coherence of $g_k$ at the $6$\,km block
scale --- the share of $\|g_k\|^2$ retained when $g_k$ is replaced by its
block means at that scale --- is
$0.97$, $0.86$, $0.97$ against a correlation range of two cells.

This is the formal content of Remark~\ref{rem:estimand}. It remains directly
testable in the direction that matters: destroying the spatial arrangement of
$u$ gives the exchangeable no-arrangement reference and must collapse
$\kappa_k$ to essentially $1$ in \emph{every} component. ``Essentially'' because
free permutation of a fixed centred vector induces off-diagonal correlation
$-1/(N-1)$, so its quadratic form is
$g_k^{\top}P_{\mathrm{perm}}g_k = \{N/(N-1)\}\|g_k\|^2 -
(N-1)^{-1}(\sum_i g_{ki})^2$. For the mean and the intercept
$\sum_i g_{ki} = \sum_i v_i = 0$ and the ratio is exactly $N/(N-1)$; for a
slope component $g_k = v \odot \tilde x_{\cdot k}$ need not sum to zero, and the
ratio is $N/(N-1) - \delta_k$ with
$\delta_k = (\sum_i g_{ki})^2 / \{(N-1)\|g_k\|^2\}$. On the population of
Section~\ref{sec:sims-real-beta} the three ratios are
$1.000021$, $1.000019$ and $0.999690$, so the free-permutation baseline is the
$P = I$ one to four decimal places and the block-to-free variance ratios below
estimate $\kappa_k$ without further correction.
In Section~\ref{sec:sims-real-beta} it does. The simulation of
Section~\ref{sec:sims} matches each regime: with $N_{\mathrm{eff}} \approx
(64/8)^2 = 64$ fixed, coverage of the misspecified-propensity DR falls from
$82.1\%$ to $65.3\%$ as $\E[n]$ grows from $110$ to $900$, while the
correct-propensity DR rises towards nominal over the same range
($88.3\% \to 94.5\%$; Table~\ref{tab:s1}b).

\section{Controls, provenance and sensitivity}\label{app:extra}

This appendix collects the material that supports Sections~\ref{sec:sims}
and~\ref{sec:empirical} without being needed to follow them: the permutation
control that isolates the spatial channel, the provenance of the LUCAS design
weights and of the soil labels, and the sensitivity analyses behind the caveats
stated in the main text.

\subsection{Diagnostic for the seven estimands of Section~\ref{sec:emp-seven}}\label{app:fig4}

\begin{figure}[htbp]
\centering
\includegraphics[width=\textwidth]{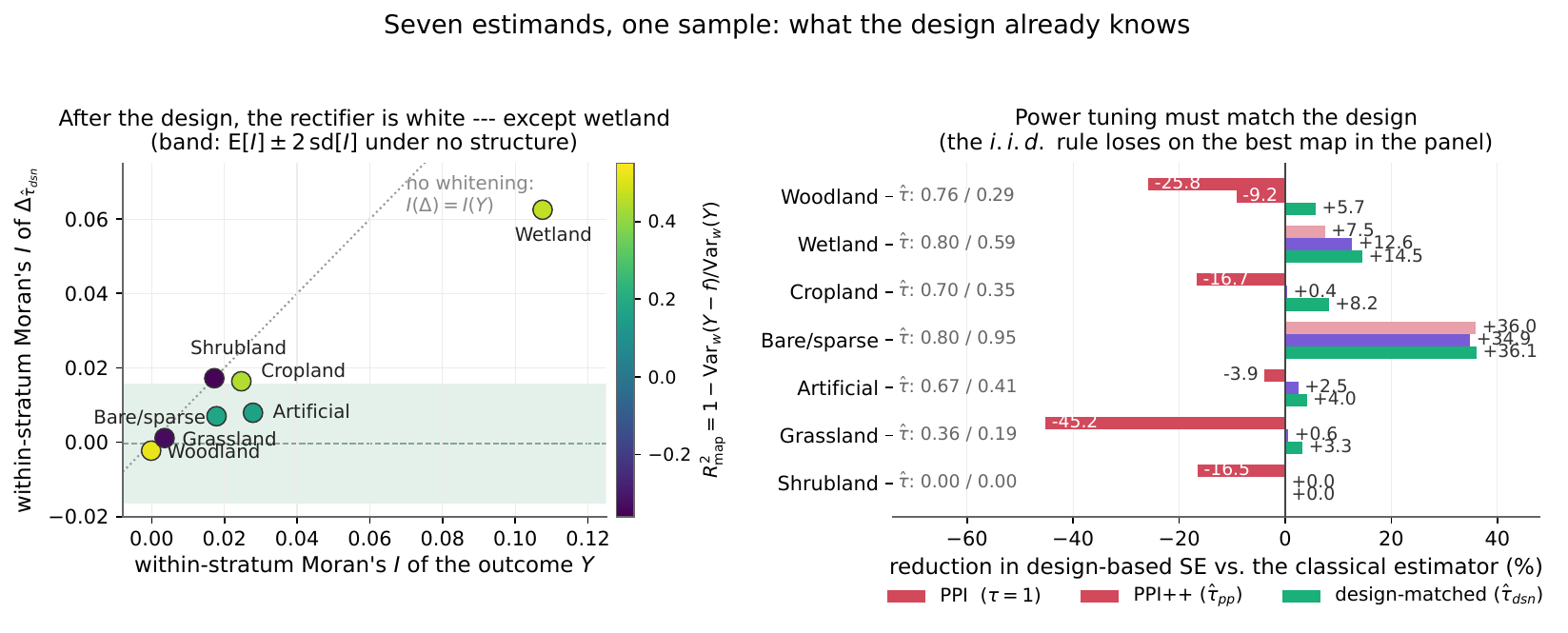}
\caption{Left: within-stratum Moran's $I$ of the outcome against that of the
design-matched rectifier, coloured by $R^2_{\mathrm{map}}$. Six of seven
estimands fall off the identity line towards the no-structure band; wetland is
the exception, $I(\Delta) = 0.063$ ($z = 7.1$). Map quality and outcome
structure are rank-correlated at only $0.11$ once the stratification is
removed. Right: SE reduction against the classical label-only estimator,
positive better. Plain PPI ($\tau = 1$) is worse than ignoring the map in five
of seven cases, PPI++ repairs four, and the design-matched
$\hat\tau_{\mathrm{dsn}}$ of Proposition~\ref{thm:lamdesign} repairs all seven,
dominating or tying PPI++ in every row (shrubland ties, both at zero).}
\label{fig:fig4}
\end{figure}

\subsection{The taxonomy of labelling mechanisms of Section~\ref{sec:taxonomy}}\label{app:taxonomy}

\begin{table}[htbp]
\centering\footnotesize
\setlength{\tabcolsep}{4pt}
\renewcommand{\arraystretch}{1.18}
\begin{tabular}{@{}L{3.8cm}L{3.05cm}L{3.95cm}L{4.2cm}@{}}
\toprule
Pattern & Selection responds to & Estimator and variance & Status of the
Theorem~\ref{prop:asym} gap \\
\midrule

\textbf{1. Designed probability sample} \newline
\emph{LUCAS master grid; national forest inventories}
& A randomisation the analyst controls; $\pi_i$, $\pi_{ij}$ known
& H\'ajek or HT with the design's own variance --- Sen--Yates--Grundy where the
joint inclusion probabilities allow it, and for the one-per-block design of
Prop.~\ref{thm:balance}, where $\pi_{ij} = 0$ within a block and
Sen--Yates--Grundy is unavailable, the collapsed-pairs substitute
(\S\ref{sec:design})
& Absent: no propensity is estimated, nothing to misspecify \\

\textbf{2. Clustered or campaign labelling} \newline
\emph{field crews visiting a few localities}
& The same, but with $\pi_{ij} \neq \pi_i\pi_j$ at short range
& As above with a cluster-robust or block variance; the i.i.d.\ formula is
invalid (\S\ref{sec:design}, Figure~\ref{fig:fig1})
& Absent, but the design effect is large: treating the sample as i.i.d.\ costs
$95\% \to 58\%$ coverage \\

\textbf{3. Accessibility-driven} \newline
\emph{labels near roads, on gentle slopes}
& Covariates mapped wall-to-wall; $\pi_i$ unknown but estimable
& IPW/H\'ajek with an estimated propensity and the stacked sandwich; no spatial
adjustment needed \emph{under the independent Bernoulli mechanism of}
\S\ref{sec:ipw} (Prop.~\ref{thm:ipw}(ii)). A fixed-budget, sequential or
clustered acquisition makes the indicators dependent and needs the joint
inclusion probabilities of \S\ref{sec:design} instead
& \textbf{Live.} Zero if the propensity model is correctly specified;
otherwise an
$n$-independent gap in any DR estimator \\

\textbf{4. Map-adaptive labelling} \newline
\emph{active learning; oversampling where the classifier is uncertain}
& The map $f$ itself, or a functional of it, plus covariates
& As pattern 3, but $f$ \emph{must} enter the propensity model; the map is a
covariate, not an outcome
& \textbf{Live}, and hardest here: omitting $f$ is both a
propensity misspecification and one correlated with $\Delta$ \\

\textbf{5. Legacy or volunteered aggregation} \newline
\emph{pooled prior studies; citizen-science records}
& Observables plus an unknown component; positivity may fail over whole
subregions
& Restrict the estimand to the positivity region and report it as such;
IPW/DR there, with sensitivity analysis beyond
& Live, compounded by a bias no reweighting identifies. Outside our
guarantees \\

\textbf{6. Preferential} \newline
\emph{sampling soil where contamination is suspected}
& The latent field or $Y$ itself, beyond any observable
& Nothing in this paper. A joint model for $(Y, S)$ is required
\citep{diggle2010}
& Superseded: all observable-based methods collapse
(Figure~\ref{fig:fig2b}) \\

\bottomrule
\end{tabular}
\caption{Six patterns of label creation, ordered by how much of the selection
mechanism is observable. Patterns 1--2 are handled by design-based theory
without a model; patterns 3--4 need a propensity model and are the regime of
Theorem~\ref{prop:asym}; patterns 5--6 are not solved here. The classification
is of the \emph{mechanism}, not the data set: a single archive often mixes
rows, and the weakest applicable row governs. Rows 3--4 are stated for the
independent Bernoulli selection of Section~\ref{sec:ipw}; dependent
acquisition (a fixed budget, sequential adaptation, clustered fieldwork) puts
the variance back under Section~\ref{sec:design}.}
\label{tab:taxonomy}
\end{table}

\subsection{The permutation control of Section~\ref{sec:sims-real}}
\label{app:perm}
The control described in Section~\ref{sec:sims-real} permutes $u$ uniformly at
random over the $48{,}175$ cells and reruns the entire experiment, leaving the
marginal distribution of the residual, the map, the design, the outcome model
and the propensity misspecification unchanged and destroying only the spatial
arrangement of $u$. That gives the exchangeable no-arrangement reference ---
$P = I$ up to the factor $N/(N-1)$, since $v^{\top}\mathbf 1 = 0$ kills the
common component --- and so removes from $\Var_\xi(G_U)$ the
spatial inflation $\kappa = v^{\top}Pv/\|v\|^2 = 4.7$ --- the factor by which
spatial dependence in $u$ multiplies $\Var_\xi(G_U)$ relative to the same field
with its arrangement destroyed, and the population analogue of the
componentwise $\kappa_k$ of \eqref{eq:quadform}.

A single permutation will not do, and an earlier version of this paper reported
one. Coverage under scrambling is a function of the realised $G_U$ of that
particular permutation, and $G_U$ under free permutation is itself random with
$\mathrm{sd} = 0.0014$; a draw that happens to land near zero will show
near-perfect coverage for a reason that has nothing to do with the point being
made. We therefore repeat the entire experiment over $32$ independent free
permutations, with $250$ labelling replicates each, and report the distribution
(\texttt{valid/theorem1\_perm.py}). Averaged over permutations the systematic
erosion is removed: mean coverage is $94.7\%$ at $n = 100$ and $94.4\%$ at
$n = 8{,}000$, flat in $n$, against $93.8\%$ falling to $74.5\%$ for the real arrangement in the
same run --- itself within two Monte-Carlo standard errors of the $79.5\%$ of
the larger reference experiment above. What is \emph{not} restored is the
stability: the $5$--$95\%$ band across permutations widens from
$[92.4\%, 96.2\%]$ at $n = 100$ to $[84.2\%, 98.6\%]$ at $n = 8{,}000$, with
individual permutations running from $80.4\%$ to $99.2\%$ and four of the
$32$ below $90\%$. Both features are reproduced in closed form. Coverage depends on
the permutation only through three scalars --- the gap $G_U$, the design
standard deviation of the estimator, and the expectation of the plug-in
standard error the estimator is compared against --- so it can be evaluated
directly on a far larger permutation sample. Over $4{,}000$ permutations
(\texttt{valid/theorem1\_perm\_an.py}) the centre is $95.0\%$ at every $n$ and
the band widens from $[94.9\%, 95.0\%]$ at $n = 100$ to $[86.7\%, 97.8\%]$ at
$n = 8{,}000$, against the Monte-Carlo $[84.2\%, 98.6\%]$; the extra Monte-Carlo
width is the binomial noise of $250$ replicates per cell. The excursions
\emph{above} $95\%$ are not noise either: the plug-in variance estimator weights
by the fitted rather than the true propensity and is conservative by a factor
$1.17$ in standard-error terms at $n = 8{,}000$, so a permutation whose gap is
near zero over-covers. That spread is the residual gap of a single permutation acting
exactly as the theorem says it should, on a scale reduced by $\sqrt{\kappa}$ but
not to zero; it is why the shaded band, and not a single green curve, is what
Figure~\ref{fig:thm1}(a) now shows. The statement the control supports is
therefore narrower, and more exactly the theorem's, than ``scrambling restores
validity''. What scrambling removes is the \emph{spatial inflation} of
$\Var_\xi(G_U)$, and with it the systematic erosion of mean coverage in $n$;
what it leaves behind, in every permutation, is a smaller exchangeable
finite-population gap of the kind Theorem~\ref{prop:asym} describes whenever
the propensity is wrong. The erosion is therefore caused by the spatial
arrangement of the map's error and by nothing else --- which is also the reason the phenomenon is absent from
the i.i.d.\ PPI literature, where the arrangement is by assumption
exchangeable.

\subsection{Caveats to the land-cover analysis of Section~\ref{sec:emp-woodland}}
\label{app:lucas}
Four, in decreasing order of consequence.

\emph{The gold standard is two-tier.} Of the $2{,}665$ points, $1{,}697$ were
visited in the field and $968$ were classified in the office by photo-\allowbreak
interpretation of very-high-resolution imagery. The latter are part of the
LUCAS design and enter the official area statistics, but their labels share a
remote-sensing basis with the map, which can only inflate the map--label
agreement and hence flatter PPI. Our efficiency conclusion is negative, so this
bias runs against it and the conclusion is conservative. Restricting to the
$1{,}697$ in-situ points is not a clean alternative: whether a point is visited
is determined by accessibility rather than by the design, so the in-situ subset
is not a probability subsample and $n_h^{\mathrm{is}}/N_h$ is only a working
approximation to its inclusion probability. Under that approximation the
inclusion probabilities span $0.0098$ to $0.538$ ($55$-fold, Kish deff $1.794$)
and the woodland conclusion is the same in sign but far noisier: the inland
water stratum, which LUCAS photo-interprets rather than visits, retains $5$
points at $\pi_h = 0.0098$ and alone contributes $60$--$72\%$ of the total
variance. We therefore take the full 2018 sample as primary.

\emph{The weights are reconstructed, not published.} Eurostat does not
distribute point-level inclusion probabilities for LUCAS 2018, so ours are
rebuilt from the frame: we treat \texttt{STR18} as the sole first-phase
stratifier, take $N_h$ from the master grid and $n_h$ from the realised sample,
and set $\pi_h = n_h/N_h$. Three things this cannot see. If the actual design
also stratified or allocated by NUTS2 region, by a finer CORINE breakdown, or by
elevation, then our $\pi_h$ are averages over cells with different true
probabilities and the weights are correct only up to that aggregation. LUCAS is
in truth two-phase --- a first-phase sample of the master grid is classified by
photo-interpretation and a subsample visited --- and we model only the outcome
of the composite, which is why the two-tier discussion above is separate from
this one. And $\pi_h = n_h/N_h$ is realised rather than designed, so any
nonresponse or field substitution is absorbed into it silently. The comparisons
we draw are therefore conditional on the \texttt{STR18}-based reconstruction:
they are internally consistent, since every estimator is computed under one
common set of weights and the design-matched $\tau$ of
Proposition~\ref{thm:lamdesign} depends on the weights only through the $a_h$,
but finer official weights would change the $a_h$ and could change the numerical
magnitude of the reported gains --- most plausibly by redistributing the
variance now concentrated in the rare strata. Whether they would also change the
direction of any particular comparison we cannot say without them, and we make
no claim either way. A reader wanting the official
Estonian woodland area, rather than a comparison of PPI variants, should use the
published Eurostat figure and not ours.

\emph{Coverage of the frame.} One stratum (\texttt{STR18} $=9$, two master-grid
cells, $0.018\%$ of the frame) contains no 2018 point and is unrepresented;
stratum $10$ ($61$ cells, $0.54\%$) contains two, so its within-stratum variance
is estimable but poorly determined. Both are small enough that dropping them
changes no figure we report at three decimals. They are, however, a reminder
that the asymptotics of Proposition~\ref{thm:lamdesign} assume $\min_h n_h \to
\infty$ with $n_h \asymp n$, which a realised $n_h = 2$ plainly does not
satisfy: at that stratum $S^2_{f,h}$ has one degree of freedom and the
$O_p(n^{-1/2})$ rate for $\hat\tau_{\mathrm{dsn}}$ is a statement about the
strata that are large, not about this one. This is why the soil analysis of
Section~\ref{sec:emp-soc} collapses the rare strata before forming
$\hat\tau_{\mathrm{dsn}}$, and why we read its sign cautiously there. For
the land-cover panel we checked directly what that collapse would cost:
pooling \texttt{STR18} $\in \{2, 9, 10\}$ into one stratum ($N_h = 76$,
$n_h = 13$) and recomputing every design variance and every
$\hat\tau_{\mathrm{dsn}}$ moves the mean design-matched gain from
$10.26\%$ to $10.25\%$, changes no single estimand's gain by more than
$0.08$ percentage points, and leaves the sign pattern of the seven
$\hat\tau_{\mathrm{dsn}}$ untouched (\texttt{lucas/rare\_strata\_sens.py}).
The rare strata carry a large $a_h$ but a small $W_h^2$, and at this sample size
the second wins; the caveat is a statement about the theorem's hypotheses rather
than about these numbers.

\emph{Grid versus area.} $\bar f_U$ averages the map over the $N = 11{,}328$
master-grid cells, each carrying equal weight. The LUCAS master grid is defined
on the $\text{ETRS89-LAEA}$ equal-area projection, so equal weights are equal
areas by construction and the latitude-dependent pixel-area distortion of a
geographic raster does not enter. The corresponding areal average over all
Estonia-masked $10\,$m pixels is $0.5665$ rather than $0.5734$; the two are
different estimands and only the former is the quantity \eqref{eq:ppi}
requires, since $\bar\Delta_U$ is defined on the grid.

\subsection{Support and spatial stratification in Section~\ref{sec:emp-woodland}}
\label{app:lucas2}
\paragraph{Change of support.} LUCAS records land cover over an extended
window around each theoretical point, whereas the map is read at a single
pixel. Re-reading the map as the majority class in $3\times3$ ($30\,$m) and
$5\times5$ ($50\,$m) windows --- \emph{at both the sample points and the
$11{,}328$ grid points}, so that $\bar f_U$ and $\bar f_{S,w}$ stay on a common
support --- leaves the estimate essentially unchanged (0.5812, 0.5826, 0.5815
for point / $3\times3$ / $5\times5$; SE $0.0058$, $0.0058$, $0.0056$), so the
window width is immaterial here. What is \emph{not} immaterial is the choice
of support itself: reading $\bar f_U$ as an areal average over all
Estonia-masked pixels rather than at the grid points changes it from $0.5734$
to $0.5665$, which is larger than the entire correction $\bar f_U -
\bar f_{S,w}$ and would drive the latter to $1\times10^{-4}$ --- an artefact of
mixing two supports, not a property of the design.

\paragraph{Does spatial stratification help here?}
A natural question is whether adding spatial structure to the design-matched
variance (Proposition~\ref{thm:balance}) shortens the interval. Crossing the
nine LUCAS strata with a $4\times4$ grid of spatial cells gives $104$ sampled
crossed cells, of which $13$ hold a single point. A within-cell variance is not
estimable there, and dropping those cells would omit their contributions and
understate the design variance, so we collapse instead: each singleton joins a
per-stratum remainder cell, and a remainder that is itself a singleton (four of
them) is merged into the largest non-singleton cell of its stratum, leaving
$94$ cells each with at least two points. That gives $0.5832$ with
SE $0.0065$ --- \emph{worse}
than the $0.0058$ obtained from the design strata alone. The reason is what
the theory predicts: the map has already removed $58\%$ of the outcome variance
($\Var_w(Y) = 0.244 \to \Var_w(\Delta_{\hat\tau}) = 0.103$), and only
$0.85\%$ of the remaining rectifier variance lies between the $15$ nonempty
cells of the spatial grid --- against the $(15-1)/(n-1) = 0.53\%$ that pure
exchangeability over those same $15$ cells would produce. (The benchmark is
computed over the spatial grid, not over the $94$ crossed cells, so that it is
comparable with the between-cell share reported beside it.)
Once the predictor whitens the rectifier, Proposition~\ref{thm:balance}'s gain
--- which scales with the between-cell share of $\Var(\Delta)$ --- is
negligible, and the cost of estimating $94$ within-cell variances instead of
nine is not.

\subsection{The residual factor in Section~\ref{sec:sims-real-beta}}
\label{app:beta}
It is tempting to
attribute it to spatial coherence acting selectively on the level, and an
earlier version of this paper did so. The scrambling control refutes that
reading. Permuting $u$ over the population deflates $|G_U^{\psi}|$ for
\emph{every} component, and the spatial inflation factor
$\kappa_k = g_k^{\top}Pg_k/\|g_k\|^2$ of Appendix~\ref{app:cor}, estimated by
the ratio of block-permutation to free-permutation variance, is $4.70$,
$4.31$ and $4.68$ for the intercept, the cropland slope and the latitude slope
--- a spread of ten percent, not of a factor of four. The corresponding
effective sample sizes are $N_{\mathrm{eff},k} = 7{,}556$, $7{,}221$ and
$7{,}657$ out of $N = 48{,}175$. Centring a covariate does not protect it,
and Appendix~\ref{app:cor} explains why: $\sum_i v_i = 0$ holds identically
(here to $2\times 10^{-12}$), so there is no projection on the constant
direction for centring to remove, and the coherence of $g_k$ at the $6$\,km
scale is $0.97$, $0.86$, $0.97$ --- barely touched by the centring.

What the residual factor of $3.7$ measures instead is the \emph{realised} draw
of the residual field. Under the working field model the standardised gaps
$|G_{U,k}^{\psi}|/\mathrm{sd}_\xi(G_{U,k}^{\psi})$ are $1.12$, $0.27$ and
$0.24$: the level happened to draw a gap of about one $\xi$-standard deviation
and the two slopes a fraction of one. The systematic exposure that a reader
should carry to a new population is the ratio computed with
$\mathrm{sd}_\xi(G^\psi_{U,k})$ in place of the realised value,
$R_k(n) = \mathrm{sd}_\xi\bigl((J_U^{-1}G_U^\psi)_k\bigr)/\mathrm{se}_k(n)$,
and at $n = 8{,}000$ it is $1.11$, $1.26$, $1.23$ --- if anything slightly
worse for the slopes. The asymmetry documented below is a property of this
population's draw, not of the class of estimand, which is the corrected content
of Remark~\ref{rem:estimand}.

\subsection{Provenance and caveats for the soil analysis of Section~\ref{sec:emp-soc}}
\label{app:soc}
\paragraph{Labels.} The LUCAS~2018 soil module provides topsoil (0--20\,cm)
soil organic carbon stock, SOCS, at $15{,}389$ points across the EU
\citep{chen2024socs}; bulk density in the fine fraction is obtained from the
measured samples where available and from a random-forest pedotransfer function
otherwise, and SOCS follows from bulk density, coarse-fragment volume and
measured organic carbon content. Those $15{,}389$ points are a subset of the
roughly $20{,}000$ locations the module targeted, and the shortfall has a
specific cause: the local pedotransfer function requires particle-size
fractions, which LUCAS~2018 did not record and which \citet{chen2024socs}
recover by linking to the 2009 and 2015 campaigns, and it requires a
coarse-fragment volume. A point missing either input carries no stock. The
exclusion is thus a property of the ancillary inputs rather than of the carbon
measurement, and we did not impose it; but it is not ignorable by construction,
and a companion database with wider coverage ($18{,}945$ points) built from a
global rather than a local pedotransfer function would trade this missingness
for a coarser prediction.

We matched all $15{,}389$ point identifiers to the harmonised LUCAS~2018
in-situ records \citep{dandrimont2020lucas}, which recovers the field-observed
land-cover letter group for each; $159$ of the points fall in Estonia (woodland
$73$, cropland $49$, grassland $30$, shrubland $3$, artificial $2$, bare $2$,
wetland $0$). The match itself is not the binding constraint: a bounding box
around Estonia ($57.4^\circ$--$59.7^\circ$\,N, $21.7^\circ$--$28.3^\circ$\,E)
selects $185$ SOCS points, and every one of the $26$ that fail to match an
Estonian in-situ record lies south of the border in Latvia --- none is within
$2.8$\,km of any cell of the Estonian master grid, and the median distance is
$18$\,km --- so the merge loses no Estonian point. The reduction from the module's Estonian sample
to $159$ is inherited from the coverage of the stock database, at a rate in
line with its EU-wide coverage.

\paragraph{Caveats.} Three, stated in the order a referee will raise them.
First, the labelled soil sample contains no wetland point, so $\hat g(\text{H})
= 13.40$ rests on eight non-Estonian observations while wetland carries $4\%$
of Estonian land area and the largest $\hat g$ value; the composition
correction is genuinely sensitive to it, and a reader who distrusts that single number
should read the $-1.5$\,kg\,m$^{-2}$ spread across the sensitivity rows of
Table~\ref{tab:soc} as the honest uncertainty. Second, SOCS is itself partly a
model output: bulk density is predicted by a pedotransfer function at points
where it was not measured, so the gold standard is not free of prediction. The
subset with measured bulk density ($n = 65$) reproduces the result --- PPI
$9.02$ against a classical $7.66$, with the same sign and a similar magnitude
of correction --- which is reassuring but not decisive. Third, the labels are
point observations while the map is a $10$\,m pixel grid; as in
Section~\ref{sec:emp-woodland} the population unit is the master-grid cell and
the map is read at the grid point, so $\bar f_U$ and $\bar\Delta_U$ live on the
same $N$ units, and the residual point-versus-window mismatch --- shown there
to be immaterial across $1\times1$, $3\times3$ and $5\times5$ reads --- is
absorbed into the definition of $\theta$ rather than modelled.

\end{document}